\newcommand{\ifcondense}{\iftrue}
\newcommand{\set}[1]{\left\{#1\right\}}
\newcommand{\fpr}[1]{\mathopen{}\left(#1\right)}
\newcommand{\abs}[1]{{\left|#1\right|}}
\newcommand{\np}{\textbf{NP}}
\newcommand{\apx}{\textbf{APX}}
\newcommand{\naturals}{\mathbb{N}}
\newcommand{\integers}{\mathbb{Z}}
\newcommand{\funcdef}[3]{{#1}:{#2} \to {#3}}
\newcommand{\define}{\leftarrow}
\newcommand{\reals}{{\mathbb{R}}}
\DeclareRobustCommand{\dispfunc}[2]{%
  \ensuremath{%
  \ifthenelse{\equal{#2}{}}%
    {\mathit{#1}}%
    {\mathit{#1}\fpr{#2}}}}
\newcommand{\circulation}[1]{\dispfunc{circ}{#1}}
\newcommand{\cg}[1]{\dispfunc{H}{#1}}
\newcommand{\score}[1]{\dispfunc{q}{#1}}
\newcommand{\pen}[1]{\dispfunc{p}{#1}}
\newcommand{\penlin}[1]{\dispfunc{p_l}{#1}}
\newcommand{\pencons}[1]{\dispfunc{p_c}{#1}}
\newcommand{\pensum}[1]{\dispfunc{p_s}{#1}}
\newcommand{\bigO}[1]{\dispfunc{\mathcal{O}}{#1}}
\newcommand{\inback}[1]{\dispfunc{ib}{#1}}
\newcommand{\outback}[1]{\dispfunc{ob}{#1}}
\newcommand{\dbackone}[1]{\dispfunc{db_N}{#1}}
\newcommand{\dbacktwo}[1]{\dispfunc{db_P}{#1}}
\newcommand{\back}[1]{\dispfunc{b}{#1}}
\newcommand{\flux}[1]{\dispfunc{flux}{#1}}
\newcommand{\diff}[1]{\dispfunc{d}{#1}}
\newcommand{\adj}[1]{\dispfunc{adj}{#1}}
\newcommand{\opt}[1]{\dispfunc{opt}{#1}}
\newcommand{\lopt}[1]{\dispfunc{o}{#1}}
\newcommand{\gain}[1]{\dispfunc{gain}{#1}}
\newcommand{\circprb}{\textsc{Capacitated circulation}\xspace}
\newcommand{\uncircprb}{\textsc{Circulation}\xspace}
\newcommand{\fasprb}{\textsc{FAS}\xspace}
\newcommand{\mcutprb}{\textsc{Maximum Cut}\xspace}
\newcommand{\genagonyprb}{\textsc{Agony-with-shifts}\xspace}
\newcommand{\relief}{\textsc{Relief}\xspace}
\newcommand{\algrr}{\textsc{rr}\xspace}
\newcommand{\algcount}{\textsc{LeftSmaller}\xspace}
\newcommand{\algsplit}{\textsc{Split}\xspace}
\newcommand{\algleft}{\textsc{ConstructLeft}\xspace}
\newcommand{\algright}{\textsc{ConstructRight}\xspace}
\newcommand{\canon}{\textsc{canon}\xspace}
\newcommand{\dtname}[1]{\textsl{#1}}
\definecolor{yafaxiscolor}{rgb}{0.3, 0.3, 0.3}
\definecolor{yafcolor1}{rgb}{0.4, 0.165, 0.553}
\definecolor{yafcolor2}{rgb}{0.949, 0.482, 0.216}
\definecolor{yafcolor3}{rgb}{0.47, 0.549, 0.306}
\definecolor{yafcolor4}{rgb}{0.925, 0.165, 0.224}
\definecolor{yafcolor5}{rgb}{0.141, 0.345, 0.643}
\definecolor{yafcolor6}{rgb}{0.965, 0.933, 0.267}
\definecolor{yafcolor7}{rgb}{0.627, 0.118, 0.165}
\definecolor{yafcolor8}{rgb}{0.878, 0.475, 0.686}
\tikzstyle{exnode} = [inner sep = 1pt]
\tikzstyle{labnode} = [sloped, text = black, font = \scriptsize, inner sep = 1pt]
\tikzstyle{exedge} = [yafcolor5, draw, thick, >=latex, ->]
\tikzstyle{exedge2} = [yafcolor2, draw, thick, >=latex, ->]
\newlength{\yafaxispad}
\newlength{\yaftlpad}
\newlength{\yaflabelpad}
\newlength{\yafaxiswidth}
\newlength{\yafticklen}
\def\pgfplots@drawtickgridlines@INSTALLCLIP@onorientedsurf#1{}
\newcommand{\yafdrawxaxis}[2]{
	\pgfplotstransformcoordinatex{#1}\let\xmincoord=\pgfmathresult 
	\pgfplotstransformcoordinatex{#2}\let\xmaxcoord=\pgfmathresult 
	\pgfsetlinewidth{\yafaxiswidth} 
	\pgfsetcolor{yafaxiscolor}
	\pgfpathmoveto{\pgfpointadd{\pgfpointadd{\pgfplotspointrelaxisxy{0}{0}}{\pgfqpointxy{\xmincoord}{0}}}{\pgfqpoint{-0.5\yafaxiswidth}{\yafaxispad}}}
	\pgfpathlineto{\pgfpointadd{\pgfpointadd{\pgfplotspointrelaxisxy{0}{0}}{\pgfqpointxy{\xmaxcoord}{0}}}{\pgfqpoint{0.5\yafaxiswidth}{\yafaxispad}}}
	\pgfusepath{stroke}

}
\newcommand{\yafdrawyaxis}[2]{
	\pgfplotstransformcoordinatey{#1}\let\ymincoord=\pgfmathresult 
	\pgfplotstransformcoordinatey{#2}\let\ymaxcoord=\pgfmathresult 
	\pgfsetlinewidth{\yafaxiswidth} 
	\pgfsetcolor{yafaxiscolor}
	\pgfpathmoveto{\pgfpointadd{\pgfpointadd{\pgfplotspointrelaxisxy{0}{0}}{\pgfqpointxy{0}{\ymincoord}}}{\pgfqpoint{\yafaxispad}{-0.5\yafaxiswidth}}}
	\pgfpathlineto{\pgfpointadd{\pgfpointadd{\pgfplotspointrelaxisxy{0}{0}}{\pgfqpointxy{0}{\ymaxcoord}}}{\pgfqpoint{\yafaxispad}{0.5\yafaxiswidth}}}
	\pgfusepath{stroke}
}
\newcommand{\yafdrawaxis}[4]{\yafdrawxaxis{#1}{#2}\yafdrawyaxis{#3}{#4}}
\pgfplotsset{axis y line=left, axis x line=bottom,
	tick align=outside,
	compat = 1.3,
	tickwidth=\yafticklen,
	clip = false,
	every axis title shift = 0pt,
    x axis line style= {-, line width = 0pt, opacity = 0},
    y axis line style= {-, line width = 0pt, opacity = 0},
    x tick style= {line width = \yafaxiswidth, color=yafaxiscolor, yshift = \yafaxispad},
    y tick style= {line width = \yafaxiswidth, color=yafaxiscolor, xshift = \yafaxispad},
    x tick label style = {font=\scriptsize, yshift = \yaftlpad},
    y tick label style = {font=\scriptsize, xshift = \yaftlpad},
    every axis y label/.style = {at = {(ticklabel cs:0.5)}, rotate=90, anchor=center, font=\scriptsize, yshift = -\yaflabelpad},
    every axis x label/.style = {at = {(ticklabel cs:0.5)}, anchor=center, font=\scriptsize, yshift = \yaflabelpad},
    x tick label style = {font=\scriptsize, yshift = 1pt},
    grid = major,
    major grid style  = {dash pattern = on 1pt off 3 pt},
	every axis plot post/.append style= {line width=\yafaxiswidth} ,
	legend cell align = left,
	legend style = {inner sep = 1pt, cells = {font=\scriptsize}},
	legend image code/.code={%
		\draw[mark repeat=2,mark phase=2,#1] 
		plot coordinates { (0cm,0cm) (0.15cm,0cm) (0.3cm,0cm) };%
	} 
}
\begin{document}

\title{Tiers for peers\thanks{The research described in this paper builds upon and extends the work appearing in \nobreak{ICDM15} as \cite{tatti:15:agony}.}}
\subtitle{a practical algorithm for discovering hierarchy in weighted networks}

\author{Nikolaj Tatti}
\institute{Nikolaj Tatti
\at Helsinki Institute for Information Technology (HIIT) and\\ Department of Information and Computer Science, Aalto University, Finland \\
\email{nikolaj.tatti@aalto.fi}
}

\maketitle


\begin{abstract} 
Interactions in many real-world phenomena can be explained by a strong
hierarchical structure. Typically, this structure or ranking is not known;
instead we only have observed outcomes of the interactions, and the goal is to
infer the hierarchy from these observations.
Discovering a hierarchy in the context of directed networks can be formulated as
follows: given a graph, partition vertices into levels such that, ideally,
there are only edges from upper levels to lower levels. The ideal case can only
happen if the graph is acyclic. Consequently, in practice we have to introduce
a penalty function that penalizes edges violating the hierarchy. A practical
variant for such penalty is agony, where each violating edge is penalized based
on the severity of the violation. Hierarchy minimizing agony can be discovered
in $\bigO{m^2}$ time, and much faster in practice. In this paper we introduce several
extensions to agony. We extend the definition for weighted graphs and allow a
cardinality constraint that limits the number of levels. While, these are
conceptually trivial extensions, current algorithms cannot handle them, nor
they can be easily extended. We solve the problem by showing the connection
to the capacitated circulation problem, and we demonstrate that we can compute
the exact solution fast in practice for large datasets. We also introduce a provably fast heuristic
algorithm that produces rankings with competitive scores.
In addition, we show that we can compute agony in polynomial
time for any convex penalty, and, to complete the picture, we show that minimizing
hierarchy with any concave penalty is an \np-hard problem.
\end{abstract}



\keywords{Hierarchy discovery; agony; capacitated circulation; weighed graphs}

\section{Introduction}\label{sec:intro}

Interactions in many real-world phenomena can be explained by a strong
hierarchical structure. As an example, it is more likely that a line manager in
a large, conservative company will write emails to her employees than the other
way around. Typically, this structure or ranking is not known; instead we only
have observed outcomes of the interactions, and the goal is to infer the
hierarchy from these observations. Discovering hierarchies or ranking has
applications in various domains:
(\emph{i})
ranking individual players or teams based on how well they play against
each other~\citep{elo1978rating},
(\emph{ii})
discovering dominant animals within a single herd, or ranking
species based on who-eats-who networks~\citep{jameson:99:behaviour},
(\emph{iii})
inferring hierarchy in work-places, such as, U.S. administration~\citep{DBLP:conf/cse/MaiyaB09},
(\emph{iv})
summarizing browsing behaviour~\citep{DBLP:conf/icdm/MacchiaBGC13},
(\emph{v})
discovering hierarchy in social networks~\citep{gupte:11:agony}, for example, if
we were to rank twitter users, the top-tier users would be the
content-providers, middle-tiers would spread the content, while the bottom-tier
are the consumers.

We consider the following problem of discovering hierarchy in the context of
directed networks: given a directed graph, partition vertices into
ranked groups such that there are only edges from upper groups to
lower groups. 

Unfortunately, such a partitioning is only possible when the input graph has no
cycles. Consequently, a more useful problem definition is to define a penalty
function $\pen{}$ on the edges. This function should penalize edges that are
violating a hierarchy. Given a penalty function, we are then asked to find the
hierarchy that minimizes the total penalty.

The feasibility of the optimization problem depends drastically on the choice
of the penalty function. If we attach a constant penalty to any edge that violates the hierarchy,
that is, the target vertex is ranked higher or equal than the
source vertex, then this problem corresponds to a feedback arc set problem, a
well-known \np-hard problem~\cite{dinur:05:cover}, even without a known constant-time approximation
algorithm~\cite{even:98:feedback}.

A more practical variant is to penalize the violating edges by the severity of
their violation. That is, given an edge $(u, v)$ we compare the ranks of the
vertices $r(u)$ and $r(v)$ and assign a penalty of $\max (r(u) - r(v) + 1, 0)$.
Here, the edges that respect the hierarchy receive a penalty of $0$, edges that
are in the same group receive a penalty of $1$, and penalty increases linearly
as the violation becomes more severe, see Figure~\ref{fig:toypen}. This particular score is referred as
\emph{agony}. Minimizing agony was introduced by~\citet{gupte:11:agony} where
the authors provide an exact $\bigO{nm^2}$ algorithm, where $n$ is the number of
vertices and $m$ is the number of edges. A faster discovery algorithm with the
computational complexity of $\bigO{m^2}$ was introduced by~\citet{tatti:14:agony}.
In practice, the bound $\bigO{m^2}$ is very pessimistic and we can compute agony
for large graphs in reasonable time.

In this paper we specifically focus on agony, and provide the following main
extensions for discovering hierarchies in graphs.

\textbf{weighted graphs:} We extend the notion of the agony to graphs with
weighted edges. Despite being a conceptually trivial extension, current
algorithms~\citep{gupte:11:agony,tatti:14:agony} for computing agony are
specifically design to work with unit weights, and cannot be used directly or
extended trivially. Consequently, we need a new approach to minimize the agony,
and in order to do so, we demonstrate that we can transform the problem into a
capacitated circulation, a classic graph task known to have a polynomial-time
algorithm.

\textbf{cardinality constraint:}
The original definition of agony does not restrict the number of groups in the
resulting partition. Here, we introduce a cardinality constraint $k$ and we are
asking to find the optimal hierarchy with at most $k$ groups. This constraint
works both with weighted and non-weighted graphs. Current algorithms for
solving agony cannot handle cardinality constraints. Luckily, we can
enforce the constraint when we transform the problem into a capacitated
circulation problem.

\textbf{fast heuristic:}
We introduce a fast divide-and-conquer heuristic. This heuristic is provably
fast, see Table~\ref{tab:algortime}, and---in our experiments---produces competitive scores when
compared to the optimal agony.

\textbf{convex edge penalties:} Minimizing agony uses linear penalty for edges.
We show that if we replace the linear penalty with a convex
penalty, see Figure~\ref{fig:toypen}, we can still solve the problem in polynomial time by the capacitated circulation solver.
However, this extension increases the computational complexity.

\textbf{concave edge penalties:} To complete the picture, we also study concave
edge penalties, see Figure~\ref{fig:toypen}. We show that in this case discovering the optimal hierarchy is
an \np-hard problem. This provides a stark difference between concave and
convex edge penalties.

\textbf{canonical solution:} A hierarchy minimizing agony may not be unique.
For example, given a DAG any topological sorting of vertices will give you an
optimal agony of 0. To address this issue we propose to compute a
\emph{canonical} solution, where, roughly speaking, the vertices are ranked as
high as possible without compromising the optimality of the solution.  We
demonstrate that this solution is unique, it creates a hierarchy with the least
amount of groups, and that we can compute it in $\bigO{n \log n + m}$ time, if we are provided
with the optimal solution and the flow resulted from solving the capacitated circulation.

\begin{figure}[ht!]
\begin{center}
\begin{tikzpicture}
\begin{axis}[cycle list name=yaf,
	xmin = -3, ymin = 0,
	width = 8cm, height = 4cm,
	ylabel = {edge penalty},
	xlabel = {rank difference, $r(u) - r(v)$},
	xtick = {-3,...,5},
	no markers]
\addplot+[yafcolor1] coordinates {(-3, 0) (0, 0) (0, 1)};
\addplot+[domain=0:5, yafcolor1] {1};
\addplot+[domain=0:5, yafcolor2] {x + 1};
\addplot+[domain=0:5, yafcolor3] {sqrt(x)*1.5 + 1};
\addplot+[domain=0:5, yafcolor4] {x*x/4 + 1};

\node[coordinate, label={[font = \scriptsize, inner sep = 1pt]right:linear = agony (polynomial)}] at (axis cs:5,6)  {};
\node[coordinate, label={[font = \scriptsize, inner sep = 1pt]right:concave (\np-hard)}] at (axis cs:5,4.35)  {};
\node[coordinate, label={[font = \scriptsize, inner sep = 1pt]right:convex (polynomial)}] at (axis cs:5,7.25)  {};
\node[coordinate, label={[font = \scriptsize, inner sep = 1pt]right:constant = FAS (\np-hard)}] at (axis cs:5,1)  {};

\pgfplotsextra{\yafdrawaxis{-3}{5}{0}{7.25}}
\end{axis}
\end{tikzpicture}
\end{center}

\caption{A toy example of edge penalties as a function of the rank difference between the vertices.}
\label{fig:toypen}
\end{figure}
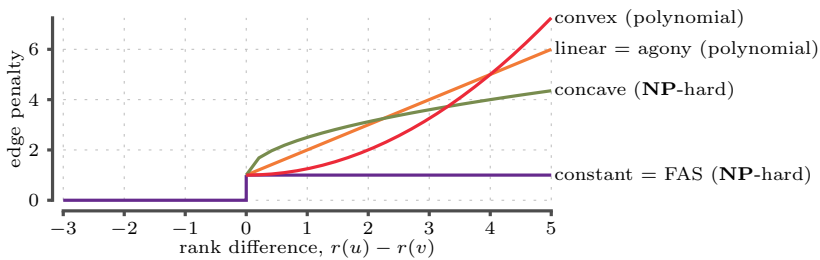

\begin{table}[ht!]
\caption{Summary of running times of different algorithms for computing agony:
$n$ is the number of vertices, $m$ is the number of edges, $k$ is the number of
allowed ranks.}
\label{tab:algortime}
\begin{tabular*}{\textwidth}{@{\extracolsep{\fill}}lrrr}
\toprule
Algorithm & variant & input type & running time \\
\midrule
Exact & plain &  & $\bigO{m \log n(m + n \log n)}$\\
Exact & speed-up &  unweighted  & $\bigO{m(\min(kn, m) + n \log n)}$\\
Exact & speed-up &  weighted  & $\bigO{m \log n(m + n \log n)}$ \\
Canonical & -- & optimal rank and the flow & $\bigO{m + n\log n}$ \\[1mm]
Heuristic & plain & no cardinality constraint & $\bigO{m \log n}$\\
Heuristic & plain & cardinality constraint & $\bigO{m \log n + k^2 n}$\\
Heuristic & SCC & no cardinality constraint & $\bigO{m \log n}$\\
Heuristic & SCC & cardinality constraint & $\bigO{m \log n + k^2n + km \log n}$\\
\bottomrule
\end{tabular*}
\end{table}

This paper is an extension of a conference paper~\citep{tatti:15:agony}. In
this extension we significantly speed-up the exact algorithm, propose a
provably fast heuristic, and provide a technique for selecting unique canonical
solutions among the optimal rankings.

The rest of the paper is organized as follows. We introduce the notation and
formally state the optimization problem in Section~\ref{sec:prel}. In
Section~\ref{sec:algo} we transform the optimization problem into a capacitated
circulation problem, allowing us a polynomial-time algorithm, and provide a speed-up in Section~\ref{sec:speedup}. In
Section~\ref{sec:score} we discuss alternative edge penalties. We demonstrate
how to extract a canonical optimal solution in Section~\ref{sec:canon}.
We discuss the related work in Section~\ref{sec:related} and present
experimental evaluation in Section~\ref{sec:exp}. Finally, we conclude the paper with
remarks in Section~\ref{sec:conclusions}.

\section{Preliminaries and problem definition}\label{sec:prel}


We begin with establishing preliminary notation and then defining the main
problem.

The main input to our problem is a \emph{weighted directed graph} which we will
denote by $G = (V, E, w)$, where $w$ is a function mapping an edge to a real
positive number. If $w$ is not provided, we assume that each edge has a weight
of 1. We will often denote $n = \abs{V}$ and $m = \abs{E}$.

As mentioned in the introduction, our goal is to partition vertices $V$.  We
express this partition with a \emph{rank assignment} $r$, a function
mapping a vertex to an integer. To obtain the groups from the rank assignment
we simply group the vertices having the same rank.

Given a graph $G = (V, E)$ and a rank assignment $r$, we will say that an edge
$(u, v)$ is \emph{forward} if $r(u) < r(v)$, otherwise edge is
\emph{backward}, even if $r(u) = r(v)$.
Ideally, rank assignment $r$ should not have backward edges, that is, for any
$(u, v) \in E$ we should have $r(u) < r(v)$. However, this is only possible
when $G$ is a DAG. For a more general case, we assume that we are given a
penalty function $\pen{}$, mapping an integer to a real number. The penalty
for a single edge $(u, v)$ is then equal to $\pen{d}$, where $d = r(u) - r(v)$. 
If $\pen{d} = 0$, whenever $d < 0$, then the forward edges will receive $0$ penalty.

We highlight two penalty functions. The first one assigns a constant penalty
to each backward edge,
\[
	\pencons{d} =
	\begin{cases}
	1 & \text{ if } d \geq 0 \\
	0 & \text{ otherwise } \quad.
	\end{cases}
\]
The second penalty function assigns a linear penalty to each backward edge,
\[
	\penlin{d} = \max(0, d + 1)\quad.
\]
For example,
an edge $(u, v)$ with $r(u) = r(v)$ is penalized by $\penlin{r(u) - r(v)} = 1$, the penalty is equal to $2$ if $r(u) = r(v) + 1$, and so on.

Given a penalty function and a rank assignment we can now define the 
the score for the ranking to be the sum of the weighted penalties.
\begin{definition}
Assume a weighted directed graph $G = (V, E, w)$ and a rank assignment $r$.
Assume also a cost function $\pen{}$ mapping an integer to a real number.
We define a score for a rank assignment to be
\[
	\score{G, r, \pen{}} = \sum_{e = (u, v) \in E} w(e) \pen{r(u) - r(v)}\quad.
\]
\end{definition}
We will refer the score $\score{G, r, \penlin{}}$ as \emph{agony}.

\begin{example}

Consider the left ranking $r_1$ of a graph $G$ given in Figure~\ref{fig:ex}.
This ranking has 5 backward edges, consequently, the penalty is $\score{G, r_1,
\pencons{}} = 5$. On the other hand, there are 2 edges, $(i, a)$ and $(e, g)$, having the agony of 1.
Moreover, 2 edges has agony of 2 and $(d, b)$ has agony of 3. Hence, agony is
equal to
\[
	\score{G, r_1, \penlin{}} = 2\times 1 + 2 \times 2 + 1 \times 3  = 10\quad.
\]
The agony for the right ranking $r_2$ 
is
$\score{G, r_2, \penlin{}} = 7$. Consequently, $r_2$ yields a better ranking in terms of 
agony.

\begin{figure}[ht!]

\newlength{\levelsep}
\newlength{\nodesep}
\setlength{\levelsep}{-0.6cm}
\setlength{\nodesep}{0.7cm}

\tikzstyle{node} = [fill = white, circle, inner sep = 1pt]
\tikzstyle{label} = [inner sep = 0pt]

\tikzstyle{dagedge} = [yafcolor5, -latex, thick]
\tikzstyle{backedge} = [yafcolor4, -latex, thick, densely dotted]
\tikzstyle{leveledge} = [yafaxiscolor!50, dashed]

\hfill
\begin{tikzpicture}

\draw[leveledge] (-0.5\nodesep, 0) -- (4.5\nodesep, 0);
\draw[leveledge] (-0.5\nodesep, \levelsep) -- (4.5\nodesep, \levelsep);
\draw[leveledge] (-0.5\nodesep, 2\levelsep) -- (4.5\nodesep, 2\levelsep);
\draw[leveledge] (-0.5\nodesep, 3\levelsep) -- (4.5\nodesep, 3\levelsep);

\node[node] (u0) at (\nodesep, 0) {$a$};
\node[node] (u1) at (0, \levelsep) {$b$};
\node[node] (u2) at (\nodesep, 2\levelsep) {$c$};
\node[node] (u3) at (0, 3\levelsep) {$d$};

\node[node] (u4) at (2\nodesep, 1\levelsep) {$e$};
\node[node] (u5) at (3\nodesep, 0\levelsep) {$f$};
\node[node] (u6) at (4\nodesep, 1\levelsep) {$g$};
\node[node] (u7) at (3\nodesep, 2\levelsep) {$h$};
\node[node] (u9) at (0\nodesep, 0\levelsep) {$i$};

\draw[dagedge] (u0) edge (u2);
\draw[dagedge] (u2) edge (u3);
\draw[backedge] (u3) edge (u1);
\draw[backedge] (u1) edge (u0);
\draw[backedge] (u9) edge (u0);

\draw[dagedge] (u5) edge (u4);
\draw[backedge] (u4) edge (u6);
\draw[backedge] (u6) edge (u5);
\draw[dagedge] (u6) edge (u7);
\draw[dagedge] (u4) edge (u7);

\draw[dagedge] (u0) edge (u4);
\end{tikzpicture}\hfill
\begin{tikzpicture}

\draw[leveledge] (-0.5\nodesep, 0) -- (4.5\nodesep, 0);
\draw[leveledge] (-0.5\nodesep, \levelsep) -- (4.5\nodesep, \levelsep);
\draw[leveledge] (-0.5\nodesep, 2\levelsep) -- (4.5\nodesep, 2\levelsep);
\draw[leveledge] (-0.5\nodesep, 3\levelsep) -- (4.5\nodesep, 3\levelsep);

\node[node] (u0) at (\nodesep, \levelsep) {$a$};
\node[node] (u1) at (0, \levelsep) {$b$};
\node[node] (u2) at (\nodesep, 2\levelsep) {$c$};
\node[node] (u3) at (0, 3\levelsep) {$d$};
\node[node] (u9) at (0\nodesep, 0\levelsep) {$i$};

\node[node] (u4) at (2\nodesep, 2\levelsep) {$e$};
\node[node] (u5) at (2.5\nodesep, 1\levelsep) {$f$};
\node[node] (u6) at (4\nodesep, 1\levelsep) {$g$};
\node[node] (u7) at (3\nodesep, 3\levelsep) {$h$};

\draw[dagedge] (u0) edge (u2);
\draw[dagedge] (u2) edge (u3);
\draw[backedge] (u3) edge (u1);
\draw[backedge] (u1) edge (u0);
\draw[dagedge] (u9) edge (u0);

\draw[dagedge] (u5) edge (u4);
\draw[backedge] (u4) edge (u6);
\draw[backedge] (u6) edge (u5);
\draw[dagedge] (u6) edge (u7);
\draw[dagedge] (u4) edge (u7);

\draw[dagedge] (u0) edge (u4);
\end{tikzpicture}\hspace*{\fill}

\caption{Toy graphs. Backward edges are represented by dotted lines, while the
forward edges are represented by solid lines.  Ranks are represented by dashed
grey horizontal lines.}
\label{fig:ex}
\end{figure}
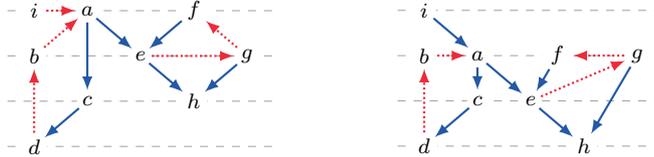
\end{example}

We can now state our main optimization problem.

\begin{problem}
\label{prb:opt}
Given a graph $G = (V, E, w)$, a cost function $\pen{}$, and an integer $k$,
find a rank assignment $r$ minimizing $\score{r, G}$
such that $0 \leq r(v) \leq k - 1$ for every $v \in V$.
We will denote the optimal score by $\score{G, k, \pen{}}$.
\end{problem}

We should point out that we have an additional constraint by demanding that the
rank assignment may have only $k$ distinct values, that is, we want to find at
most $k$ groups. Note that if we assume that the penalty function is non-decreasing
and does not penalize the forward edges, then setting $k = \abs{V}$ is equivalent
of ignoring the constraint.  This is the case since there are at most $\abs{V}$ groups
and we can safely assume that these groups obtain consecutive ranks. However,
an optimal solution may have less than $k$ groups, for example, if $G$ has no edges and we use $\penlin{}$ (or $\pencons{}$), then
a rank assigning each vertex to $0$ yields the optimal score of $0$.
We should also point out that if using $\pencons{}$, there is always an optimal solution
where each vertex has its own rank. This is not the case for agony.

It is easy to see that minimizing $\score{G, \pencons{}}$ is equivalent to
finding a directed acyclic subgraph with as many edges as possible. This is
known as \textsc{Feedback Arc Set} (\fasprb) problem, which is
\np-complete~\cite{dinur:05:cover}.

On the other hand, if we assume that $G$ has unit weights, and
set $k = \abs{V}$, then minimizing agony has a polynomial-time $\bigO{m^2}$ algorithm~\cite{gupte:11:agony,tatti:14:agony}.

\section{Computing agony}\label{sec:algo}
In this section we present a technique for minimizing agony, that is, solving
Problem~\ref{prb:opt} using $\penlin{}$ as a penalty.  In order to do this we show
that this problem is in fact a dual problem of the known graph problem, closely
related to the minimum cost max-flow problem.

\subsection{Agony with shifts}
\label{sec:shift}

We begin with an extension to our optimization problem.

\begin{problem}[\genagonyprb]
Given a graph $G = (V, E, w, s)$, where $w$ maps an edge to a, possibly infinite,
non-negative value, and $s$ maps an edge to a possibly negative integer,
find a rank assignment $r$ minimizing
\[
	\sum_{e = (u, v) \in E} w(e) \times \max(r(u) - r(v) + s(e), 0)\quad.
\]
We denote the optimal sum with $\score{G}$.
\end{problem}

In order to transform the problem of minimizing agony to \genagonyprb,
assume a graph $G = (V, E, w)$ and an integer $k$. 
We define a graph $H = (W, F, w, s)$ as follows.
The vertex set $W$ consists of 2 groups:
\emph{(i)} $\abs{V}$ vertices, each vertex corresponding to a vertex in $G$
\emph{(ii)} $2$ additional vertices $\alpha$ and $\omega$.
For each edge $e = (u, v) \in E$, we add an edge $f = (u, v)$ to $F$.
We set $w(f) = w(e)$ and $s(f) = 1$.
We add edges $(v, \omega)$ and $(\alpha, v)$ for every $v \in V$ with
$s(v, \omega) = s(\alpha, v) = 0$ and
$w(v, \omega) = w(\alpha, v) = \infty$.
Finally we add $(\omega, \alpha)$ with $s(\omega, \alpha) = 1 - k$ and $w(\omega, \alpha) = \infty$.
We will denote this graph by $\cg{G, k} = H$.

\begin{example}
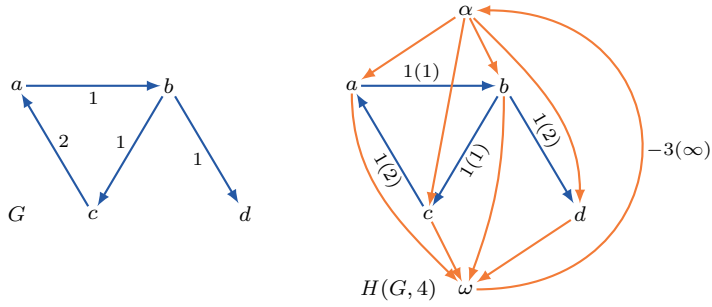
\begin{figure}[ht!]
\hspace*{\fill}
\begin{tikzpicture}[baseline=0]
\node at( 0, -1.7) {$G$};
\node[exnode] at (0, 0) (a) {$a$};
\node[exnode] at (2, 0) (b) {$b$};
\node[exnode] at (1, -1.7) (c) {$c$};
\node[exnode] at (3, -1.7) (d) {$d$};

\draw (a) edge[exedge] node[auto = right, black, font = \scriptsize, circle, inner sep = 1pt] {1} (b);
\draw (b) edge[exedge] node[auto = right, black, font = \scriptsize, circle, inner sep = 1pt] {1} (c);
\draw (c) edge[exedge] node[auto = right, black, font = \scriptsize, circle, inner sep = 1pt] {2} (a);
\draw (b) edge[exedge] node[auto = right, black, font = \scriptsize, circle, inner sep = 1pt] {1} (d);

\end{tikzpicture}\hfill
\begin{tikzpicture}[baseline=0]
\node[anchor = west] at(0, -2.7) {$\cg{G, 4}$};
\node[exnode] at (0, 0) (a) {$a$};
\node[exnode] at (2, 0) (b) {$b$};
\node[exnode] at (1, -1.7) (c) {$c$};
\node[exnode] at (3, -1.7) (d) {$d$};

\node[exnode] at (1.5, 1) (s) {$\alpha$};
\node[exnode] at (1.5, -2.7) (t) {$\omega$};

\draw (a) edge[exedge] node[labnode, auto = left, pos = 0.45] {$1(1)$} (b);

\draw (b) edge[exedge] node[labnode, auto= left, pos = 0.7] {$1(1)$} (c);

\draw (c) edge[exedge] node[labnode, auto = left, pos = 0.2] {$1(2)$} (a);

\draw (b) edge[exedge] node[labnode, auto = left, pos = 0.2] {$1(2)$} (d);

\draw (a) edge[exedge2, out = -90] (t);
\draw (b) edge[exedge2, out = -90, in = 70] (t);
\draw (c) edge[exedge2] (t);
\draw (d) edge[exedge2] (t);

\draw (s) edge[exedge2] (a);
\draw (s) edge[exedge2] (b);
\draw (s) edge[exedge2] (c);
\draw (s) edge[exedge2, out = -45, in = 90] (d);

\draw (t) edge[exedge2, out = 0, in = 0, looseness = 2] node[auto = right, black, font = \scriptsize, circle, inner sep = 1pt] {$-3(\infty)$} (s);

\end{tikzpicture}\hspace*{\fill}
\caption{Toy graph $G$ and the related circulation graph $\cg{G, 4}$. Edge costs and shifts for $(\alpha, v)$
and $(v, \omega)$ are omitted to avoid clutter.}
\label{fig:toycirculation}
\end{figure}
Consider $G = (V, E)$, a graph with $4$ vertices and $4$ edges, given in Figure~\ref{fig:toycirculation}. 
Set cardinality constraint $k = 4$.  In order to construct $\cg{G, k}$ we add 
two additional vertices $\alpha$ and $\omega$ to enforce the cardinality constraint $k$. We
set edge costs to $-1$ and edges capacities to be the weights of the input graph.
We connect $\alpha$ and $\omega$ with $a$, $b$, $c$, and $d$,
and finally we connect $\omega$ to $\alpha$. The resulting graph is given in Figure~\ref{fig:toycirculation}.
\end{example}

\subsection{Agony is a dual problem of Circulation}

Minimizing agony is closely related to a circulation problem, where the goal is
to find a circulation with a minimal cost satisfying certain balance equations.

\begin{problem}[\circprb]
\label{prb:circ}
Given a graph $G = (V, E, c, s)$, where $c$ maps an edge to a, possibly infinite,
non-negative value, and $s$ maps an edge to a possibly negative integer,
find a flow $f$ such
that $0 \leq f(e) \leq c(e)$ for every $e \in E$ and
\[
	\sum_{e = (v, u) \in E} f(e) = \sum_{e = (u, v) \in E} f(e), \quad\text{for every } v \in V
\]
maximizing
\[
	\sum_{e \in E} s(e)f(e)\quad.
\]
We denote the above sum as $\circulation{G}$.
\end{problem}

This problem is known as capacitated circulation problem, and can be solved in
$\bigO{m \log n(m + n \log n)}$ time with an algorithm presented
by~\citet{orlin:93:flow}.  We should stress that we allow $s$ to be negative.
We also allow capacities for certain edges to be infinite,
which simply means that $f(e) \leq c(e)$ is not enforced, if $c(e) = \infty$.

The following proposition shows the connection between the agony and the
capacitated circulation problem. 

\begin{proposition}
\label{prop:lp}
Assume a weighted directed graph with shifts, $G = (V, E, w, s)$.
Then $\score{G} = \circulation{G}$.
\end{proposition}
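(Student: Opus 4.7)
The plan is to recognise \genagonyprb{} as the LP dual of \circprb{} and invoke strong duality. The first step is to write the agony objective as a linear program: for each edge $e = (u, v) \in E$ introduce a slack $x_e \geq 0$ subject to $x_e + r(v) - r(u) \geq s(e)$. Since $w(e) \geq 0$, the slack tightens at optimum to $x_e = \max(r(u) - r(v) + s(e), 0)$, so, assuming an integer-valued optimal $r$ can be taken (which I justify below), the LP
\begin{equation*}
\min \sum_{e \in E} w(e) x_e \quad \text{subject to} \quad x_e + r(v) - r(u) \geq s(e),\; x_e \geq 0,\; r(v) \in \real
\end{equation*}
has value $\score{G}$.

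Next I would take the LP dual, attaching a variable $f_e \geq 0$ to each slack inequality. The bound $x_e \geq 0$ yields the constraint $f_e \leq w(e)$, which is the capacity bound (vacuous when $w(e) = \infty$). The free primal variable $r(v)$ produces the equality
\begin{equation*}
\sum_{e = (u, v) \in E} f_e \;=\; \sum_{e = (v, u) \in E} f_e,
\end{equation*}
which is exactly flow conservation at $v$. The dual objective $\sum_e s(e) f_e$ is the circulation value from Problem~\ref{prb:circ}, so the dual LP coincides verbatim with \circprb{}, with optimum $\circulation{G}$. Strong LP duality then gives $\score{G} = \circulation{G}$.

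Two technicalities need care, and I expect the harder one to be integrality of the primal optimum. For integrality, the constraint matrix is totally unimodular: its $r$-columns form the signed node-edge incidence matrix of $G$ and its $x$-columns form an identity block, and adjoining an identity preserves total unimodularity. Because every $s(e)$ is integer, a basic optimal solution is integer-valued, which justifies the original requirement $r(v) \in \integers$ inherent in the definition of $\score{G}$. For strong duality in the presence of infinite capacities, I would observe that the primal is feasible (take $r \equiv 0$ and $x_e = \max(s(e), 0)$) with objective bounded below by $0$, while the dual is feasible with $f \equiv 0$; both optima are therefore finite and strong duality applies without pathology from the $w(e) = \infty$ edges.
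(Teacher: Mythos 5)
Your proof is correct and takes essentially the same route as the paper: express one problem as a linear program, identify the other as its LP dual, invoke strong duality, and use total unimodularity of the arc--vertex incidence matrix (with the identity block for the slack/penalty variables) to obtain an integer-valued optimal ranking. The only cosmetic difference is the direction of dualization --- you start from the agony LP and dualize to circulation, while the paper starts from the circulation LP and dualizes to agony --- and your explicit feasibility/boundedness check for strong duality is a small addition the paper leaves implicit.
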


\ifcondense
\begin{proof}
Let $G = (V, E, w, s)$. 
To prove this result we will show that computing $\circulation{G}$
is a linear program, whose dual corresponds to optimizing \genagonyprb.
In order to do this, we first express a general \circprb problem as a linear program,
\begin{align*}
	\text{maximize } &\sum_{(u, v) \in E} s(u, v)f(u, v) & \text{such that}&\\
	\sum_{(v, u) \in V} f(v, u) & =  \sum_{(u, v) \in V} f(u, v) , & \text{ for every } v \in V&,\\
	w(u, v) & \geq f(u, v) \geq 0, & \text{ for every } (u, v) \in E&\quad.\\
\end{align*}

This program has the following dual program,
\begin{align}
	\text{minimize } & \sum_{(u, v) \in E} \eta(u, v)w(u, v)\hspace{-1cm}  \nonumber\\
	\text{such that } & \text{for every } (u, v) \in E\hspace{-1cm}\nonumber\\
	\pi(v) - \pi(u) + \eta(u, v) & \geq s(u, v), & \text{ if } w(u, v) < \infty,\nonumber\\
	\pi(v) - \pi(u) & \geq s(u, v), & \text{ if } w(u, v)  = \infty,\nonumber\\
	\eta(u, v) & \geq 0,& \label{eq:dual}
\end{align}
which is optimized over the variables $\pi$ and $\eta$.

If $\pi$ are integers, then they correspond to the ranking $r$.
Moreover, $\eta(u, v) = \max(\pi(u) - \pi(v) + s(u, v), 0)$.
So that, $w(u, v)\eta(u, v)$ corresponds to the penalty term in the sum of \genagonyprb,
and the objective function of the dual program corresponds exactly to the objective
of \genagonyprb.

To complete the proof we need to show that there is an optimal integer-valued dual solution $\pi$ and $\eta$.
This result follows from the fact that the
constraints of the dual form an arc-vertex incidence matrix, which is
known to be totally unimodular~\cite[Corollary of Theorem~13.3]{papadimitriou:82:opt}, Since $s(u, v)$ are integers, Theorem~13.2~in~\cite{papadimitriou:82:opt}
implies that there is an optimal solution with integer-valued $\pi$, completing the proof.\qed
\end{proof}
\fi

\subsection{Algorithm for minimizing agony}

Proposition~\ref{prop:lp} states that we can compute agony but it does not
provide direct means to discover an optimal rank assignment. However, a closer
look at the proof reveals that 
minimizing agony is a dual problem of \circprb. That is,
if we were to solve the dual optimization problem given in Equation~\ref{eq:dual},
then we can extract the optimal ranking from the dual parameters $\pi$
by setting $r(v) = \pi(v) -
\pi(\alpha)$ for $v \in V$, where $\alpha$ is the special vertex added during the construction of $H$.

Luckily, the algorithms for solving \circprb by~\citet{edmonds:72:flow} or
by~\citet{orlin:93:flow} in fact solve Equation~\ref{eq:dual} and are guaranteed
to have integer-valued solution as long as the capacities $s(u, v)$ are
integers, which is the case for us.

If we are not enforcing the cardinality constraint, that is,  we are solving
$\score{G, k}$ with $k = \abs{V}$, we can obtain a significant speed-up by
decomposing $G$ to strongly connected components, and solve ranking for
individual components.

\begin{proposition}
\label{prop:decomp}
Assume a graph $G$, and set $k = \abs{V}$. Let $\set{C_i}$ be the strongly connected
components of $G$, ordered in a topological order. Let $r_i$ be the ranking
minimizing $\score{G(C_i), \abs{C_i}}$.
Let $b_i = \sum_{j = 1}^{i - 1} \abs{C_j}$. Then the ranking 
$r(v) = r_i(v) + b_i$, where $C_i$ is the component containing $v$, yields the optimal score $\score{G, k}$.
\end{proposition}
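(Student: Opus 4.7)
The plan is to establish the equality by showing two inequalities: that the proposed ranking $r$ achieves total penalty $\sum_i \score{G(C_i), \abs{C_i}, \penlin{}}$, and that no ranking of $G$ can do better. Throughout, I will use the fact that the linear penalty $\penlin{d}=\max(0, d+1)$ depends monotonically on the rank difference and treats forward edges for free.

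First I would verify achievability. Since $\abs{C_i}\leq b_{i+1}-b_i$ and $r_i(v)\in\{0,\dots,\abs{C_i}-1\}$, the ranks $r(v)=r_i(v)+b_i$ lie in $\{b_i,\dots,b_{i+1}-1\}$, so $0\le r(v)\le \abs{V}-1=k-1$ and the cardinality constraint holds. For any edge $(u,v)$ internal to a component $C_i$, we have $r(u)-r(v)=r_i(u)-r_i(v)$, so the contribution of internal edges of $C_i$ to the agony of $G$ under $r$ is exactly $\score{G(C_i), \abs{C_i}, \penlin{}}$. For any edge $(u,v)$ with $u\in C_i$ and $v\in C_j$ with $i\neq j$, the topological ordering forces $i<j$; combining the range bounds above gives $r(u)\le b_{i+1}-1<b_j\le r(v)$, so the edge is forward and contributes $0$. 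Summing over components yields $\score{G, r, \penlin{}}=\sum_i \score{G(C_i), \abs{C_i}, \penlin{}}$.

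Next I would prove the matching lower bound. Fix any feasible ranking $r^\star$ of $G$ and, for each component $C_i$, consider its restriction $r^\star|_{C_i}$. The restriction uses at most $\abs{C_i}$ distinct rank values; let $v_1<\dots<v_t$ be those values, with $t\le\abs{C_i}$. Define the compressed ranking $\tilde r_i(u)=\operatorname{index}(r^\star(u))-1\in\{0,\dots,t-1\}\subseteq\{0,\dots,\abs{C_i}-1\}$. The key observation is that for any edge $(u,v)$ inside $C_i$, if $r^\star(u)=v_a$ and $r^\star(v)=v_b$ then $\tilde r_i(u)-\tilde r_i(v)=a-b\le v_a-v_b=r^\star(u)-r^\star(v)$, and since $\penlin{}$ is nondecreasing this implies $\penlin{\tilde r_i(u)-\tilde r_i(v)}\le \penlin{r^\star(u)-r^\star(v)}$. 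Therefore $\tilde r_i$ is a feasible ranking for the subproblem on $C_i$ and has penalty at most that incurred by $r^\star$ on the internal edges of $C_i$. Hence the total agony of $r^\star$ on $G$ is at least $\sum_i$ (penalty of $r^\star$ on internal edges of $C_i$) $\ge \sum_i \score{G(C_i), \abs{C_i}, \penlin{}}$, since edges between components only contribute nonnegatively.

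Combining the two directions gives $\score{G, k, \penlin{}}=\sum_i \score{G(C_i), \abs{C_i}, \penlin{}}$, and the ranking defined in the statement attains this value and is therefore optimal. The only subtle step is the compression argument in the lower bound; the rest is immediate from the topological ordering of SCCs and the translation-invariance of $\penlin{}$ under constant shifts.
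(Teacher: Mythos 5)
Your proof is correct and follows essentially the same route as the paper's: achievability via the observation that cross-component edges are forward under the composed ranking, plus a matching lower bound obtained by restricting an arbitrary feasible ranking to each component. Your explicit compression step (re-indexing the used rank values to $\{0,\dots,t-1\}$) fills in a detail the paper leaves implicit when comparing the restriction to the cardinality-constrained subproblem optimum; note only that your intermediate inequality $a-b\le v_a-v_b$ actually reverses when $a<b$, though the penalty comparison you need still holds there since both rank differences are at most $-1$ and both penalties vanish.
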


\begin{proof}
Note that $\max r(v) \leq k$, hence $r$ is a valid ranking.
Let $r'$ be the ranking minimizing $\score{G, k}$. Let $r'_i$ be the projection
of the ranking to $C_i$. Then
\[
	\score{G, r'} \geq \sum_{i = 1} \score{G(C_i), r'_i} \geq \sum_{i = 1} \score{G(C_i), r_i} = \score{G, r},
\]
where the last equality holds because any cross-edge between the components is a forward edge.
\qed
\end{proof}

\section{Speeding up the circulation solver}
\label{sec:speedup}

In this section we propose a modification to the circulation solver.  This
modification provides us with a modest improvement in computational complexity,
and---according to our experimental evaluation---significant improvement in
running time in practice. 

Before explaining the modification, we first need to revisit the original
Orlin's algorithm. We refer the reader to~\citep{orlin:93:flow} for a complete expose.

The solver actually solves a slightly different problem, namely, an
uncapacitated circulation.

\begin{problem}[\uncircprb]
\label{prb:uncirc}
Given a directed graph $F = (W, A, t, b)$ with weights on edges and biases on vertices, find a flow $f$ such
that $0 \leq f(e)$ for every $e \in A$ and
\begin{equation}
\label{eq:primcond}
    \sum_{(v, u) \in A} f(v, u) - \sum_{(u, v) \in A} f(u, v) = b(v), \quad\text{for every } v \in W
\end{equation}
minimizing
\[
    \sum_{(u, v) \in A} t(u, v)f(u, v)\quad.
\]
\end{problem}

To map our problem to \uncircprb, we us the trick described by~\citet{orlin:93:flow}:
we replace each capacitated edge $e = (v, w)$
with a vertex $u$ and two edges $(v, u)$ and $(w, u)$. We set $b(u) = -c(e)$,
and add $c(v, w)$ to $b(w)$. The costs are set to $t(v, u) = \max(-s(e), 0)$ and $t(w, u) = \max(s(e), 0)$.
For each uncapacitated edge $(v, w)$, we connect $v$ to $w$ with $t(v, w) = -s(v, w)$.\!\footnote{The reason for the minus sign is
that we expressed \uncircprb as a minimization problem and \circprb as a maximization problem.}

From now on, we will write $H = \cg{G, k}$, and $F = (W, A, s, b)$ to be the
graph modified as above.  We split $W$ to $W_1$ and $W_2$: $W_1$ are the
original vertices in $H$, while $W_2$ are the vertices rising from the
capacitated edges.

We also write $n$ and $m$ to be the number of vertices and edges in $H$,
respectively, and $n'$ and $m'$ to be the number of vertices and edges in $F$,
respectively.  Note that $n', m' \in \bigO{m}$.

Consider the dual of uncapacitated circulation. 
\begin{problem}[dual to \uncircprb]
\label{prb:dualuncirc}
Given a directed graph $F = (W, A, s, b)$ with weights on edges and biases on vertices, find dual variables $\pi$ on vertices 
maximizing
\[
    \sum_{(u, v) \in A} b(v)\pi(v)
\]
such that 
\begin{equation}
\label{eq:dualcond}
	t(e) + \pi(w) - \pi(v) \geq 0, \quad\text{for every } e = (v, w) \in A\quad.
\end{equation}
\end{problem}

The standard linear programming theory states that $f$ and $\pi$
satisfying Eq.~\ref{eq:primcond}--\ref{eq:dualcond}
are optimal solutions to
their respective problems if and only if the slackness conditions hold,
\begin{equation}
\label{eq:slack}
	(t(e) + \pi(w) - \pi(v))f(e) = 0, \quad\text{for every}\quad e = (v, w) \in A\quad.
\end{equation}

The main idea behind Orlin's algorithm is to maintain a flow $f$ and a dual $\pi$
satisfying Eqs.~\ref{eq:dualcond}--\ref{eq:slack}, and then iteratively enforce
Eq.~\ref{eq:primcond}. More specifically, we first define an \emph{excess} of a vertex to be
\[
	e(v) = b(v) + \sum_{(w, v) \in A} f(w, v) - \sum_{(v, w) \in A} f(v, w)\quad.
\]
Our goal is to force $e(v) = 0$ for every $v$. This is done in gradually in multiple iterations.
Assume that we are given $\Delta$, a granularity which we will use to modify the flow.
The following steps are taken:
\emph{(i)}
We first
construct a residual graph $R$ which consists of all the original edges, and reversed edges for all
edges with positive flow.
\emph{(ii)}
We then select a source $s$ with $e(s) \geq \alpha \Delta$\footnote{Here $\alpha$ is a fixed parameter $1/2 < \alpha < 1$, we use $\alpha = 3/4$.},
and construct a shortest path tree $T$ in $R$, weighted by $t(e) + \pi(w) - \pi(v)$, for $e = (v, w)$.
\emph{(iii)}
The dual variables are updated to $\pi(v) - d(v)$, where $d$ is the shortest distance from $s$ to $v$.
\emph{(iv)}
We select a sink $r$ with $e(r) \leq -\alpha \Delta$, and augment the flow along the path in $T$ from $r$ to $s$.
This is repeated until there are no longer viable options for $s$ or $r$.
After that we half $\Delta$, and repeat.

To guarantee polynomial convergence, we also must contract edges for which
$f(e) \geq 3n'\Delta$, where $n'$ is the number of vertices in the (original) input
graph.  Assume that we contract $(v, w)$ into a new vertex $u$. We set $\pi(u) =
\pi(v)$, $b(u) = b(v) + b(w)$.  We delete the edge $(v, w)$, and edges adjacent
to $v$ and $w$ are migrated to $u$; the cost of an edge $t(w, x)$ must be
changed to $t(w, x) + t(v, w)$, and similarly the cost of an edge $t(x, w)$
must be changed to $t(x, w) - t(v, w)$.
A high-level pseudo-code is given in Algorithm~\ref{alg:orlin}.

\begin{algorithm}[ht!]
\caption{Orlin's algorithm for solving \uncircprb.}
\label{alg:orlin}
	$\Delta \define \min(\max e(v), \max -e(v))$\;
	\While {there is excess} {
		contract any edges with $f(e) \geq 3n\Delta$\;
		\While {$\max  e(v) \geq \alpha \Delta$ \AND $\min e(v) \leq -\alpha \Delta$}  {
			$s \define $  a vertex with $e(s) \geq \alpha \Delta$\;
			$r \define $  a vertex with $e(r) \leq -\alpha \Delta$\;
			$T \define $ shortest path tree from $s$ in residual graph, weighted by $t(e) + \pi(w) - \pi(v)$\;
			update $\pi$ using $T$\;
			$P \define $ path in $T$ from $r$ to $s$\;
			augment flow by $\Delta$ along $P$\;
		}
		$\Delta \define \Delta / 2$\;
	}
\end{algorithm}

The bottleneck of this algorithm is computing the shortest path tree. This is
the step that we will modify. In order to do this we first point out that Orlin's algorithm
relies on two things that inner loop should do:
\emph{(i)} Eqs.~\ref{eq:dualcond}--\ref{eq:slack} must be maintained, and
\emph{(ii)} path augmentations are of granularity $\Delta$, after the augmentations
there should not be a viable vertex for a source or a viable vertex for a sink.
As long as these two conditions are met, the correctness proof given by~\citet{orlin:93:flow} holds.

Our first modification is instead of selecting one source $s$, we select
\emph{all} possible sources, $S \define \set{v \in V \mid e(v) \geq \alpha
\Delta}$, and compute the shortest path tree using $S$ as roots.  Once this tree
is computed, we subtract the shortest distance from $\pi$, select a sink $t$, and augment flow along the path from $t$ to
some root $s \in S$.

The following lemma guarantees that Eqs.~\ref{eq:dualcond}--\ref{eq:slack} are
maintained when $f$ and $\pi$ are modified.

\begin{lemma}
\label{lem:dualupdate}
Let $f$ and $\pi$ be flow and dual variables satisfying the slackness
conditions given in Eq.~\ref{eq:slack}. Let $S$ be a set of vertices. Define
$d(v)$ be the shortest distance from $S$ to $v$ in the residual graph with
weighted edges $t(e) + \pi(w) - \pi(v)$.  Let $\pi' = \pi - d$. Then $\pi'$
satisfy Eq.~\ref{eq:dualcond}, and $f$ and $\pi'$ respect the slackness
conditions in Eq.~\ref{eq:slack}. Moreover, $t(e) + \pi'(w) - \pi'(v) = 0$ for every edge in
the shortest path tree.
\end{lemma}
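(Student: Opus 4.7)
The plan is to verify the three claims in turn: dual feasibility of $\pi'$ (Eq.~\ref{eq:dualcond}), preservation of the slackness condition (Eq.~\ref{eq:slack}), and tightness on the shortest path tree. Throughout, I will rely on the fact that because $\pi$ already respects Eq.~\ref{eq:dualcond} and the slackness conditions, the weights $t(e) + \pi(w) - \pi(v)$ on all forward residual edges are non-negative, and the weights on all reverse residual edges (arising from $f(e) > 0$ on some $e = (v, w) \in A$, hence having reduced cost $-t(e) + \pi(v) - \pi(w)$) are also non-negative since slackness forces $t(e) + \pi(w) - \pi(v) = 0$ in that case. In particular, $d$ is well-defined and finite on all reachable vertices, and I may extend $d(v) = 0$ on unreachable vertices without affecting what follows (one can also restrict the argument to the reachable component).

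First, for dual feasibility I would take an arbitrary forward edge $e = (v, w) \in A$. Since this edge appears in the residual graph with weight $t(e) + \pi(w) - \pi(v) \geq 0$, the shortest-path triangle inequality gives $d(w) \leq d(v) + t(e) + \pi(w) - \pi(v)$, which rearranges exactly to $t(e) + \pi'(w) - \pi'(v) \geq 0$, establishing Eq.~\ref{eq:dualcond} for $\pi'$.

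Next, for slackness, the case $f(e) = 0$ is immediate. Suppose $f(e) > 0$ for $e = (v, w) \in A$. Then the forward slackness of $\pi$ yields $t(e) + \pi(w) - \pi(v) = 0$, so both the forward residual edge $(v, w)$ and the reverse residual edge $(w, v)$ have weight zero. Applying the triangle inequality in both directions gives $d(w) \leq d(v)$ and $d(v) \leq d(w)$, hence $d(v) = d(w)$. Consequently $t(e) + \pi'(w) - \pi'(v) = (t(e) + \pi(w) - \pi(v)) - (d(w) - d(v)) = 0$, and multiplying by $f(e)$ preserves the slackness condition in Eq.~\ref{eq:slack}. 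Finally, for a forward edge $e = (v, w) \in A$ that belongs to the shortest path tree, tightness of the tree gives $d(w) = d(v) + t(e) + \pi(w) - \pi(v)$, which reorganizes to $t(e) + \pi'(w) - \pi'(v) = 0$, as asserted.

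The main obstacle I expect is purely bookkeeping: carefully matching the sign conventions between the primal weights $t(e)$, the reduced costs used to weight the residual graph, and the reverse-edge weights arising from positive flows, so that the non-negativity of all residual weights (needed to legitimize Dijkstra-style shortest paths) and the equality $d(v) = d(w)$ across saturated slack edges both fall out of the assumed slackness. Once those signs are nailed down, each of the three claims is a one-line consequence of the triangle inequality or tree tightness.
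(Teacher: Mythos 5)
Your proof is correct and follows essentially the same route as the paper's: the triangle inequality on forward residual edges gives dual feasibility, the zero-weight residual edges in both directions across a saturated edge force $d(v)=d(w)$ and hence preserve slackness, and tree tightness gives the equality on tree edges. Your added care about sign conventions and the non-negativity of reverse residual-edge weights only makes explicit what the paper leaves implicit.
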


Note that since we modify $f$ only along the edges of the shortest path tree,
this lemma guarantees that Eq.~\ref{eq:slack} is also 
maintained when we augment $f$. The proof of this lemma is essentially the same as the
single-source version given by~\citet{orlin:93:flow}.

\begin{proof}
Let $e = (v, w) \in E$. Then $e$ is also in residual graph, and $d(w) \leq d(v) + t(e) + \pi(w) - \pi(v)$.
This implies
\[
	t(e) + \pi'(w) - \pi'(v) = t(e) + \pi(w) - d(w) - \pi(v) + d(v) \geq 0,
\]
proving the first claim. If $e$ is in the shortest path tree, then $d(w) = d(v) + t(e) + \pi(w) - \pi(v)$,
which implies the third claim, $t(e) + \pi(w) - \pi(v) = 0$.

To prove the second claim,
if $f(e) > 0$, then $t(e) + \pi(w) - \pi(v) = 0$.
Since $(w, v) = e'$ is also in residual graph, we must have $d(v) = d(w)$.
Thus,
\[
	t(e) + \pi'(w) - \pi'(v) = t(e) + \pi(w) - d(w) - \pi(v) + d(v) = 0\quad.
\]
This completes the proof.
\qed
\end{proof}

Once we augment $f$, we need to update the shortest path tree. There are three possible updates:
\emph{(i)}
adding a flow may result in a new backward edge in the residual graph,
\emph{(ii)}
reducing a flow may result in a removing a backward edge in the residual graph, and
\emph{(iii)}
deleting a source from $S$ requires that the tree is updated.

In order to update the tree we will use an algorithm by~\citet{ramalingam:96:dynamic} to which we
will refer as \algrr. The pseudo-code for the modified solver is given in Algorithm~\ref{alg:fast}.

\begin{algorithm}[h]
\caption{A modified algorithm for \uncircprb.}
\label{alg:fast}
	$\Delta \define \min(\max e(v), \max -e(v))$\;
	\While {there is excess} {
		contract any edges with $f(e) \geq 3n'\Delta$\;
		$S \define \set{v \in V \mid e(v) \geq \alpha \Delta}$\;
		$Q \define \set{v \in V \mid e(v) \leq -\alpha \Delta}$\;
		$T \define $ shortest path tree from $S$ in residual graph, weighted by $t(e) + \pi(w) - \pi(v)$\;
		update $\pi$ using $T$, see Lemma~\ref{lem:dualupdate}\;
		\While {$S \neq \emptyset$ \AND $Q \neq \emptyset$} {
			select $r \in Q$\;
			$P \define $ path in $T$ from $r$ to some $s \in S$\;
			augment flow by $\Delta$ along $P$\;
			update residual graph\;
			\lIf {$e(s) < \alpha \Delta$} {delete $s$ from $S$}
			\lIf {$e(r) > -\alpha \Delta$} {delete $r$ from $Q$}
			update $T$ using~\citep{ramalingam:96:dynamic}\;
			update $\pi$ using $T$, see Lemma~\ref{lem:dualupdate}\;
		}
		$\Delta \define \Delta / 2$\;
	}
\end{algorithm}

Before going further, we need to address one technical issue. \algrr requires
that edge weights are positive, whereas we can have weights equal to 0.  We
solve this issue by adding $\epsilon = 1/n'$ to each edge. Since the original
weights are integers and a single path may have $n' - 1$ edges, at most, the
obtained shortest path tree is a valid shortest path tree for the original
weights.  We use $\epsilon$ only for computing and updating the tree; we will
not use it when we update the dual variables.

In order to update the tree, first note that the deleting the source $s$ from
$S$ is essentially the same as deleting an edge: computing the tree using $S$
as roots is equivalent to having one auxiliary root, say $\sigma$, with only
edges connecting $\sigma$ to $S$. Removing $s$ from $S$ is then equivalent to
deleting an edge $(\sigma, s)$.

The update is done by first adding the new edges, and then deleting the necessary
edges.  We first note that the edge additions do not require any updates by
\algrr.  This is because the internal structure of \algrr is a subgraph of
\emph{all} edges that can be used to form the shortest path.  Any edge that is
added will be from a child to a parent, implying that it cannot participate in
a shortest path.\!\footnote{The edge can participate later when we delete
edges.}

\begin{proposition}
\label{prop:modifytime}
Algorithm~\ref{alg:fast} runs in $\bigO{m(\min(kn, m) + n \log n)}$ time,
assuming $G$ is not weighted.
\end{proposition}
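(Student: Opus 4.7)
The plan is to lift Orlin's~\citep{orlin:93:flow} complexity analysis to Algorithm~\ref{alg:fast}, accounting for the multi-source Dijkstra that is performed once per scaling phase (instead of once per augmentation) and for the incremental tree maintenance by \algrr. Lemma~\ref{lem:dualupdate} already certifies that these modifications preserve the dual invariant~(\ref{eq:dualcond}) and the slackness condition~(\ref{eq:slack}), so what remains is a running-time argument.

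First I would bound the number of scaling phases. Because $G$ is unweighted, every capacitated edge of $\cg{G,k}$ has capacity $1$; the biases $b(v)$ produced by the uncapacitated transformation are therefore bounded in magnitude by the degrees of $G$, so $\Delta_0 = \bigO{n}$ and the outer while-loop runs $\bigO{\log n}$ times. Each outer iteration begins with one multi-source Dijkstra costing $\bigO{m + n \log n}$; summed over all phases this contributes $\bigO{(m + n \log n)\log n}$, which is absorbed into the bound we are trying to prove.

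Next I would bound the total number of inner-loop augmentations, which is where the $\min(kn, m)$ term enters. Orlin's per-phase source/sink-elimination argument gives $\bigO{n'}$ augmentations per phase, and combining this with the phase count already yields an $\bigO{m \log n}$ upper bound. I would sharpen this with two separate arguments. For the $m$ side, each augmentation pushes at least $\alpha\Delta$ units of flow, the total magnitude of flow ever routed is $\bigO{m}$ (since the capacitated edges have unit capacity and there are $m$ of them), so at most $\bigO{m}$ augmentations occur. For the $kn$ side, I would exploit the cardinality constraint: the edge $(\omega,\alpha)$ with shift $1-k$ combined with the tight slackness along the infinite-capacity $\alpha$--$v$--$\omega$ paths constrains the dual variables $\pi$ to lie in a window of width $\bigO{k}$, and since Lemma~\ref{lem:dualupdate} can only decrease each $\pi(v)$ by a non-negative integer, the cumulative number of positive dual updates---and hence augmentations---is $\bigO{kn}$. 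Multiplying $\bigO{\min(kn, m)}$ augmentations by an $\bigO{m}$ per-augmentation cost for path traversal and residual-graph maintenance gives $\bigO{m \min(kn, m)}$; combined with the $\bigO{mn \log n}$ term coming from the $\bigO{n \log n}$ cost of each \algrr update, we obtain the claimed bound.

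The main obstacle will be giving a clean amortised analysis of the \algrr updates and formalising the $\bigO{k}$ bound on the range of $\pi$. For the former, one must argue that across a single phase the cumulative output-sensitive work of \algrr cannot significantly exceed what a single Dijkstra from scratch would cost, using the fact that the tree is a subgraph of the residual graph and edge additions never require repair work (as already observed just before the proposition). For the latter, the $\bigO{k}$ width of $\pi$ must be extracted from the interplay between the slackness conditions on the $\alpha$--$v$--$\omega$ edges and the $(\omega,\alpha)$ edge, and then translated into a bound on the total positive movement of dual variables, which is the more delicate part of the argument.
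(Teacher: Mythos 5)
There is a genuine gap, and it sits exactly where you place the $\min(kn,m)$ factor. You attribute $\min(kn,m)$ to the \emph{number of augmentations}, arguing that the $\bigO{k}$ window on the duals caps the number of augmentations at $\bigO{kn}$. That inference does not hold: an augmentation need not decrease any integer dual variable (the subsequent \algrr repair may leave all $\pi$-values unchanged, or change only the $\epsilon$-perturbed parts of the distances), so bounding dual movement does not bound augmentations. In fact the augmentation count is $\Theta(m)$ independently of $k$: every vertex of $W_2$ carries deficiency $-1$, there are $m$ of them, and each augmentation of granularity $\Delta=1$ clears exactly one unit, so when $kn\ll m$ your claimed $\bigO{kn}$ bound is false. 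In the paper's argument the $\min(kn,m)$ factor instead comes from amortizing the \emph{cost of the tree repairs}: Lemma~\ref{lem:dualbound} shows each dual can decrease only $\bigO{k}$ times, between two consecutive integer decreases a vertex's ($\epsilon$-perturbed) distance can change only $\bigO{n}$ times because tree paths have length $\bigO{n}$, hence each vertex is touched by \algrr at most $\bigO{nk}$ times; summing the output-sensitive repair costs over this per-vertex budget gives $\bigO{nk(m+n\log n)}$, which is then taken against the crude bound of $\bigO{m}$ repairs at $\bigO{m+n\log n}$ each. So the two sides of the $\min$ are two ways of charging the \emph{same} $\bigO{m}$ repairs, not two bounds on how many repairs there are.

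A second, smaller error: $\Delta$ is the \emph{minimum} of the maximum positive and maximum negative excess, and since every $W_2$ vertex starts with excess exactly $-1$ in the unweighted case, $\Delta=1$ from the outset and there is exactly \emph{one} outer phase. This single-phase observation is where the unweightedness hypothesis is actually used. Your reading ($\Delta_0=\bigO{n}$, $\bigO{\log n}$ phases) is not merely looser: with $\bigO{\log n}$ phases the per-phase source/sink-elimination argument gives $\bigO{m\log n}$ augmentations in total, which would inflate the $m\cdot m$ term past the claimed bound, and your alternative ``total flow routed is $\bigO{m}$'' argument only recovers $\bigO{m}$ augmentations once you know $\Delta$ stays at $1$. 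You correctly identify Lemma~\ref{lem:dualbound} and the amortized \algrr analysis as the crux, but the accounting scheme (augmentations times per-augmentation cost) needs to be replaced by the per-vertex update-count amortization for the proof to go through.
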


To prove the result we need the following lemma.

\begin{lemma}
\label{lem:dualbound}
At any point of the algorithm, the dual variables $\pi$ satisfy
$\pi(v) - \pi(u) \leq k$ for any $u, v$.
\end{lemma}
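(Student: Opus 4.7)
My plan is to prove the stronger invariant
\[
  \pi(\alpha) \leq \pi(u) \leq \pi(\alpha) + k - 1 \qquad \text{for every } u \in W,
\]
by induction on the steps taken by the algorithm; the lemma then follows with the sharper bound $k - 1$. Orlin's algorithm starts from $\pi \equiv 0$, so the invariant holds at the base. Edge contractions only copy an existing $\pi$-value onto a merged vertex, hence they cannot enlarge the spread of $\pi$, so the induction needs only to verify the potential update $\pi' = \pi - d$.

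For $v \in W_1$, Lemma~\ref{lem:dualupdate} guarantees that $\pi'$ still satisfies dual feasibility (Eq.~\ref{eq:dualcond}), and applying it to the three uncapacitated edges of $F$ that survive from $\cg{G,k}$, namely $(\alpha, v)$, $(v, \omega)$ and $(\omega, \alpha)$ with $t$-values $0$, $0$ and $k - 1$, directly yields $\pi'(\alpha) \leq \pi'(v) \leq \pi'(\omega) \leq \pi'(\alpha) + k - 1$ for every $v \in V$.

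The delicate case is an auxiliary vertex $u \in W_2$ produced from a capacitated edge $(x, y) \in E$: such a $u$ has no arc from $\alpha$ nor to $\omega$ in $F$, so dual feasibility alone yields only the lower bounds $\pi(u) \geq \pi(x)$ and $\pi(u) \geq \pi(y) - 1$. I would attack this by passing to the transformed potential $\phi(v) := d(v) - \pi(v)$, which (substituting into Dijkstra's recurrence) obeys
\[
  \phi(v) \;=\; \min_{(v',v) \in A_r}\bigl(\phi(v') + \hat{t}(v', v)\bigr),
\]
where $A_r$ is the residual-graph arc set and the unreduced cost $\hat{t}$ equals $t(e)$ on a forward arc and $-t(e)$ on a reverse arc. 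Because $u$ has no outgoing forward arcs in $F$, no residual reverse arc can enter $u$, so the recurrence collapses to $\phi(u) = \min(\phi(x), \phi(y) + 1)$. Similarly, every $v \in V$ receives the forward arc $(\alpha, v)$ of cost $0$, giving $\phi(v) \leq \phi(\alpha)$, and $\omega$ receives $(v, \omega)$ of cost $0$ from every $v \in V$, giving $\phi(\omega) \leq \min_{v \in V}\phi(v)$. Combining these,
\[
  \phi(u) \;=\; \min(\phi(x), \phi(y) + 1) \;\geq\; \min(\phi(\omega), \phi(\omega) + 1) \;=\; \phi(\omega),
\qquad
  \phi(u) \;\leq\; \phi(x) \;\leq\; \phi(\alpha).
\]
Unfolding $\phi = d - \pi$ and rearranging, $\phi(\omega) \leq \phi(u) \leq \phi(\alpha)$ becomes exactly $\pi'(\alpha) \leq \pi'(u) \leq \pi'(\omega)$, closing the induction together with the $W_1$ bound.

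The one place where care is needed is the bookkeeping around residual reverse arcs: a reverse of $(\omega, \alpha)$ into $\omega$ (when $f(\omega, \alpha) > 0$) or reverses of $(\alpha, v)$ into $\alpha$ could, in principle, threaten the two inequalities $\phi(\omega) \leq \min_{v}\phi(v)$ and $\phi(v) \leq \phi(\alpha)$. This turns out to be benign: such arcs only add extra predecessors into the min-recurrence and can therefore only push $\phi(\omega)$ and $\phi(\alpha)$ \emph{down}, which is precisely the side on which the two inequalities are needed.
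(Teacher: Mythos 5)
Your argument for the uncontracted graph is essentially sound and runs parallel to the paper's: the $W_1$ bound comes from dual feasibility (Eq.~\ref{eq:dualcond}) applied to the three uncapacitated gadget arcs, and the troublesome $W_2$ vertices (which have no outgoing arcs and hence no upper bound from feasibility alone) are controlled through the shortest-path structure. Your change of variables $\phi = d - \pi$ is a clean way to organize that second step, and you even obtain a marginally sharper constant ($k-1$ instead of $k$) by additionally exploiting the cost-$0$ arc $(x,u)$, which the paper does not bother to do. Two caveats on the $W_2$ step: the equality $\phi(u) = \min(\phi(x),\phi(y)+1)$ requires $u \notin S$ and $u$ reachable; in later scaling phases a $W_2$ vertex can in principle acquire positive excess and enter $S$, in which case $d(u)=0$, $\pi(u)$ is not decreased, and your upper bound needs a separate word.

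The genuine gap is the treatment of edge contractions. You dismiss them in one sentence on the grounds that a contraction merely copies a $\pi$-value, so the spread cannot grow \emph{at the moment of contraction}. That is true but beside the point: your induction step for every \emph{subsequent} potential update relies on the specific arc structure of $F$ --- the cost-$0$ arcs $(\alpha,v)$ and $(v,\omega)$, the cost-$(k-1)$ arc $(\omega,\alpha)$, and the cost-$0$/cost-$1$ pair into each $W_2$ vertex. Contraction destroys exactly this structure: it merges vertices (possibly including $\alpha$, $\omega$, or a $W_2$ vertex with its neighbour) and, as described in Section~\ref{sec:speedup}, \emph{shifts the costs} of all arcs migrated to the new super-vertex by $\pm t(v,w)$. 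After the first contraction none of the identities you invoke ($\hat t(\alpha,v)=0$, $\hat t(\omega,\alpha)=k-1$, $\phi(u)=\min(\phi(x),\phi(y)+1)$) is available on the graph the algorithm is actually running on, so the induction cannot be closed. The paper handles this with a separate argument: unroll the contractions to recover a dual $\pi'$ on the original graph that still satisfies Eq.~\ref{eq:dualcond}, observe that every super-vertex carries the $\pi'$-value of one of its constituent original vertices, and thereby reduce the contracted case to the uncontracted one. Your proof needs this (or an equivalent) step to be complete.
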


\begin{proof}
Let us first prove that this result holds if we have done no edge contractions.
Let $\alpha$ and $\omega$ be the vertices in $H$, enforcing the cardinality constraint. 
Assume that $v$ and $w$ are both in $W_1$.
Then Eq.~\ref{eq:dualcond} guarantees that
$\pi(u) \geq \pi(\alpha) $ and $\pi(\omega) \geq \pi(v)$ implying
\[
	\pi(v) - \pi(u) \leq \pi(\omega) - \pi(\alpha)  \leq t(\omega, \alpha) = k - 1\quad.
\]
Now assume that $v$ (and/or $u$) is in $W_2$.
Then the shortest path tree connects it to a vertex $x \in W_1$, and either
$\pi(u) = \pi(x)$ or $\pi(u) = \pi(x) - 1$. This leads to that the difference
$\pi(v) - \pi(u)$ can be at most $k$.

To see why the lemma holds despite edge contractions, note that we can always
unroll the contractions to the original graph, and obtain $\pi'$ that satisfies
Eq.~\ref{eq:dualcond}. Moreover, if $x$ is a new vertex resulted from a
contraction, after unrolling, there is a vertex $u \in W$ such
that $\pi(x) = \pi'(u)$.  This is because when we create $x$, we initialize
$\pi(x)$ to be dual of one contracted vertices.  Consequently, the general
case reduces to the first case.\qed
\end{proof}

\begin{proof}[of Proposition~\ref{prop:modifytime}]
Since $b(v) = -1$, for $v \in W_2$ and $b(v) \geq 0$ for $v \in W_1$,
we have $\Delta = 1$, and after a single iteration $e(v) = 0$, for $v \in W$.
So, we need only one outer iteration.
Consequently, we only need to show that the inner loop needs $\bigO{m(\min(kn, m) + n \log n)}$
time.

Let us write $O$ to be the vertices who are either in $W_1$, or, due to a
contraction, contain a vertex in $W_1$.
Let $P_i$ be the path selected during the $i$th iteration.
Let us write $n_i'$ to be the number of vertices whose distance is
changed\footnote{taking into account the $\epsilon$ trick} during the $i$th
iteration of the inner loop; let $m_i$ be the number of edges adjacent to these vertices.
Finally, let us write $n_i$ to be the number of vertices in $O$ whose distance is
changed.

\citet{ramalingam:96:dynamic} showed that updating a tree during the $i$th iteration requires
$\bigO{m_i + n_i' \log n_i'}$ time. More specifically, the update algorithm
first detects the affected vertices in $\bigO{m_i}$ time, and then computes the
new distances using a standard Dijkstra algorithm with a binomial heap in
$\bigO{m_i + n_i' \log n_i'}$ time.

We can optimize this to $\bigO{m_i + n_i' + n_i \log n_i}$ by performing a trick
suggested by Orlin: Let $X$ be the vertices counted towards $n_i$ and let $Y$
be the remaining vertices counted towards $n_i'$.  A vertex in $Y$ is either in
a path between two vertices in $X$, or is a leaf. In the latter case it may be
only connected to only two (known) vertices. We can first compute the distances
for $X$ by frog-leaping the vertices in $Y$ in $\bigO{m_i + n_i \log n_i}$ time.
This gives us the updated distances for $X$ and for vertices in $Y$ that are
part of some path. Then we can proceed to update the leaf vertices in $Y$ in
$\bigO{m_i + n_i}$ time.

The total running time of an inner loop is then
\[
	\bigO{\sum_i \abs{P_i} + m_i + n_i' + n_i \log n_i} \subseteq
\bigO{\sum_i \abs{P_i} + m_i + n_i' + n_i \log n}\quad.
\]
First note that we can have at most $\bigO{m}$ terms in the sum. This is because
we either have
$\sum \max(e(i), 0) \leq 2\Delta \alpha n'$ or $\sum \max(-e(i), 0) \leq 2\Delta \alpha n'$
due to to the previous outer loop iteration, and since the contractions can only reduce these terms.

A path from a leaf to a root in $T$ cannot contain two consecutive vertices
that are outside $O$.  Hence, the length of a path is at most $\bigO{n}$.

Let us now bound the number of times a single vertex, say $v$, needs to be
updated.  Assume that we have changed the distance but the dual $\pi(v)$ has
not changed. In other words, we have increased the $\epsilon$ part of the
distance. This effectively means that $\pi(v)$ remained constant but we have
increased the number of edges from the vertex to the root. Since we can have at
most $\bigO{n}$ long path, we can have at most $\bigO{n}$ updates without
before updating $\pi(v)$. Note that at least one root, say $s \in S$, will not
have its dual updated until the very last iteration. Lemma~\ref{lem:dualbound}
now implies that we can update, that is, decrease, $\pi(v)$ only $\bigO{k}$ times.
Consequently, we can only update $v$ $\bigO{nk}$ times.

This immediately implies that $\sum_i n_i'\ \in \bigO{mnk}$, $\sum_i n_i\ \in \bigO{n^2k}$,
and $\sum_i m_i\ \in \bigO{mnk}$. Since path lengths are $\bigO{n}$, we also have $\sum_i \abs{P_i} \in \bigO{mn}$.
This gives us a total running time of
\[
	\bigO{mn + mnk + n^2k \log n}  = \bigO{nk(m + n \log n)}\quad.
\]
We obtain 
the final bound by alternatively bounding \algrr with $\bigO{m + n \log n}$,
and observing that you need only $\bigO{m}$ updates.\qed
\end{proof}

The theoretical improvement is modest: we essentially replaced $m$ with
$\min(nk, m)$.  However, in practice this is a very pessimistic bound, and we
will see that this approach provides a significant speed-up. Moreover, this
result suggests---backed up by our experiments---that the problem is easier for
smaller values of $k$. This is opposite to the behavior of the original solver
presented in~\citep{tatti:15:agony}. Note also, that we assumed that $G$ has
no weights. If we have integral weights of at most $\ell$, then the running
time increases by $\bigO{\log \ell}$ time.

\section{Alternative penalty functions}\label{sec:score}

We have shown that we can find ranking minimizing edge penalties $\penlin{}$ in
polynomial time. In this section we consider alternative penalties.
More specifically, we consider convex penalties which are solvable in polynomial time,
and show that concave penalties are \np-hard.

\subsection{Convex penalty function}\label{sec:convex}

We say that the penalty function is \emph{convex} if $\pen{x} \leq (\pen{x - 1} +
\pen{x + 1}) / 2$ for every $x \in \integers$.

Let us consider a penalty function that can be written as
\[
	\pensum{x} = \sum_{i = 1}^{\ell} \max (0, \,\alpha_i (x - \beta_i)),
\]
where $\alpha_i > 0$ and $\beta_i \in \integers$ for $1 \leq i \leq \ell$.
This penalty function is convex. On the other hand, if we are given a convex
penalty function $\pen{}$ such that $\pen{x} = 0$ for $x < 0$, then we can
safely assume that an optimal rank assignment will have values between $0$ and $\abs{V} - 1$.
We can define a penalty function $\pensum{}$ 
with $\ell \leq \abs{V}$ terms such that $\pensum{x} = \pen{x}$ for $x < \abs{V}$.
Consequently, finding an optimal rank assignment using $\pensum{}$ will also yield an optimal
rank assignment with respect to $\pen{}$.

Note that $\penlin{}$ is a special case of $\pensum{}$. This hints that we can
solve $\score{G, k, \pensum{}}$ with a technique similar to the one given in Section~\ref{sec:algo}.
In fact, we can map this problem to \genagonyprb.
In order to do this, assume a graph $G = (V, E, w)$ and an integer $k$.
Set $n = \abs{V}$ and $m = \abs{E}$.
We define a graph $H = (W, F, w, s)$ as follows.
The vertex set $W$ consists of 2 groups:
\emph{(i)} $n$ vertices, each vertex corresponding to a vertex in $G$
\emph{(ii)} $2$ additional vertices $\alpha$ and $\omega$.
For each edge $e = (v, w) \in E$, we add $\ell$ edges $f_i = (u, v)$ to $F$.
We set $s(f_i) = -\beta_i$ and $w(f_i) = \alpha_i w(e)$.
We add edges to $\alpha$ and $\omega$ to enforce the cardinality constraint,
as we did in Section~\ref{sec:shift}.
We denote this graph by $\cg{G, k, \pensum{}} = H$.

\begin{example}
\begin{figure}[ht!]
\hspace*{\fill}
\begin{tikzpicture}[baseline=0]
\node[exnode] at (0, 0) (a) {$a$};
\node[exnode] at (2, 0) (b) {$b$};
\node[exnode] at (1, -1.7) (c) {$c$};
\node at( 0, -1.7) {$G$};

\draw (a) edge[exedge] node[auto = right, black, font = \scriptsize, circle, inner sep = 1pt] {1} (b);
\draw (b) edge[exedge] node[auto = right, black, font = \scriptsize, circle, inner sep = 1pt] {1} (c);
\draw (c) edge[exedge] node[auto = right, black, font = \scriptsize, circle, inner sep = 1pt] {2} (a);

\end{tikzpicture}\hfill
\begin{tikzpicture}[baseline=0]
\node[exnode, label={[font = \scriptsize, inner sep = 1pt]above:6}] at (0, 0) (a) {$a$};
\node[exnode, label={[font = \scriptsize, inner sep = 0pt]20:3}] at (2, 0) (b) {$b$};
\node[exnode, label={[font = \scriptsize, inner sep = 1pt]below:3}] at (1, -1.7) (c) {$c$};
\node at (-1, -1.7) {$\cg{G, 3, \pensum{}}$};

\draw (a) edge[exedge, bend left = 15] node[labnode, auto = left, pos = 0.45] {$1(1)$} (b);
\draw (a) edge[exedge, bend right = 5] node[labnode, auto = right, pos = 0.45] {$-3(2)$} (b);

\draw (b) edge[exedge, bend left = 15] node[labnode, auto= left, pos = 0.7] {$1(1)$} (c);
\draw (b) edge[exedge, bend right = 5] node[labnode, auto= right, pos = 0.3] {$-3(2)$} (c);

\draw (c) edge[exedge, bend left = 15] node[labnode, auto = left, pos = 0.2] {$1(2)$} (a);
\draw (c) edge[exedge, bend right = 5] node[labnode, auto = right, pos = 0.8] {-$3(4)$} (a);

\end{tikzpicture}\hspace*{\fill}
\caption{Toy graph $G$ and the related circulation graph $\cg{G, 3, \pensum{}}$. 
To avoid clutter the vertices $\alpha$ and $\omega$ and the adjacent edges are omitted.}
\label{fig:toymulti}
\end{figure}
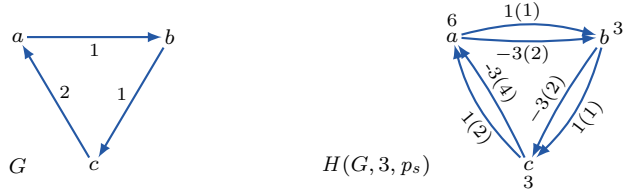

Consider a graph $G$ given in Figure~\ref{fig:toymulti} and a penalty function
$\pensum{d} = \max(0, d + 1) + 2\max(0, d - 3)$.  The graph $H = \cg{G, 3,
\pensum{}}$ has $5$ vertices, the original vertices and the two additional 
vertices. Each edge in $G$ results in two edges in $H$. This gives us 6 edges plus
the 7 edges adjacent to $\alpha$ or $\omega$.
The graph $H$ without $\alpha$ and $\omega$ is given in Figure~\ref{fig:toymulti}.
\end{example}

Finally, let us address the computational complexity of the problem.  The
circulation graph $\cg{G, k, \pensum{}}$ will have $n + 2$ vertices and $\ell m
+ n$ edges.  If the penalty function $\pen{}$ is convex, then we need at most $\ell = n$
functions to represent $\pen{}$ between the range of $[0, n - 1]$.
Moreover, if we enforce the cardinality constraint $k$, we need only $\ell = k$ 
components. Consequently, we will have at most $d m + n$, edges where
$d = \min (k, \ell, n)$ for $\pensum{}$, and 
$d = \min (k, n)$ for a convex penalty $\pen{}$. This gives us computational time
of $\bigO{d m \log n (d m + n \log n)}$.

\subsection{Concave penalty function}

We have shown that we can solve Problem~\ref{prb:opt} for any convex penalty.
Let us consider concave penalties, that is penalties for which
$\pen{x} \geq (\pen{x - 1} + \pen{x + 1}) / 2$.
There is a stark difference compared to the convex penalties as the
minimization problem becomes computationally intractable.

\begin{proposition}
\label{prop:concave}
Assume a monotonic penalty function $\funcdef{\pen{}}{\integers}{\reals}$ such
that $\pen{x} = 0$ for $x < 0$, $\pen{2} > \pen{1}$, and there is an integer $t$ such
that 
\begin{equation}
\label{eq:ineq}
	\pen{t} > \frac{\pen{t - 1} + \pen{t + 1}}{2}
\end{equation}
and
\[
	\frac{\pen{s}}{s + 1} \geq \frac{\pen{y}}{y + 1},
\]
for every $0 \leq s \leq y$ and $y \in [t - 1, t, t + 1]$.
Then, determining whether $\score{G, k, p} \leq \sigma$
for a given graph $G$, integer $k$, and threshold $\sigma$ is an \np-hard problem.
\end{proposition}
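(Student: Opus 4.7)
The plan is to show \np-hardness by a polynomial-time reduction from a suitable \np-hard problem, exploiting the two quantitative levers supplied by the hypotheses. The strict concavity inequality $\pen{t} > (\pen{t-1}+\pen{t+1})/2$ creates a strictly positive advantage for a pair of edges placed at spans $t-1$ and $t+1$ over a pair both placed at span $t$, which naturally encodes a binary combinatorial choice; the per-span monotonicity $\pen{s}/(s+1) \geq \pen{y}/(y+1)$ around $t$ controls how quickly penalties scale with span and prevents adversarial rank shifts from recovering savings outside the intended encoding. A natural source problem whose structure matches this encoding is \textsc{Max Cut} or a closely related partition-style \np-hard problem.

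Given an instance of the source problem, I would construct a directed graph $G$ with cardinality bound $k = t + 2$ as follows. For each element of the source instance, introduce a \emph{main} vertex together with a \emph{pinning gadget}: a constellation of auxiliary vertices connected by high-weight edges whose unique minimum-penalty configuration forces the rank of the main vertex into a canonical two-element subset of $\set{0, \ldots, t+1}$. The per-span ratio condition is precisely what drives the pinning: any deviation from the canonical ranks strictly increases the gadget's contribution by a fixed positive amount, which is amplified by scaling the gadget weights so that it dominates all contributions from elsewhere in $G$. On top of the pinning, the constraints of the source problem are encoded by directed edges between main vertices whose weights are tuned so that the cost differential between a ``good'' and a ``bad'' configuration of each pair of main vertices is a positive multiple of $2\pen{t} - \pen{t-1} - \pen{t+1}$, which is strictly positive by hypothesis. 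Choosing the threshold $\sigma$ accordingly yields $\score{G, k, \pen{}} \leq \sigma$ iff the source instance is a YES-instance.

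The main obstacle is the gadget analysis. Because the hypotheses supply only local information about $\pen{}$, namely strict concavity at the single point $t$ and the ratio bound for $y \in \set{t-1, t, t+1}$, the pinning gadget must be designed to remain robust against any admissible $\pen{}$ whose behaviour outside this window is otherwise unconstrained. I would carry this out by an explicit inequality chase: for every non-canonical rank value $r(v)$, compute the penalty increment induced by the gadget and bound it from below using the monotonicity of $\pen{}$, the assumption $\pen{2} > \pen{1}$, and the ratio condition, producing a strictly positive gap whose magnitude can be amplified by scaling the gadget weights. Once the pinning is established, verifying the equivalence between the source instance and the inequality $\score{G, k, \pen{}} \leq \sigma$ reduces to a routine case analysis at the constraint edges, where the strict concavity at $t$ encodes the combinatorial objective.
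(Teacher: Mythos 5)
Your high-level reading of the hypotheses is right: you correctly identify the source problem (\mcutprb, which is what the paper reduces from) and the combinatorial lever, namely that a pair of penalty evaluations at spans $t-1$ and $t+1$ beats a pair at span $t$ by the strictly positive amount $2\pen{t}-\pen{t-1}-\pen{t+1}$. But there is a genuine gap in the mechanism you propose for realizing those spans. Your main vertices are pinned to a two-element set of adjacent ranks, so the rank difference across any direct edge between main vertices lies in $\{-1,0,1\}$, and the penalty incurred is a weighted combination of $\pen{-1}=0$, $\pen{0}$ and $\pen{1}$. ``Tuning the weights'' only rescales these values; it cannot move the \emph{argument} of $\pen{}$ from $0,\pm 1$ to $t, t\pm 1$, and the hypotheses say essentially nothing about how $\pen{0}$ and $\pen{1}$ relate to the concavity at $t$ (the ratio condition constrains $\pen{s}$ only against $y\in\{t-1,t,t+1\}$). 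So the cost differential of a direct-edge encoding cannot be forced to be a positive multiple of $2\pen{t}-\pen{t-1}-\pen{t+1}$ for an arbitrary admissible $\pen{}$, and the reduction as sketched does not go through.

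The missing idea is a subdivision gadget: the paper replaces each undirected \mcutprb edge $\{u,v\}$ by two antiparallel directed paths, each with $t$ intermediate vertices, so that the path from $u$ to $v$ carries total span $\ell = t + r(u) - r(v)$. The ratio condition is then exactly what is needed to prove a concentration lemma: the minimum total penalty of such a path equals $\pen{\ell}$, via the telescoping bound $\sum_i\pen{d_i}\geq\frac{\pen{\ell}}{\ell+1}\sum_i(d_i+1)=\pen{\ell}$. Hence the pair of paths contributes $2\pen{t}$ when $r(u)=r(v)$ and $\pen{t-1}+\pen{t+1}$ when the ranks differ by one, which is the differential you want. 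Relatedly, you misattribute the ratio condition to the pinning gadget: in the paper the pinning is driven by $\pen{0}>0$ and $\pen{2}>\pen{1}$ (heavy edges of weight proportional to $1/(\pen{2}-\pen{1})$ forbid spans of $2$ or more along the chain of special vertices), while the ratio condition is reserved for the path-concentration argument. Without the subdivided paths and that concentration lemma, the ``routine case analysis at the constraint edges'' you defer to cannot be carried out.
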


We provide the proof in Appendix.

While the conditions in Proposition~\ref{prop:concave} seem overly complicated,
they are quite easy to satisfy. Assume that we are given a penalty function
that is concave in $[-1, \infty]$, and $\pen{-1} = 0$. Then due to concavity we have
\[
	\frac{\pen{x}}{x + 1} \geq \frac{\pen{x + 1}}{x + 2},\quad\text{for}\quad x \geq 0\quad.
\]
This leads to the following corollary.
\begin{corollary}
\label{cor:concave}
Assume a monotonic penalty function $\funcdef{\pen{}}{\integers}{\reals}$ such
that $\pen{x} = 0$ for $x < 0$, $\pen{2} > \pen{1}$, and $\pen{}$ is concave
and non-linear in $[-1, \ell]$ for some $\ell \geq 1$.
Then, determining whether $\score{G, k, p} \leq \sigma$
for a given graph $G$, integer $k$, and threshold $\sigma$ is \np-hard problem.
\end{corollary}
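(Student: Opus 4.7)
The plan is simply to verify that the hypotheses of Corollary~\ref{cor:concave} imply the hypotheses of Proposition~\ref{prop:concave}, and then invoke that proposition. The three ``trivial'' hypotheses (monotonicity, $\pen{x}=0$ for $x<0$, and $\pen{2}>\pen{1}$) are assumed in both statements, so the work reduces to producing the integer $t$ with strict concavity and then establishing the ratio condition $\pen{s}/(s+1)\ge\pen{y}/(y+1)$ for $0\le s\le y$ and $y\in\{t-1,t,t+1\}$.

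First I would extract $t$ from the non-linearity hypothesis. Because $\pen{}$ is concave on the integer interval $[-1,\ell]$ we have $\pen{i}\ge\tfrac12(\pen{i-1}+\pen{i+1})$ for every $i\in\{0,\dots,\ell-1\}$; if equality held at every such $i$, then $\pen{}$ would be an affine function on $[-1,\ell]$, contradicting non-linearity. Hence there exists $t\in\{0,\dots,\ell-1\}$ with $\pen{t}>\tfrac12(\pen{t-1}+\pen{t+1})$, which is the first requirement of Proposition~\ref{prop:concave}.

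The main step is the chord argument that gives the ratio condition. I use the fact that $\pen{-1}=0$, together with concavity, to show that $\varphi(x)=\pen{x}/(x+1)$ is non-increasing on $\{0,1,\dots,\ell\}$. Concretely, for integers $-1\le s\le y\le \ell$ concavity implies
\[
\pen{s}\;\ge\;\frac{s-(-1)}{y-(-1)}\,\pen{y}+\frac{y-s}{y-(-1)}\,\pen{-1}\;=\;\frac{s+1}{y+1}\,\pen{y},
\]
and dividing through by $s+1>0$ yields $\pen{s}/(s+1)\ge\pen{y}/(y+1)$.

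It remains to check the ranges. Since $t\in\{0,\dots,\ell-1\}$, any $y\in\{t-1,t,t+1\}$ satisfies $y\le \ell$, and the constraint $0\le s\le y$ in the proposition forces $y\ge 0$, so the case $y=-1$ (when $t=0$) is vacuous. Thus the chord inequality above applies to every pair $(s,y)$ appearing in the proposition's hypothesis, and Proposition~\ref{prop:concave} yields \np-hardness. The only mildly delicate point is the case $t=0$, where one must observe that the condition is vacuous for $y=-1$ because $s\ge 0$; everything else is straightforward bookkeeping.
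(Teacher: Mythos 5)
Your proposal is correct and follows essentially the same route as the paper, which proves the corollary only implicitly: the text preceding it derives the ratio monotonicity $\pen{x}/(x+1)\ge\pen{x+1}/(x+2)$ from concavity together with $\pen{-1}=0$, and the remark following it notes that non-linearity supplies the strict inequality at some $t$ (with $t=0$ in the extreme case $\ell=1$). Your chord inequality and the explicit extraction of $t$, including the vacuousness of the case $y=-1$, are just careful write-ups of those same two steps.
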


Note that we require $\pen{}$ to be non-linear. This is needed so that the
proper inequality in Equation~\ref{eq:ineq} is satisfied. This condition is
needed since $\penlin{}$ satisfies every other requirement.
Corollary~\ref{cor:concave} covers many penalty functions such as
$\pen{x} = \sqrt{x + 1}$ or $\pen{x} = \log(x + 2)$, for $ x \geq 0$.
Note that the function needs to be convex only in $[-1, \ell]$ for some $\ell \geq 1$.
At extreme, $\ell = 1$ in which case $t = 0$ satisfies the conditions in Proposition~\ref{prop:concave}.

\section{Selecting canonical solution}\label{sec:canon}

A rank assignment minimizing agony may not be unique. In fact, consider a graph
$G$ with no edges, then any ranking will have the optimal score of $0$.
Moreover, if the input graph $G$ is a DAG, then any topological sorting of
vertices will yield the optimal score of $0$.

In this section we introduce a technique to select a unique optimal solution.
The idea here is to make the ranks as small as possible without compromising
the optimality of the solution. More specifically, let us define the following
relationship between to rankings.

\begin{definition}
Given two rank assignments $r$ and $r'$, we write $r \preceq r'$
if $r(x) \leq r'(x)$ for every $x$.
\end{definition}

The following proposition states that there exists exactly one ranking with
the optimal score that is minimal with respect to the $\preceq$ relation.
We will refer to this ranking as \emph{canonical ranking}.

\begin{proposition}
\label{prop:unique}
Given a graph $G$ and an integer $k$,
there exists a unique optimal rank assignment $r$ such that
$r \preceq r'$ for every optimal rank assignment $r'$.
\end{proposition}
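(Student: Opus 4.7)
The plan is to identify the canonical ranking as the pointwise minimum
\[
    r^*(v) = \min\set{r(v) \mid r \text{ is an optimal rank assignment}}
\]
taken over all optimal rank assignments. Because ranks are integers in $[0, k-1]$, there are only finitely many candidate rankings, so this minimum is attained and $r^*$ is well-defined. Uniqueness and the $\preceq$-minimality property come for free once $r^*$ is shown to be optimal: any optimal $r$ with $r \preceq r'$ for every optimal $r'$ must in particular satisfy $r \preceq r^*$, while $r$ itself is among the rankings contributing to $r^*$ and hence $r^* \preceq r$, forcing $r = r^*$. The entire proof therefore reduces to showing that $r^*$ is itself optimal.

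To establish this, the central step will be a closure property: if $r_1$ and $r_2$ are two optimal rank assignments, then both the pointwise minimum $\min(r_1, r_2)$ and the pointwise maximum $\max(r_1, r_2)$ are again optimal. Both obviously satisfy the cardinality constraint, since the ranks stay in $[0, k-1]$. Given this closure property, starting from any pair of optimal rankings and iteratively taking pointwise minima reaches $r^*$ in finitely many steps, so $r^*$ inherits optimality.

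The heart of the argument is an edge-wise inequality: for any edge $(u, v)$ with $a_i = r_i(u)$ and $b_i = r_i(v)$,
\[
    \penlin{\min(a_1, a_2) - \min(b_1, b_2)} + \penlin{\max(a_1, a_2) - \max(b_1, b_2)} \leq \penlin{a_1 - b_1} + \penlin{a_2 - b_2}.
\]
A short case analysis on the relative orders of $a_1, a_2$ and $b_1, b_2$ shows that the two sides have the same sum of arguments and that the left-hand pair is no more spread out than the right-hand pair. Since $\penlin{}$ is convex, a pair with smaller spread but equal sum incurs a smaller total penalty, which delivers the inequality. Multiplying by $w(u, v)$ and summing over all edges gives
\[
    \score{G, \min(r_1, r_2), \penlin{}} + \score{G, \max(r_1, r_2), \penlin{}} \leq \score{G, r_1, \penlin{}} + \score{G, r_2, \penlin{}},
\]
and since the right-hand side is twice the optimum while each summand on the left is at least the optimum, both pointwise extrema are optimal, completing the closure step.

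The main obstacle will be verifying the edge-wise convexity inequality cleanly; the rest is bookkeeping. It is worth remarking that convexity of $\penlin{}$ is precisely what drives this inequality, so an analogous canonical-solution statement should not be expected for the non-convex $\pencons{}$ penalty associated with \fasprb.
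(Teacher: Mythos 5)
Your proof is correct, but it takes a genuinely different route from the paper's. The paper proves Propositions~\ref{prop:unique} and~\ref{prop:canon} together via the circulation machinery: it takes the ranking produced by \canon, uses Lemma~\ref{lem:dualupdate} to show the shifted dual $\pi - d$ still satisfies the slackness conditions (hence is optimal), and then shows by induction along the shortest path tree in the residual graph---using dual feasibility (Eq.~\ref{eq:dualcond}) on forward edges and complementary slackness (Eq.~\ref{eq:slack}) on edges carrying flow---that any other optimal dual dominates it pointwise. Your argument is instead purely combinatorial: the set of optimal rankings is closed under pointwise $\min$ and $\max$ because of the exchange inequality
\[
	\penlin{\min(a_1, a_2) - \min(b_1, b_2)} + \penlin{\max(a_1, a_2) - \max(b_1, b_2)} \leq \penlin{a_1 - b_1} + \penlin{a_2 - b_2},
\]
which does hold: in the only nontrivial cases the two sides have equal argument sums and the left-hand arguments are nested inside the right-hand ones, so convexity of $\penlin{}$ finishes it, and finiteness of the feasible set then gives the $\preceq$-least optimal ranking. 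What each approach buys: yours is self-contained, needs no LP duality, and visibly extends to any convex penalty (the lattice structure of the optimal set is the real content); the paper's proof is constructive and is what justifies the $\bigO{m + n\log n}$ algorithm \canon for actually computing the canonical ranking, which your existence argument does not provide. Your closing remark about $\pencons{}$ is slightly off as stated---the obstruction for \fasprb is that an analogous exchange inequality fails, not merely that $\pencons{}$ is non-convex in some formal sense---but this is an aside and does not affect the proof.
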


The proof of this proposition is given in Appendix.

Canonical ranking has many nice properties. The canonical solution for a graph
without edges assigns rank $0$ to all vertices. More generally, if $G =
(V, E)$ is a DAG, then the source vertices $S$ of $G$ will receive a rank of
$0$, the source vertices of $G(V \setminus S)$ will receive a rank of $1$, and
so on. For general graphs we have the following proposition.

\begin{proposition}
Let $r$ be the canonical ranking. Then $r$ has the least distinct rank values
among all optimal solutions.
\end{proposition}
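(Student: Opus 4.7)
The plan is to show that, for every optimal rank assignment $r'$, the canonical ranking $r$ uses at most as many distinct values as $r'$; since $r'$ is arbitrary, this yields the claim. So let $r'$ be any optimal rank assignment and let $a_1 < a_2 < \cdots < a_j$ be its distinct rank values, so $r'$ uses exactly $j$ ranks.

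First I would build a \emph{compressed} ranking $r''$ by setting $r''(v) = i - 1$ whenever $r'(v) = a_i$, so that $r''$ takes values in $\{0, 1, \ldots, j-1\}$. I claim that $r''$ is also optimal. For feasibility, since $r'$ respects the cardinality bound $k$, we have $j \leq k$, so $r''$ does as well. For the score, the $a_i$ are strictly increasing integers, hence $a_i - a_{i'} \geq i - i'$ whenever $i \geq i'$. Consequently, for every edge $(u, v) \in E$ with $r'(u) = a_i$ and $r'(v) = a_{i'}$, the compressed difference $r''(u) - r''(v) = i - i'$ has the same sign as $r'(u) - r'(v) = a_i - a_{i'}$ and satisfies $r''(u) - r''(v) \leq r'(u) - r'(v)$. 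Since $\penlin{x} = \max(x + 1, 0)$ is non-decreasing, this gives $\penlin{r''(u) - r''(v)} \leq \penlin{r'(u) - r'(v)}$ edge-by-edge, and therefore $\score{G, r'', \penlin{}} \leq \score{G, r', \penlin{}}$. Optimality of $r'$ forces $r''$ to be optimal as well.

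Next I would invoke Proposition~\ref{prop:unique}: since $r$ is canonical and $r''$ is optimal, $r \preceq r''$, so $r(v) \leq r''(v) \leq j - 1$ for every $v \in V$. Combined with $r(v) \geq 0$ from the problem definition, the canonical ranking $r$ takes its values in $\{0, 1, \ldots, j - 1\}$ and hence uses at most $j$ distinct ranks. Since $r'$ was arbitrary, we conclude that $r$ has no more distinct ranks than any other optimal ranking.

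I do not foresee a substantial obstacle beyond the bookkeeping in the compression step, namely verifying the edge-by-edge inequality on $\penlin{}$ in both the forward and the backward cases; this is routine because the compression map is strictly order-preserving and only shrinks positive gaps between consecutive used ranks.
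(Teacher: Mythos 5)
Your argument is correct, and since the paper states this proposition without supplying a proof, your compression-plus-minimality argument is exactly the natural way to fill that gap: compress the used ranks of an arbitrary optimal $r'$ down to $\{0,\ldots,j-1\}$ without increasing the score, then invoke Proposition~\ref{prop:unique} to squeeze the canonical $r$ between $0$ and $j-1$. One small imprecision: the inequality $r''(u)-r''(v) \leq r'(u)-r'(v)$ that you assert ``for every edge'' actually fails for strictly forward edges (e.g.\ $a_1=0$, $a_2=5$ gives $-1 > -5$), so you cannot conclude purely from monotonicity of $\penlin{}$. What saves the step is the sign-preservation you also note: when $i<i'$ both differences are integers $\leq -1$, so both penalties are $0$; when $i\geq i'$ the gap genuinely shrinks and monotonicity applies. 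You flag this case split at the end, so it is a wording issue rather than a gap, but the clean statement of the edge-by-edge claim should be $\penlin{r''(u)-r''(v)} \leq \penlin{r'(u)-r'(v)}$, proved by those two cases, not the difference inequality itself. The remaining steps --- feasibility ($j\leq k$), optimality of $r''$ by sandwiching, and $0 \leq r(v) \leq r''(v) \leq j-1$ --- are all sound.
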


In other words, the partition of $V$ corresponding to the canonical ranking has
the smallest number of groups.

Our next step is to provide an algorithm for discovering canonical ranking.  In
order to do so we assume that we use Orlin's algorithm and obtain the flow $f$
and the dual $\pi$, described in
Problem~\ref{prb:uncirc}~and~\ref{prb:dualuncirc}.  We construct the residual
graph $R$, as described in Section~\ref{sec:speedup}, edges weighted by $t(e) +
\pi(w) - \pi(v)$. We then compute, $d(v)$ which is the shortest path in $R$ from $\alpha$
to $v$. Finally, we set $r^*(v) = r(v) - d(v)$.

Once, we have computed the residual graph, we simply compute the shortest path
distance from $q$ and subtract the distance from the optimal ranking, see
Algorithm~\ref{alg:canon}.

\begin{algorithm}
\caption{$\canon(G)$, computes canonical optimal solution}
\label{alg:canon}
$f, \pi \define $ optimal flow and dual of \uncircprb\;
$R \define $ residual graph\;
$d(v) \define $ shortest weighted distance from $\alpha$ to $v$ in $R$\;
\lForEach{$v \in V$} {$r^*(v) \define r(v) - d(v)$}
\Return $r^*$\;
\end{algorithm}

\begin{proposition}
\label{prop:canon}
Algorithm \canon returns canonical solution with optimal score.
\end{proposition}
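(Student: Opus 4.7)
The plan is to verify three properties of $r^*$ in sequence: optimality, validity of the cardinality constraint $0 \le r^*(v) \le k - 1$, and the canonical property $r^* \preceq r'$ for every optimal $r'$. The first two fall out almost immediately from lemmas already established; the main obstacle will be the third.

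First I would set $\pi^*(v) = \pi(v) - d(v)$. Since $d(\alpha) = 0$ we have $\pi^*(\alpha) = \pi(\alpha)$, so $r^*(v) = \pi^*(v) - \pi^*(\alpha)$. Applying Lemma~\ref{lem:dualupdate} with the single-element source set $S = \set{\alpha}$ shows that $(f, \pi^*)$ still satisfies dual feasibility (Eq.~\ref{eq:dualcond}) and the slackness conditions (Eq.~\ref{eq:slack}). This is enough to conclude that $\pi^*$ is an optimal dual, and hence that $r^*$ is an optimal ranking. Validity then follows by the same chain of inequalities used inside Lemma~\ref{lem:dualbound}: dual feasibility of $\pi^*$ along the edges incident to $\alpha$ and $\omega$ forces $\pi^*(\alpha) \le \pi^*(v) \le \pi^*(\omega) \le \pi^*(\alpha) + k - 1$, so $r^*(v)$ lies in the allowed range.

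The hard part is the canonical property. Let $r'$ be any other optimal ranking and $\pi'$ its corresponding optimal dual; since the bias values sum to zero, shifting $\pi'$ by a constant does not affect the dual objective, so I may assume $\pi'(\alpha) = \pi(\alpha)$. Define $h(v) = \pi(v) - \pi'(v)$, so that $h(\alpha) = 0$ and the desired inequality $r^*(v) \le r'(v)$ is equivalent to $h(v) \le d(v)$. The central claim is that $h$ satisfies a triangle-like inequality with respect to the residual weights $w_R(v, w) = t(e) + \pi(w) - \pi(v)$, namely $h(w) - h(v) \le w_R(v, w)$ for every $(v, w)$ in the residual graph. For a forward residual edge this is simply dual feasibility of $\pi'$ rearranged. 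For a backward residual edge arising from a positive-flow original edge, complementary slackness of \emph{both} pairs $(f, \pi)$ and $(f, \pi')$---the latter holding because $f$ and $\pi'$ are both optimal and strong duality therefore forces complementarity---makes the weight $w_R$ and the increment $h(w) - h(v)$ vanish simultaneously. Once this subadditivity is in hand, a standard shortest-path relaxation starting from $h(\alpha) = 0$ gives $h(v) \le d(v)$ for every reachable $v$, which is exactly what we need.
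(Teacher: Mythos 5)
Your proposal is correct and follows essentially the same route as the paper's proof: optimality of $r^*$ via Lemma~\ref{lem:dualupdate}, and the canonical property by comparing $\pi^* = \pi - d$ with any other optimal dual $\pi'$ along the residual graph, using dual feasibility of $\pi'$ on forward residual edges and complementary slackness of $(f,\pi')$ on reverse edges; your shortest-path relaxation of $h = \pi - \pi'$ against $d$ is just a repackaging of the paper's induction over the shortest-path tree. If anything you are slightly more careful than the paper, since you explicitly justify why $f$ and $\pi'$ are complementary (both optimal, hence slack by strong duality) and you verify the cardinality bound on $r^*$.
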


We give the proof of this proposition in Appendix.

Proposition states that to compute the canonical ranking it is enough to form the
residual graph, compute the shortest edge distances $d(v)$ from the vertex $q$,
and subtract them from the input ranking. The computational complexity of
these steps is $\bigO{m + n \log n}$.
Moveover, this proposition holds for a more general
convex penalty function, described in Section~\ref{sec:convex}. 

\section{A fast divide-and-conquer heuristic}
In this section we propose a simple and fast divide-and-conquer approach.  The
main idea is as follows: We begin with the full set of vertices and we split
them into two halves: the left half will have smaller ranks than the right half.
We then continue splitting the smaller sets recursively, and obtain a tree. We show that this can be
done in $\bigO{m \log n}$ time. If we are given a cardinality constraint $k$, then
we prune the tree using dynamic program that runs in $\bigO{k^2 n}$ time.
We also propose a variant, where we perform SCC decomposition, and perform
then divide-and-conquer on individual components. To enforce the cardinality constraint in
this case, we need additional $\bigO{km \log n + k^2n}$ time.

\subsection{Constructing a tree by splitting vertices}

As mentioned above, our goal is to construct a tree $T$.  This tree is binary
and ordered, that is, each non-leaf vertex has a left child and a right child.

Each leaf $\alpha$\footnote{we will systematically denote the vertices in $T$
with Greek letters} in this tree $T$ is associated with a set of vertices that
we denote by $V_\alpha$. Every vertex of the input graph should belong to some
leaf, and no two leaves share a vertex. If $\alpha$ is a non-leaf, then
we define $V_\alpha$ to be the union of vertices related to each descendant
leaf of $\alpha$. We also define $E_\alpha$ to be the edges in $E$ that have both
endpoints in $V_\alpha$.

Since the tree is ordered, we can sort the leaves, left first.  Using this
order, we define a rank $r(v)$ to be the rank of the leaf in which $v$ is
included. We define $\score{T} = \score{r}$.

Our goal is to construct $T$ with good $\score{T}$.  We do this by splitting
$V_\alpha$ of a leaf $\alpha$ to two leaves such that the agony is minimized.

Luckily, we can find the optimal split efficiently. Let us first express the
gain in agony due to a split. In order to do so, assume a tree $T$, and let $\alpha$ be a
leaf. Let $X$ be the vertices in leaves that are left to $\alpha$, and let $Z$
be the vertices in leaves that are right to $\alpha$.

We define $\back{\alpha}$ to be the total weight of the edges from $Z$ to $X$,
\[
	\back{\alpha} = \sum_{(z, x) \in E \atop x \in X, z \in Z} w(z, x)\quad.
\]
Let $y$ be a vertex in $V_\alpha$. We define
\[
	\inback{y; \alpha} = \sum_{(z, y) \in E \atop z \in Z} w(z, y) \quad\text{and}\quad 
	\outback{y; \alpha} = \sum_{(y, x) \in E \atop x \in X} w(y, x) 
\]
to be the total weight of the backward edges adjacent to $y$ and $Z$ or $X$.
We also define the total weights
\[
	\inback{\alpha} = \sum_{y \in V_\alpha} \inback{y; \alpha} \quad\text{and}\quad 
	\outback{\alpha} = \sum_{y \in V_\alpha} \outback{y; \alpha} \quad.
\]
Let
\[
	\flux{y; \alpha} = \sum_{(x, y) \in E_\alpha} w(x, y) - \sum_{(y, x) \in E_\alpha} w(y, x)
\]
to be the total weight of incoming edges minus the total weight of the outgoing edges.

Finally, let us define
\[
	\diff{y; \alpha} = \flux{y; \alpha} + \inback{y; \alpha} - \outback{y; \alpha}\quad.
\]

We can now use these quantities to express how a split changes the score.

\begin{proposition}
\label{prop:split}
Let $\alpha$ be a leaf of a tree $T$. 
Assume a new tree $T'$, where we have split $\alpha$ to two leaves. Let $Y_1$ be the
vertex set of the new left leaf, and $Y_2$ the vertex set of the new right leaf. Then
the score difference is
\[
	\score{T'} - \score{T} = \back{\alpha} + \inback{\alpha} - \sum_{y \in Y_2} \diff{y; \alpha} 
\]
that can be rewritten as
\[
	\score{T'} - \score{T} = \back{\alpha} + \outback{\alpha} + \sum_{y \in Y_1} \diff{y; \alpha}\quad. 
\]
\end{proposition}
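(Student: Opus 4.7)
The plan is to compute $\score{T'}-\score{T}$ by tracking, edge by edge, how each penalty changes when $\alpha$ is split. First I would fix notation: let $X$ be the vertices in leaves to the left of $\alpha$ and $Z$ those in leaves to its right. Under $T$ every $y\in V_\alpha$ has the same rank $r_\alpha$. Under $T'$ the vertices in $X$ keep their ranks; $Y_1$ inherits the rank $r_\alpha$; $Y_2$ receives rank $r_\alpha+1$; and because one new leaf has been inserted into the left-to-right order, each $z\in Z$ has its rank bumped up by exactly one.

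Next I would partition the edges of $G$ by the location of their endpoints in $\{X,Y_1,Y_2,Z\}$ and compute the change in penalty per class. Edges entirely outside $V_\alpha\cup Z$, forward edges from $X$ into $V_\alpha$, and forward edges from $V_\alpha$ into $Z$ all remain forward and contribute $0$. Edges $(z,x)$ with $z\in Z$, $x\in X$ each gain one unit of penalty because $r_z$ increases, accumulating to $\back{\alpha}$. Edges $(z,y)$ with $y\in Y_1$ each gain one unit (total $\sum_{y\in Y_1}\inback{y;\alpha}$), while those with $y\in Y_2$ are unchanged because $z$ and $y$ shift in tandem. Symmetrically, edges $(y,x)$ with $y\in Y_2$, $x\in X$ each gain one unit (total $\sum_{y\in Y_2}\outback{y;\alpha}$), while those with $y\in Y_1$ are unchanged. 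For edges entirely inside $V_\alpha$, those lying within $Y_1$ or within $Y_2$ keep their penalty of $1$; an edge from $Y_1$ to $Y_2$ becomes forward and loses $1$; an edge from $Y_2$ to $Y_1$ gains $1$.

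Summing these contributions gives
\[
\score{T'}-\score{T}=\back{\alpha}+\sum_{y\in Y_1}\inback{y;\alpha}+\sum_{y\in Y_2}\outback{y;\alpha}+W_{21}-W_{12},
\]
where $W_{ij}$ denotes the total weight of edges from $Y_i$ to $Y_j$ inside $V_\alpha$. The remaining step is purely algebraic. Splitting $\inback{\alpha}=\sum_{y\in Y_1}\inback{y;\alpha}+\sum_{y\in Y_2}\inback{y;\alpha}$ and observing that $\sum_{y\in Y_2}\flux{y;\alpha}=W_{12}-W_{21}$ (since in each $\flux{y;\alpha}$ the internal $Y_2$-to-$Y_2$ edges cancel, leaving only the $Y_1\leftrightarrow Y_2$ crossings), the right-hand side rewrites as $\back{\alpha}+\inback{\alpha}-\sum_{y\in Y_2}\diff{y;\alpha}$, which is the first claimed identity. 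The second identity follows by the mirror argument applied to $Y_1$, or equivalently by substituting the global identity $\sum_{y\in V_\alpha}\diff{y;\alpha}=\inback{\alpha}-\outback{\alpha}$, valid because the $\flux{\cdot;\alpha}$ terms telescope over $V_\alpha$.

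The hard part, if any, is bookkeeping: there are several edge classes depending on endpoint membership in $X,Y_1,Y_2,Z$, and one has to be careful about which endpoint's rank shifts together with which when moving from $T$ to $T'$. Once the case analysis is laid out, the final identity drops out by direct algebraic regrouping of the $\inback{}$, $\outback{}$ and $\flux{}$ sums.
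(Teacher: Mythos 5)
Your proof is correct and takes essentially the same approach as the paper's: partition vertices into left-of-$\alpha$, $Y_1$, $Y_2$, right-of-$\alpha$, carry out a case analysis of how each edge class's penalty changes under the split, and then regroup algebraically using the telescoping of $\flux{\cdot;\alpha}$ to obtain the two stated forms. (You even fill in the final algebraic regrouping, which the paper leaves implicit; the only tiny omission is the explicit mention of $Z$-to-$Z$ edges, which shift in tandem and contribute nothing.)
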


\begin{proof}

We will show that
\begin{equation}
\label{eq:scorediff}
	\score{T'} - \score{T} = \back{\alpha}
	+ \sum_{y \in Y_1} \inback{y; \alpha}
	+ \sum_{y \in Y_2} \outback{y; \alpha}
	+ \sum_{y \in Y_1} \flux{y; \alpha}\quad. 
\end{equation}
Equation~\ref{eq:scorediff} can be then rewritten to the forms given in the proposition. 

Let $Y_0$ be the set of all vertices to the left of $\alpha$.
Let $Y_3$ be the set of all vertices to the right of $\alpha$.
Note that $Y_0 \cup Y_1 \cup Y_2 \cup Y_3 = V$.
Write $t(i, j)$ to be the total weight of edges from $Y_i$ to $Y_j$.
Also, write $c(i, j)$ to be the total change in the penalty of edges from $Y_i$ to $Y_j$
due to a split.

Note that $c(0, 1) = c(0, 2) = c(0, 3) = c(1, 3) = c(2, 3) = 0$ since these are 
forward edges that remain forward. Also, $c(0, 0) = c(1, 1) = c(2, 2) = c(3, 3) = 0$
since the rank difference of these edges has not changed.
For the same reason, $c(3, 2) = c(1, 0) = 0$.

Case (\emph{i}):
Since a split shifts $Y_3$ by one rank, $c(3, 0) = t(3, 0)$
and $c(3, 1) = t(3, 1)$.
Case (\emph{ii}):
Since a split shifts $Y_2$ by one rank, $c(2, 0) = t(2, 0)$.
Case (\emph{iii}):
The penalty of an edge from $Y_2$ to $Y_1$ increases by $w(e)$.
Summing over these edges leads to $c(2, 1) = t(2, 1)$.
Case (\emph{iv}):
The penalty of an edge from $Y_1$ to $Y_2$ decreases by $w(e)$.
Summing over these edges leads to $c(1, 2) = -t(1, 2)$.

This leads to
\[
	\score{T'} - \score{T} = \sum_{i, j} c(i, j) = t(3, 0) + t(3, 1) + t(2, 0) + t(2, 1) - t(1, 2)\quad.
\]
First, note that 
\[
	t(3, 0) = \back{\alpha}, \quad
	t(3, 1) = \sum_{y \in Y_1} \inback{y; \alpha}, \quad
	t(2, 0) = \sum_{y \in Y_2} \outback{y; \alpha} \quad.
\]

To express $t(2, 1) - t(1, 2)$, we can write
\[
\begin{split}
	\sum_{y \in Y_1} \flux{y; \alpha} & = \sum_{(x, y) \in E_\alpha \atop y \in Y_1} w(x, y) - \sum_{(y, x) \in E_\alpha \atop y \in Y_1} w(y, x) \\
	&  = t(1, 1) + t(2, 1) -  t(1, 1) - t(1, 2) = t(2, 1) - t(1, 2) \quad.
\end{split}
\]
This proves Eq.~\ref{eq:scorediff}, and the proposition.\qed
\end{proof}

Proposition~\ref{prop:split} gives us a very simple algorithm for finding an
optimal split: A vertex $y$ for which $\diff{y} \geq 0$ should be in the
right child, while the rest vertices should be in the left child.
If the gain is negative, then we have improved the score by splitting.
However, it is possible to have positive gain, in which case we should not do a split at all.
Note that the gain does not change if we do a split in a different leaf. This allows
to treat each leaf independently, and not care about the order in which leaves are tested.

The difficulty with this approach is that if we simply recompute the quantities
every time from the scratch, we cannot guarantee a fast computation time. This
is because if there are many uneven splits, we will enumerate over some edges
too many times. In order to make the algorithm provably fast, we argue that we
can  detect which of the new leaves has \emph{fewer} adjacent edges, and we only enumerate
over these edges.

Let us describe the algorithm in more details.  We start with
the full graph, but as we split the vertices among leaves, we only keep the
edges that are intra-leaf; we delete any cross-edges between different leaves.
As we delete edges, we also maintain 4 counters for each vertex, $\flux{y; \alpha}$,
$\inback{y; \alpha}$, $\outback{y; \alpha}$, and the unweighted degree, $\deg(y)$, where $\alpha$ is the leaf containing $y$. 

For each leaf $\alpha$, we maintain four sets of vertices, 
\begin{eqnarray*}
	N_\alpha   & = & \set {y \in V_\alpha \mid \deg(y; \alpha) > 0, \diff{y; \alpha} < 0}, \\
	P_\alpha   & = & \set {y \in V_\alpha \mid \deg(y; \alpha) > 0, \diff{y; \alpha} \geq 0}, \\
	N_\alpha^* & = & \set {y \in V_\alpha \mid \deg(y; \alpha) = 0, \diff{y; \alpha} < 0}, \\
	P_\alpha^* & = & \set {y \in V_\alpha \mid \deg(y; \alpha) = 0, \diff{y; \alpha} \geq 0}\quad.
\end{eqnarray*}
The reason why we treat vertices with zero degree differently is so that we can
bound $\abs{N_\alpha}$ or $\abs{P_\alpha}$ by the number of adjacent edges.

Note that we maintain these sets only for leaves.  To save computational time,
when a leaf is split, its sets are reused by the new leaves, and in the process
are modified. 

In addition, we maintain the following counters
\begin{enumerate}
\item the total weights $\back{\alpha}$, $\inback{\alpha}$, $\outback{\alpha}$, and
\item in order to avoid enumerating over $N_\alpha^*$ and $P_\alpha^*$ when computing the gain, we also maintain the counters
\[
	\dbackone{\alpha} = \sum_{y \in N_\alpha} \inback{y} - \outback{y}, \quad \dbacktwo{\alpha} = \sum_{y \in P_\alpha} \inback{y} - \outback{y}\quad.
\]
\end{enumerate}

We also maintain $\gain{\alpha}$ for non-leaves, which is the agony gain of splitting $\alpha$.
We will use this quantity when we prune the tree to enforce the cardinality constraint.

If we decide to split, then we can do this trivially: according to
Proposition~\ref{prop:split} $N_\alpha$ and $N_\alpha^*$ should be in the left child while
$P_\alpha$ and $P_\alpha^*$ should be in the right child. Our task is to compute the
gain, and see whether we should split the leaf, and compute the structures for the new leaves.

Given a leaf $\alpha$, our first step is to determine whether $N_\alpha$ or $P_\alpha$ has fewer edges.
More formally, we define $\adj{X}$ to be the edges that have \emph{at least one}
end point in $X$. We then need to compute whether $\abs{\adj{N_\alpha}} \leq \abs{\adj{P_\alpha}}$.
This is done by cleverly enumerating over elements of $N_\alpha$ and $P_\alpha$ simultaneously.
The pseudo-code is given in Algorithm~\ref{alg:leftsmaller}.

\begin{algorithm}[ht!]
\caption{$\algcount(\alpha)$, tests whether $\abs{\adj{N_\alpha}} \leq \abs{\adj{P_\alpha}}$.}
\label{alg:leftsmaller}
	$Y_1 \define N_\alpha, Y_2 \define P_\alpha$\; 
	$c_1 \define 0$; $c_2 \define 0$\;

	\Until {($Y_1 = \emptyset$ \AND $c_1 \leq c_2$) \OR ($Y_2 = \emptyset$ \AND $c_1 \geq c_2$)} {
		\eIf {$c_1 \leq c_2$} {
			$y \define $ vertex in $Y_1$; delete $y$ from $Y_1$\;
			$c_1 \define c_1 + \deg(y)$\;
		}{
			$y \define $ vertex in $Y_2$; delete $y$ from $Y_2$\;
			$c_2 \define c_2 + \deg(y)$\;
		}
	}
	\Return $Y_1 = \emptyset$ \AND $c_1 \leq c_2$\;
\end{algorithm}

\begin{proposition}
Let $m_1 = \abs{\adj{N_\alpha}}$ and $m_2 = \abs{\adj{P_\alpha}}$. 
Then \algcount returns true if and only if $m_1 \leq m_2$ in $\bigO{\min (m_1, m_2)}$ time. 
\end{proposition}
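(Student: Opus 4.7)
\bigskip

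\noindent\textbf{Proof plan for the final proposition.}

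The plan is to compare the weighted sums the algorithm actually manipulates, namely $s_1 = \sum_{y \in N_\alpha} \deg(y)$ and $s_2 = \sum_{y \in P_\alpha} \deg(y)$, to the quantities $m_1$ and $m_2$, and then to read off both correctness and the running time from a simple invariant of the main loop. The first step is a counting lemma. Every edge in $E_\alpha$ has both endpoints in $V_\alpha$ with positive degree, hence in $N_\alpha \cup P_\alpha$, so it is either intra-$N_\alpha$, intra-$P_\alpha$, or crosses the two sets. Using this trichotomy, one sees that
\[
m_1 = \#(\text{intra-}N) + \#(\text{cross}), \qquad m_2 = \#(\text{intra-}P) + \#(\text{cross}),
\]
while $s_1 = 2\#(\text{intra-}N) + \#(\text{cross})$ and $s_2 = 2\#(\text{intra-}P) + \#(\text{cross})$. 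Subtracting shows $s_1 - s_2 = 2(m_1 - m_2)$, so $s_1 \leq s_2$ if and only if $m_1 \leq m_2$; also $s_i \leq 2 m_i$.

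For correctness, observe that whenever $Y_1$ is eventually emptied we have $c_1 = s_1$ (and analogously $c_2 = s_2$ when $Y_2$ empties), so the algorithm's termination check is an honest comparison of (possibly partial) sums against the totals. If $s_1 < s_2$, then terminating with $Y_2=\emptyset$ is impossible, since that would give $c_2 = s_2 > s_1 \geq c_1$, contradicting the required $c_1 \geq c_2$; so the loop must exit via $Y_1=\emptyset$ and $c_1 \leq c_2$, returning \True. The case $s_1 > s_2$ is symmetric, and $s_1 = s_2$ forces both sets to empty with $c_1 = c_2$, again returning \True. Combined with the counting lemma, this gives the ``if and only if'' for $m_1 \leq m_2$.

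For the running time, assume without loss of generality that $s_1 \leq s_2$, so the algorithm terminates in the first branch. The number of picks from $Y_1$ is at most $|N_\alpha|$, and since each degree is at least $1$, this is at most $s_1$. The key step is to bound the number of picks $|P'|$ from $Y_2$: just before \emph{every} pick from $Y_2$ the guard $c_1 > c_2$ holds, so in particular just before the last such pick the accumulated $c_2$ equals the sum of degrees of the previously consumed $|P'|-1$ vertices, giving $|P'| - 1 \leq c_2 < c_1 \leq s_1$, hence $|P'| \leq s_1$. Adding the two bounds, the total number of loop iterations is $O(s_1) = O(m_1) = O(\min(m_1, m_2))$, and each iteration is constant work (one degree lookup, one counter update, one set deletion).

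The main obstacle I foresee is the accounting in the running-time argument: one has to be careful that the ``just before the last pick from $Y_2$'' bound really gives $|P'| \leq s_1$ rather than something like $s_1 + \deg_{\max}$, which would be too weak because a single high-degree vertex could otherwise blow up the estimate. The point that saves us is that only the vertices picked \emph{before} the last one contribute to the pre-pick value of $c_2$, so the $+\deg_{\max}$ term never enters the vertex count. Everything else is bookkeeping based on the two-pointer invariant that we always extend whichever side currently has the smaller counter.
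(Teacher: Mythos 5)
Your proof is correct and follows the same overall strategy as the paper's: read off correctness from the loop's exit conditions and bound the number of iterations by the side with fewer edges. The one substantive difference is your counting lemma distinguishing $s_i = \sum_y \deg(y)$ from $m_i = \abs{\adj{\cdot}}$; this is a genuine improvement, since the paper's proof simply asserts $c_1 = m_1$ once $Y_1$ empties, which conflates the degree sum with the number of adjacent edges (intra-set edges are counted twice in $c_1$ but once in $m_1$). Your identity $s_1 - s_2 = 2(m_1 - m_2)$ shows the comparison the algorithm actually performs is still equivalent to comparing $m_1$ with $m_2$, and $s_i \leq 2 m_i$ keeps the running-time bound intact, so your version closes a small gap in the published argument rather than taking a different route. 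The running-time accounting (bounding the picks from $Y_2$ by the pre-pick value of $c_2$) is also sound and matches the spirit of the paper's ``$2m_1$ iterations then at most $m_1$ more'' argument.
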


\begin{proof}
Assume that the algorithm returns true, so $Y_1 = \emptyset$ and $c_1 \leq c_2$.
Since $Y_1 = \emptyset$, then $c_1 = m_1$, which leads to $m_1 = c_1 \leq c_2 \leq m_2$. 
Assume that the algorithm returns false.
Then the while loop condition guarantees that $Y_2 = \emptyset$ and $c_1 \geq c_2$.
Since $Y_2 = \emptyset$, then $c_2 = m_2$. Either $Y_1 \neq \emptyset$ or $c_1 > c_2$.
If latter, then $m_2 = c_2 < c_1 \leq m_1$.  If former, then $m_2 = c_2 \leq c_1 < m_1$.
This proves the correctness.

To prove the running time, first note, since there are no singletons, each
iterations will increase either $c_1$ or $c_2$. Assume that $m_1 \leq m_2$.  If we
have not terminated after $2m_1$ iterations, then we must have $m_1 < c_2$.
Since $c_1 \leq m_1 < c_2$, we will then only increase $c_1$. This requires at
most $m_1$ iterations (actually, we can show that we only need 1 more
iteration). In conclusion, the algorithm runs in $\bigO{m_1}$ time. The case for $m_1 \geq m_2$ is similar.
\qed
\end{proof}

We can now describe our main algorithm, given in Algorithms~\ref{alg:split},~\ref{alg:left},~and~\ref{alg:right}.
\algsplit is given a leaf $\alpha$. As a first step, \algsplit determines which side has fewer edges using
\algcount. After that it computes the gain, and checks whether a split is profitable. If it is, then it
calls either \algleft or \algright, depending which one is faster. These two algorithms perform the actual split
and updating the structures, and then recurse on the new leaves.

\begin{algorithm}[ht!]
\caption{$\algsplit(\alpha)$, checks if we can improve by splitting $\alpha$, and decides which side is more economical to split. Calls
either \algleft or \algright to update the structures.}
\label{alg:split}
	\eIf {$\algcount(\alpha)$} {
		$g \define \back{\alpha} + \outback{\alpha} + \dbackone{\alpha} + \sum_{y \in N_\alpha} \diff{y; \alpha}$\; 
		\lIf{$g < 0$} {
			$\algleft(\alpha)$;
			$\gain{\alpha} \define g$
		}
	}{
		$g \define \back{\alpha} + \inback{\alpha} - \dbacktwo{\alpha} - \sum_{y \in P_\alpha} \diff{y; \alpha}$\; 
		\lIf{$g < 0$} {
			$\algright(\alpha)$;
			$\gain{\alpha} \define g$
		}
	}
\end{algorithm}

\begin{algorithm}[ht!]
\caption{$\algleft(\alpha)$, performs a single split using $N_\alpha$. Recurses to \algsplit for further splits.}
\label{alg:left}

	create a new leaf $\beta$ with sets $N_\beta = N_\alpha$, $P_\beta = \emptyset$, $N_\alpha^* = P_\alpha^*$, and $P_\alpha^* = \emptyset$\;
	$\back{\beta} \define \back{\alpha} + \outback{\alpha}$\;

	create a new leaf $\gamma$ with sets $N_\gamma = \emptyset$, $P_\gamma = P_\gamma$, $N_\gamma^* = \emptyset$, and $P_\gamma^* = P_\gamma^*$\;

	$\back{\gamma} \define \back{\alpha}$\;

	\ForEach {$x \in N_\alpha$} {
		$\back{\gamma} \define \back{\gamma} + \inback{x}$\;
		$\back{\beta} \define \back{\beta} - \outback{x}$\;

		delete edges $(x, z)$ or $(z, x)$ for any $z \in P_\alpha$, and update $\flux{}$, $\deg{}$, $\inback{}$, $\outback{}$ \;
	}
	check the affected vertices and update $P_\beta$, $N_\beta$, $P_\beta^*$, $N_\beta^*$, $\dbackone{\beta}$ and $\dbacktwo{\beta}$\;
	check the affected vertices and update $P_\gamma$, $N_\gamma$, $P_\gamma^*$, $N_\gamma^*$, $\dbackone{\gamma}$ and $\dbacktwo{\gamma}$\;

	$\algsplit(\beta)$; $\algsplit(\gamma)$\;
\end{algorithm}

\begin{algorithm}[ht!]
\caption{$\algright(\alpha)$, performs a single split using $P_\alpha$. Recurses to \algsplit for further splits.}
\label{alg:right}

	create a new leaf $\beta$ with sets $N_\beta = N_\alpha$, $P_\beta = \emptyset$, $N_\alpha^* = P_\alpha^*$, and $P_\alpha^* = \emptyset$\;
	$\back{\beta} \define \back{\alpha}$\; 

	create a new leaf $\gamma$ with sets $N_\gamma = \emptyset$, $P_\gamma = P_\gamma$, $N_\gamma^* = \emptyset$, and $P_\gamma^* = P_\gamma^*$\;

	$\back{\gamma} \define \back{\alpha} + \inback{\alpha}$\;

	\ForEach {$x \in P_\alpha$} {
		$\back{\gamma} \define \back{\gamma} - \inback{x}$\;
		$\back{\beta} \define \back{\beta} + \outback{x}$\;

		delete edges $(x, z)$ or $(z, x)$ for any $z \in P_\alpha$, and update $\flux{}$, $\deg{}$, $\inback{}$, $\outback{}$ \;
	}
	check the affected vertices and update $P_\beta$, $N_\beta$, $P_\beta^*$, $N_\beta^*$, $\dbackone{\beta}$ and $\dbacktwo{\beta}$\;
	check the affected vertices and update $P_\gamma$, $N_\gamma$, $P_\gamma^*$, $N_\gamma^*$, $\dbackone{\gamma}$ and $\dbacktwo{\gamma}$\;

	$\algsplit(\beta)$; $\algsplit(\gamma)$\;
\end{algorithm}

Let us next establish the correctness of the algorithm. We only need to show
that during the split the necessary structures are maintained properly.
We only show it for \algleft, as the argument is exactly the same for \algright.

\begin{proposition}
\algleft maintains the counters and the vertex sets.
\end{proposition}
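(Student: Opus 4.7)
The plan is to classify each edge adjacent to $V_\alpha$ by how its role changes under the split into $\beta$ (whose vertex set is $N_\alpha \cup N_\alpha^*$, on the left) and $\gamma$ (whose vertex set is $P_\alpha \cup P_\alpha^*$, on the right). Four cases arise: edges internal to $N_\alpha$, edges internal to $P_\alpha$, forward cross-edges from $N_\alpha$ to $P_\alpha$, and backward cross-edges from $P_\alpha$ to $N_\alpha$. Since vertices in $N_\alpha^*$ and $P_\alpha^*$ have degree zero, they do not participate in any cross-edge and so need no per-vertex updates.

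For the aggregate counter $\back{\beta}$, I would observe that the edges it counts decompose into those already counted by $\back{\alpha}$ plus new edges from $\gamma$ into the left of $\alpha$, which are exactly the outgoing backward edges from $P_\alpha$. Using
\[
    \sum_{y \in P_\alpha} \outback{y; \alpha} = \outback{\alpha} - \sum_{y \in N_\alpha} \outback{y; \alpha},
\]
this equals what the algorithm accumulates: start from $\back{\alpha} + \outback{\alpha}$ and subtract $\outback{x}$ for each $x \in N_\alpha$. A symmetric argument handles $\back{\gamma}$, using that the newly relevant edges are the backward edges from right-of-$\alpha$ into $N_\alpha$.

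For the per-vertex counters $\flux{y}$, $\inback{y}$, $\outback{y}$, $\deg(y)$, I would verify case by case that each cross-edge ``deletion'' shifts its weight to the right place. A backward cross-edge $(z, x)$ with $z \in P_\alpha$ and $x \in N_\alpha$ exits $\flux{x}$ (where it had been an incoming internal edge of $\alpha$) and enters $\inback{x}$, while $\outback{z}$ picks up $w$; a short calculation then gives $\diff{x; \beta} = \diff{x; \alpha}$ and $\diff{z; \gamma} = \diff{z; \alpha}$. A forward cross-edge $(x, z)$ in contrast shifts $\diff{x}$ by $+w$ and $\diff{z}$ by $-w$, so only endpoints of forward cross-edges can change the sign of $\diff{}$ and might need to move between $N$ and $P$.

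The main obstacle is verifying that the set reclassification and the $\dbackone{}$, $\dbacktwo{}$ sums can be repaired correctly within the claimed time budget. The key observation is that every vertex whose $\diff{}$ value actually changes is incident to a deleted cross-edge, so the final ``check the affected vertices'' step touches only $O(\abs{\adj{N_\alpha}})$ vertices and can incrementally repair $N_\beta, P_\beta, N_\beta^*, P_\beta^*, \dbackone{\beta}, \dbacktwo{\beta}$ and their $\gamma$-counterparts. The most delicate bookkeeping will be the edge cases where a vertex moves between $N$ and $N^*$ (or between $P$ and $P^*$) because its degree has dropped to zero during the sweep; once these are handled consistently with the $\diff{}$-sign transitions, the remaining verification is routine.
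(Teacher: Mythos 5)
Your proposal takes essentially the same approach as the paper: derive the update formulas for $\back{\beta}$ and $\back{\gamma}$ by decomposing them into $\back{\alpha}$ plus the new backward contributions from the sibling's vertices, observe that per-vertex counters ($\flux{}$, $\inback{}$, $\outback{}$, $\deg$) change only for vertices incident to a deleted cross-edge, and treat the resulting set reclassification and $\dbackone{}/\dbacktwo{}$ repairs as routine bookkeeping over the affected vertices. Your extra case analysis showing that $\diff{}$ is invariant at endpoints of post-split backward cross-edges and shifts by $\pm w$ only at endpoints of post-split forward cross-edges is a useful refinement of the paper's terser ``only cross-edge-adjacent vertices need checking,'' but it is the same argument.
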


\begin{proof}
During a split, our main task is to remove the cross edges between $N_\alpha$ and
$P_\alpha$ and make sure that all the counters and the vertex sets in the new leaves
are correct.

Let $y \in V_\beta$.
If there is no cross edge attached to a vertex $y$ in $E_\alpha$, then $d(y; \beta) = d(y; \alpha)$ and $\deg(y; \beta) = \deg(y; \alpha)$.
This means that we only need to check vertices that are adjacent to a cross edge, and possibly move them to a different
set, depending on $\deg(y; \beta)$ and $d(y; \beta)$.
This is exactly what the algorithm does. The case for $y \in V_\gamma$ is similar.

The only non-trivial counters are $\back{\gamma}$ and $\back{\beta}$. Note that
$\back{\gamma}$ consists of $\back{\alpha}$ as well as additional edges to $N_\alpha$, namely
$\back{\gamma} = \back{\alpha} + \sum_{y \in N_\alpha} \inback{y}$, which is exactly what algorithm computes.
Also, $\back{\beta} = \back{\alpha} + \sum_{y \in P_\alpha} \outback{y} = \back{\alpha} + \outback{\alpha} - \sum_{y \in N_\alpha} \outback{y}$.
The remaining counters are trivial to maintain as we delete edges or move vertices from one set to another.
\qed
\end{proof}

We conclude this section with the computational complexity analysis.

\begin{proposition}
Constructing the tree can be done in $\bigO{m \log n}$ time, where $m$ is the number
of edges and $n$ is the number of vertices in the input graph.
\end{proposition}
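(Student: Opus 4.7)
The plan is a charging argument in the spirit of the small-to-large / heavy-path technique: I will show that each invocation of \algsplit on a leaf $\alpha$ costs $\bigO{\min(\abs{\adj{N_\alpha}}, \abs{\adj{P_\alpha}})}$, and that any edge can be counted toward such a smaller side at most $\bigO{\log n}$ times before it is deleted as a cross-edge.

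First I would verify the per-split bound. The test \algcount has already been shown to run in $\bigO{\min(\abs{\adj{N_\alpha}}, \abs{\adj{P_\alpha}})}$ time. The subsequent \algleft or \algright iterates only over the chosen smaller set (say $N_\alpha$), and for each $x$ in that set performs $\bigO{\deg(x)}$ edge deletions plus constant-time updates to $\flux{}$, $\inback{}$, $\outback{}$, $\back{}$, $\dbackone{}$, $\dbacktwo{}$, and reclassifications among $N, N^*, P, P^*$. Because vertices of zero degree are shunted into $N^*, P^*$ and never appear in $N_\alpha \cup P_\alpha$, we have $\abs{N_\alpha} \leq 2\abs{\adj{N_\alpha}}$, so the gain computation $\back{\alpha} + \outback{\alpha} + \dbackone{\alpha} + \sum_{y \in N_\alpha} \diff{y; \alpha}$ and the affected-vertex scan are both absorbed in the same smaller-side bound.

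Next I would establish the combinatorial identity
\[
  \abs{\adj{N_\alpha}} + \abs{\adj{P_\alpha}} \;=\; \abs{E_\alpha} + c_\alpha,
\]
where $c_\alpha$ is the number of cross-edges between $N_\alpha$ and $P_\alpha$, since intra-side edges are counted once and cross-edges twice. Hence whichever side is smaller, its adjacency is at most $(\abs{E_\alpha} + c_\alpha)/2$. Now the new smaller-side leaf $\beta$ contains exactly the intra-$N_\alpha$ edges of $\adj{N_\alpha}$, giving $\abs{E_\beta} = \abs{\adj{N_\alpha}} - c_\alpha \leq (\abs{E_\alpha} - c_\alpha)/2 \leq \abs{E_\alpha}/2$. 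I charge one token to every edge each time it lies on the smaller side of a split; the edge is then either permanently deleted as a cross-edge or migrates into a leaf whose edge set is at most half the size of the previous one. Since $m \leq n^2$, any surviving edge can be charged at most $\bigO{\log n}$ times, so the total work summed over every split (including the terminal ones where the gain $g$ is nonnegative and no recursion occurs, each of which costs the smaller-side budget just once) is $\bigO{m \log n}$.

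The main obstacle is precisely the step from ``smaller in adjacency'' to ``at most half the edges survive on that side''; without the adjacency-sum identity above, choosing the smaller adjacency gives no multiplicative shrinkage of $\abs{E_\alpha}$ and the $\log n$ factor does not follow. The remaining delicate point is bookkeeping: one must ensure that after removing the cross-edges, every vertex whose degree, $\flux{}$, $\inback{}$, or $\outback{}$ changed is visited exactly once and reclassified into the correct one of $N, N^*, P, P^*$, so that the next split still runs in time proportional to the current smaller adjacency. Since every such vertex is an endpoint of a deleted cross-edge, the reclassification cost is dominated by $c_\alpha \leq \abs{\adj{N_\alpha}}$ and is already absorbed in the charge.
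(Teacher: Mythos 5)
Your proposal is correct and follows essentially the same route as the paper: the per-split cost is bounded by the smaller adjacency set (via \algcount and the one-sided update lemmas), and the key step is the identity $\abs{\adj{N_\alpha}} + \abs{\adj{P_\alpha}} = \abs{E_\alpha} + c_\alpha$, which yields $2\abs{E_\beta} \leq \abs{E_\alpha} - c_\alpha \leq \abs{E_\alpha}$ for the surviving smaller side and hence at most $\bigO{\log n}$ charges per edge. The paper phrases this with indicator variables $i_{e\alpha}$ summed over tree nodes rather than tokens, but the argument is the same.
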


To prove the proposition, we need the following lemmas.

\begin{lemma}
\label{lem:splitleft}
Let $m = \abs{\adj{N_\alpha}}$. Updating the new leaves in  $\algleft$ can be done in $\bigO{m}$ time.
\end{lemma}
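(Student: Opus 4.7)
The plan is to account for each operation in \algleft and charge its cost to either a vertex of $N_\alpha$ or to an edge in $\adj{N_\alpha}$, so that the total is linear in $m = \abs{\adj{N_\alpha}}$. The first step is the trivial one: creating $\beta$ and $\gamma$ and transferring the vertex sets is $\bigO{1}$ provided the sets $N_\alpha, P_\alpha, N_\alpha^*, P_\alpha^*$ are stored in containers that support $\bigO{1}$ pointer reassignment (so that $N_\beta \define N_\alpha$ and $P_\gamma \define P_\alpha$ really are pointer swaps rather than element-wise copies). Initialising $\back{\beta}$ and $\back{\gamma}$ is $\bigO{1}$.

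The main observation that drives the bound is that by definition every $x \in N_\alpha$ has $\deg(x;\alpha) \geq 1$, so each such $x$ is adjacent to at least one edge in $E_\alpha$; consequently $\abs{N_\alpha} \leq \abs{\adj{N_\alpha}} = m$. In particular the outer loop of \algleft executes at most $m$ times, and the updates to $\back{\beta}$ and $\back{\gamma}$ inside the loop contribute $\bigO{m}$ work in total.

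Inside the loop, for each $x \in N_\alpha$ the algorithm deletes the cross edges incident to $x$ going to $P_\alpha$, updating $\flux{}$, $\deg{}$, $\inback{}$, $\outback{}$ at both endpoints. The total number of deletions over the whole loop is bounded by the number of edges in $\adj{N_\alpha}$, so this phase also runs in $\bigO{m}$ time assuming the graph representation allows $\bigO{1}$ deletion per edge (e.g.\ adjacency lists with cross-pointers between the two copies of each edge).

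Finally, we must argue that the post-processing---updating the membership of vertices among $N_\beta, P_\beta, N_\beta^*, P_\beta^*$, $N_\gamma, P_\gamma, N_\gamma^*, P_\gamma^*$ and the aggregates $\dbackone{\beta}, \dbacktwo{\beta}, \dbackone{\gamma}, \dbacktwo{\gamma}$---only touches \emph{affected} vertices. A vertex's counters $\flux{}$, $\deg{}$, $\inback{}$, $\outback{}$ changed only if it is an endpoint of a deleted edge, and the number of such endpoints is at most $2m$. For each affected vertex we inspect its (possibly new) $\diff{}$ value and $\deg$, move it between the four sets if necessary, and apply the corresponding $\bigO{1}$ correction to $\dbackone{}$ or $\dbacktwo{}$ on its leaf. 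The main subtlety---and the step that would need the most care in a fully detailed proof---is to arrange the bookkeeping so that identifying the affected vertices does not require scanning any list of unaffected vertices; this is handled by gathering the endpoints of the deleted edges during the previous phase. All four tasks together are thus $\bigO{m}$, and summing over the four phases yields the claimed $\bigO{m}$ bound.
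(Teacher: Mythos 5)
Your proof is correct and follows essentially the same route as the paper's: constant-time pointer transfer of the sets, bounding $\abs{N_\alpha}$ via the no-singleton property, constant-time edge deletions charged to $\adj{N_\alpha}$, and noting that only endpoints of deleted edges need re-examination. The only nitpick is that the correct bound is $\abs{N_\alpha} \leq 2\abs{\adj{N_\alpha}}$ rather than $\abs{N_\alpha} \leq \abs{\adj{N_\alpha}}$ (an edge internal to $N_\alpha$ covers two of its vertices), but this is immaterial for the $\bigO{m}$ claim.
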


\begin{proof}
The assignments $N_\beta = N_\alpha$, $N_\beta^* = N_\alpha^*$, $P_\gamma = P_\alpha$, $P_\gamma^* = P_\alpha^*$ are
done by reference, so they can be done in constant time.  Since there are no
singletons in $N_\alpha$, there are most $2m$ vertices in $N_\alpha$. Deleting an edge is
done in constant time, so the for-loop requires $\bigO{m}$ time. There are at most
$2m$ affected vertices, thus updating the sets also can be done in $\bigO{m}$ time.
\qed
\end{proof}

\begin{lemma}
\label{lem:splitright}
Let $m = \abs{\adj{P_\alpha}}$. Updating the new leaves in  $\algright$ can be done in $\bigO{m}$ time.
\end{lemma}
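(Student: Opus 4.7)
The plan is to mirror the proof of Lemma~\ref{lem:splitleft} essentially verbatim, since $\algright$ is the symmetric counterpart of $\algleft$: it processes $P_\alpha$ rather than $N_\alpha$, and the roles of $\inback{}$ and $\outback{}$ get swapped when updating $\back{\beta}$ and $\back{\gamma}$. None of this structural symmetry affects the counting argument, so the same $\bigO{m}$ bound carries over.

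First I would observe that all the initial set assignments in $\algright$, namely $N_\beta = N_\alpha$, $N_\beta^* = N_\alpha^*$, $P_\gamma = P_\alpha$, and $P_\gamma^* = P_\alpha^*$, can be done in constant time because they are done by reference rather than by copying. The two initial values of $\back{\beta}$ and $\back{\gamma}$ are also set up in constant time. Next, I would bound $\abs{P_\alpha}$ by $2m$: every vertex in $P_\alpha$ has at least one adjacent edge by definition (singletons live in $P_\alpha^*$), so summing degrees gives $\abs{P_\alpha} \leq 2m$. Therefore the outer \textbf{foreach} loop performs $\bigO{m}$ iterations, each doing constant bookkeeping work on $\back{\beta}$ and $\back{\gamma}$.

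Then I would address the edge deletions inside the loop. For each $x \in P_\alpha$, the algorithm deletes the cross edges incident to $x$ that go to the other side; summed over all $x \in P_\alpha$, the total number of deletions is bounded by the total number of edges incident to $P_\alpha$, which is $m$. Each deletion plus the corresponding updates of $\flux{}$, $\deg{}$, $\inback{}$, and $\outback{}$ is constant time, giving $\bigO{m}$ in total for the inner work.

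Finally, I would argue that updating the sets $N_\beta$, $P_\beta$, $N_\beta^*$, $P_\beta^*$ (and similarly for $\gamma$) along with the aggregate counters $\dbackone{}$ and $\dbacktwo{}$ requires only inspecting vertices whose counters could have changed. These are exactly the endpoints of the deleted cross edges, of which there are at most $2m$. Each such vertex can be reclassified among its four possible sets in constant time based on its updated $\diff{}$ and $\deg{}$ values, so this post-processing also fits in $\bigO{m}$ time. Summing the three contributions yields the claimed $\bigO{m}$ bound; there is no real obstacle beyond being careful that the roles of $\inback{}$ and $\outback{}$ are correctly swapped relative to $\algleft$, a point which is already handled by the pseudo-code.
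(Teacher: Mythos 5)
Your proposal is correct and matches the paper's approach: the paper simply states that the proof is identical to that of Lemma~\ref{lem:splitleft}, and your argument is exactly that proof with $N_\alpha$ replaced by $P_\alpha$ (constant-time reference assignments, $\abs{P_\alpha} \leq 2m$ since singletons live in $P_\alpha^*$, edge deletions and affected-vertex updates each bounded by $\bigO{m}$).
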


The proof for the lemma is the same as the proof for Lemma~\ref{lem:splitleft}.

\begin{proof}
Let us write $m_\alpha = \min(\abs{\adj{N_\alpha}}, \abs{\adj{P_\alpha}})$ to be the
smaller of the two adjacent edges.

Lemmas~\ref{lem:splitleft}--\ref{lem:splitright} implies that the running time is $\bigO{\sum_{\alpha}
m_\alpha}$, where $\alpha$ runs over every vertex in the final tree.

We can express the sum differently:
given an edge $e$, write
\[
	i_{e\alpha} =
	\begin{cases}
		1 & e \in \adj{N_\alpha}, \quad\text{and}\quad \abs{\adj{N_\alpha}} \leq \abs{\adj{P_\alpha}}, \\
		1 & e \in \adj{P_\alpha}, \quad\text{and}\quad \abs{\adj{N_\alpha}} > \abs{\adj{P_\alpha}}, \\
		0 & \text{otherwise}\quad.
	\end{cases}
\]
That is, $m_\alpha = \sum_{e} i_{e\alpha}$.
Write $i_{e} = \sum_{\alpha} i_{e\alpha}$. To prove the proposition, we show that $i_e \in \bigO{\log n}$.

Fix $e$, and let $\alpha$ and $\beta$ be two vertices in a tree for which
$i_{e\alpha} = i_{e\beta} = 1$.  Either $\alpha$ is a descendant of $\beta$, or
$\beta$ is a descendant of $\alpha$.  Assume the latter, without the loss of
generality.
We will show that $2\abs{E_\beta} \leq \abs{E_\alpha}$, and this immediately proves that $i_e \in \bigO{\log n}$.

To prove this, let us define $c_\alpha$ to be the number of cross edges between
$N_\alpha$ and $P_\alpha$, when splitting $\alpha$.
Assume, for simplicity, that $\beta$ is the left descendant of $\alpha$.
Then $\abs{E_\beta} \leq \abs{\adj{N_\alpha}} - c_\alpha$.
Also, $e \in \adj{N_\alpha}$, and by definition of $i_{e\alpha}$, $m_\alpha = \abs{\adj{N_\alpha}}$.
This gives us,
\[
	2\abs{E_\beta} \leq
	2m_\alpha - 2c_\alpha \leq
	(\abs{\adj{N_\alpha}} - c_\alpha) + 
	(\abs{\adj{P_\alpha}} - c_\alpha) = \abs{E_\alpha} - c_\alpha \leq \abs{E_\alpha}\quad.
\]
The case when $\beta$ is the right descendant is similar, proving the result.
\qed
\end{proof}

\subsection{Enforcing the cardinality constraint by pruning the tree}
If we did not specify the cardinality constraint, then once we have obtained
the tree, we can now assign individual ranks to the leaves, and consequently to the
vertices. If $k$ is specified, then we may violate the cardinality constraint
by having too many leaves.

In such case, we need to reduce the number of leaves, which we do by pruning
some branches. Luckily, we can do this optimally by using dynamic programming.
To see this, let $T'$ be a subtree of $T$ obtained by merging some of the branches, making them into leaves.
Then Proposition~\ref{prop:split} implies that $\score{T'}$ is equal to
\[
	\score{T'} = W + \sum_{\alpha \text{ is a non-leaf in } T'} \gain{\alpha},
\]
where $W$ is the total weight of edges.

This allows us to define the following dynamic program.
Let $\opt{\alpha; h}$ be the optimal gain achieved in branch starting from $\alpha$
using only $h$ ranks. If $\alpha$ is the root of $T$, then $\opt{\alpha; k}$
is the optimal agony that can be obtained by pruning $T$ to have only $k$ leaves.

To compute $\opt{\alpha; h}$,
we first set $\opt{\alpha; 1} = 0$ for any $\alpha$, and $\opt{\alpha; h} = 0$ if $\alpha$ is a leaf in $T$. 
If $\alpha$ is a non-leaf and $k > 1$, then we need to distribute the budget among the two children,
that is, we compute
\[
	\opt{\alpha; h} = \gain{\alpha} + \min_{1 \leq \ell \leq h - 1} \opt{\beta; \ell} + \opt{\gamma; h - \ell}\quad.
\]
We also record the optimal index $\ell$, that allows us to recover the optimal tree.
Computing a single $\opt{\alpha; h}$ requires $\bigO{k}$ time, and we need to compute
at most $\bigO{nk}$ entries, leading to $\bigO{nk^2}$ running time.

\subsection{Strongly connected component decomposition}

If the input graph has no cycles and there is no cardinality constraint, then
the optimal agony is 0.  However, the heuristic is not guaranteed to produce
such a ranking. To guarantee this, we add an additional---and optional---step.
First, we perform the SCC decomposition.
Secondly, we pack strongly connected components in the minimal number of layers:
source components are in the first layer, second layer consists of components
having edges edge only from the first layer, and so on.
We then run the heuristic on each individual layer.

If $k$ is not set, we can now create a global ranking, where the ranks of the
$i$th layer are larger than the ranks of the $(i - 1)$th layer. In other words,
edges between the SCCs are all forward.

If $k$ is set, then we need to decide how many individual ranks each component
should receive. Moreover, we may have more than $k$ layers, so some of the
layers must be merged. In such a case, we will demand that the merged layers
must use exactly 1 rank, together. The reason for this restriction is that it allows
us to compute the optimal distribution quickly using dynamic programming.

The gain in agony comes from two different sources. The first source is
the improvement of edges within a single layer.
Let us adopt the notation from the previous section, and write $\opt{i; h}$ to
be the optimal gain for $i$th layer using $h$ ranks. We can compute this using
the dynamic program in the previous section.
The second source of gain is making the inter-layer edges forward. Instead of
computing the total weight of such edges, we compute how many edges are not made forward.
These are exactly the edges that are between the layers that have been merged together.
In order to express this we write $w(j, i)$ to be the total weight of inter-layer
edges having both end points in layers $j, \ldots, i$.

To express, the agony of the tree,
let $k_i$ be the budget of individual layers.
We also, write $[a_j, b_j]$ to mean that layers $a_j, \ldots, b_j$ have been
merged, and must share a single rank.
We can show that the score of the tree $T'$ that uses this budget distribution of 
is then equal to
\[
	\score{T'} = W + \sum_{i} \opt{i, k_i} + \sum_{j} w(a_j, b_j),
\]
where $W$ is the total weight of the \emph{intra-layer} edges. Note that $W$ is a constant and so we can ignore it.

To find the optimal $k_i$ and $[a_j, b_j]$, we use the following dynamic program.
Let us write $\lopt{i; h}$ to be the gain of $1, \ldots, i$ layers using $h$ ranks. 
We can express $\lopt{i; h}$ as
\[
	\lopt{i; h} = \min (\min_j w(j, i) +  \lopt{j - 1; h - 1},\  \min_\ell \opt{i, \ell} + \lopt{i - 1, h - \ell} )\quad.
\]
The first part represents merging $j, \ldots, i$ layers, while the second part represents
spending $\ell$ ranks on the $i$th layer. By recording the optimal $j$ and $\ell$ we can recover
the optimal budget distribution for each $i$ and $h$.

Computing the second part can be done in $\bigO{k}$ time, and computing the
first part can be done in $\bigO{n}$ time, naively. This leads to $\bigO{n^2k +
nk^2}$ running time, which is too expensive.

Luckily we can speed-up the computation of the first term. To simplify notation,
fix $h$, and let us write $f(j, i) = w(j, i) +  \lopt{j - 1; h - 1}$.
We wish to find $j(i)$ such that  $f(j(i), i)$ is minimal for each $i$.
Luckily, $f$ satisfies the condition,
\[
	f(j_1, i_1) - f(j_2, i_1) \leq f(j_1, i_2) - f(j_2, i_2),
\]
where $j_1 \leq j_2 \leq i_1 \leq i_2$.
\citet{aggarwal87smawk} now guarantees that $j(i_1) \leq j(i_2)$, for $i_1 \leq i_2$.
Moreover, \citet{aggarwal87smawk} provides an algorithm that computes $j(i)$ in $\bigO{n}$ time. 
Unfortunately, we cannot use it since it assumes that $f(j, i)$ can be computed
in constant time, which is not the case due to $w(j, i)$.

Fortunately, we can still use the monotonicity of $j(\cdot)$ to speed-up the
algorithm. We do this by computing $j(i)$ in an interleaved manner.
In order to do so, let $\ell$ be the number of layers, and let $t$
be the largest integer such that $s = 2^t \leq \ell$. We first compute $j(s)$.
We then proceed to compute $j(s / 2)$ and $j(3 s / 2)$, and so on.
We use the previously computed values of $j(\cdot)$ as sentinels:
when computing $j(s / 2)$ we do not test $j > j(s)$ or
when computing $j(3 s / 2)$ we do not test $j < j(s)$.
The pseudo-code is given in Algorithm~\ref{alg:fastgroup}.

\begin{algorithm}
\caption{Fast algorithm for computing $j(i)$ minimizing $f(j(i), i)$}
\label{alg:fastgroup}
	$\ell \define $ largest possible $i$\;
	$s \define \max \set{2^t \leq \ell, t \in \naturals}$\;
	\While {$s \geq 1$ } {
		\ForEach{$i = s, 3s, 5s, \ldots$, $i \leq \ell$} {
			$a \define 1$; $b \define i$\;
			\lIf {$i - s \geq 1$} {$a \define j(i - s)$}
			\lIf {$i + s \leq \ell$} {$b \define \min(b, j(i + s))$}
			$j(i) \define \min_{a \leq j \leq b} f(j, i)$\;
		}
		$s \define s / 2$\;
	}
\end{algorithm}

To analyze the complexity, note that for a fixed $s$, the variables $i$, $a$ and $b$ are only moving to the right. 
This allows us to compute $w(j, i)$ incrementally:
whenever we increase $i$, we add the weights of new edges to the total weight,
whenever we increase $j$, we delete the weights of expiring edges from the total weight.
Each edge is visited twice, and this gives us $\bigO{m}$ time for a fixed $s$.
Since $s$ is halved during each outer iteration, there can be at most $\bigO{\log n}$ iterations.
We need to do this for each $h$, so the total running time is $\bigO{km\log n + nk^2}$.

As our final remark, we should point out that using this decomposition may not
necessarily result in a better ranking.  If $k$ is not specified, then the optimal solution
will have inter-layer edges as forward, so we expect this decomposition to improve the quality.
However, if $k$ is small, we may have a better solution if we allow to inter-layer edges
go backward. At extreme, $k = 2$, we are guaranteed that the heuristic \emph{without}
the SCC decomposition will give an optimal solution, so the SCC decomposition can only harm
the solution. We will see this behaviour in the experimental section. Luckily, since both
algorithms are fast, we can simply run both approaches and select the better one.

\section{Related work}\label{sec:related}

The problem of discovering the rank of an object based on its dominating
relationships to other objects is a classic problem.
Perhaps the most known ranking method is Elo rating devised
by~\citet{elo1978rating}, used to rank chess players.  In similar fashion,
\citet{jameson:99:behaviour} introduced a statistical model, where the
likelihood of the the vertex dominating other is based on the difference of
their ranks, to animal dominance data.

\citet{DBLP:conf/cse/MaiyaB09} suggested an approach for discovering
hierarchies, directed trees from weighted graphs such that parent vertices tend
to dominate the children. To score such a hierarchy the authors propose a
statistical model where the probability of an edge is high between a parent and
a child.  To find a good hierarchy the authors employ a greedy heuristic.

The technical relationship between our approach and the previous studies on
agony by~\citet{gupte:11:agony}~and~\citet{tatti:14:agony} is a very natural
one. The authors of both papers demonstrate that minimizing agony in a
unweighted graph is a dual problem to finding a maximal eulerian subgraph, a
subgraph in which, for each vertex, the same number of outdoing edges and the
number of incoming edges is the same. Discovering the maximum eulerian subgraph
is a special case of the capacitated circulation problem, where the capacities
are set to 1. However, the algorithms in~\citep{gupte:11:agony,tatti:14:agony}
are specifically designed to work with unweighted edges. Consequently, if our
input graph edges or we wish to enforce the cardinality constraint, we need to
solve the problem using the capacitated circulation solver.

The stark difference of computational complexities for different edge penalties
is intriguing: while we can compute agony and any other convex score in
polynomial-time, minimizing the concave penalties is \np-hard. Minimizing the score
$\score{G, k, \pencons{}}$ is equivalent to \textsc{feedback arc set}
(\fasprb), which is known to be \apx-hard with a coefficient of $c =
1.3606$~\cite{dinur:05:cover}.  Moreover, there is no known constant-ratio
approximation algorithm for \fasprb, and the best known approximation algorithm
has ratio $\bigO{\log n \log \log n}$~\cite{even:98:feedback}. In this paper we
have shown that minimizing concave penalty is \np-hard. An interesting
theoretical question is whether this optimization problem is also \apx-hard,
and is it possible to develop an approximation algorithm.

Role mining, where vertices are assigned different roles based on their
adjacent edges, and other features, has received some attention.
\citet{henderson:12:roix} studied assigning roles to vertices based on its
features while \citet{mccallum:07:roles} assigned topic distributions to
individual vertices. A potential direction for a future work is to study
whether the rank obtained from minimizing agony can be applied as a feature in
role discovery. 

\section{Experiments}\label{sec:exp}

In this section we present our experimental evaluation.  Our main focus of the
experiments is practical computability of the weighted agony.

\subsection{Datasets and setup}
For our experiments we took 10 large networks from SNAP repository~\cite{snapnets}.
In addition, for illustrative purposes, we used two small datasets: \dtname{Nfl},
consisting of National Football League teams. We created an edge $(x, y)$ if
team $x$ has scored more points against team $y$ during $2014$ regular season, we assign
the weight to be the difference between the points. Since not every team plays
against every team, the graph is not a tournament. \dtname{Reef}, a food web
of guilds of species~\cite{roopnarine:13:reef}, available at~\cite{roopnarine:12:datareef}.
The dataset consisted of 3 food webs of coral reef systems: The Cayman Islands, Jamaica, and Cuba.
An edge $(x, y)$ appears if a guild $x$ is known to prey on a guild $y$. Since the guilds
are common among all 3 graphs, we combined the food webs into one graph, and weighted
the edges accordingly, that is, each edge received a weight between $1$ and $3$.

The sizes of the graphs, along with the sizes of the largest strongly connected component, are given
in the first 4 columns of Table~\ref{tab:basic}.

The 3 \dtname{Higgs} and \dtname{Nfl} graphs had weighted edges, and for the
remaining graphs we assigned a weight of 1 for each edge.  We removed any
self-loops as they have no effect on the ranking, as well as any singleton
vertices. 

For each dataset we computed the agony using Algorithm~\ref{alg:fast}.
We compared the algorithm to the baseline given by~\citet{tatti:15:agony}.
For the
unweighted graphs we also computed the agony using \relief, an algorithm
suggested by~\citet{tatti:14:agony}. Note that this algorithm, nor the
algorithm by~\citet{gupte:11:agony}, does not work for weighted graphs nor when the
cardinality constraint $k$ given in Problem~\ref{prb:opt} is enforced.
We implemented algorithms in C++ and performed experiments using a Linux-desktop
equipped with a Opteron 2220 SE processor.\!\footnote{The source code is available at \url{http://users.ics.aalto.fi/ntatti/agony.zip}}

\begin{table*}[htb!]
\caption{Basic characteristics of the datasets and the experiments. The 6th
is the number of groups in the optimal ranking.}
\label{tab:basic}
\setlength{\tabcolsep}{2.5pt}
\begin{tabular*}{\textwidth}{@{\extracolsep{\fill}}l rr rr r rrr r rrr r}
\toprule
&&&
\multicolumn{2}{l}{largest SCC} &&
\multicolumn{2}{l}{time} &
\multicolumn{2}{l}{baseline} &
\\
\cmidrule{4-5}
\cmidrule(r){7-8}
\cmidrule{9-10}
Name & $\abs{V}$ & $\abs{E}$ & $\abs{V'}$ & $\abs{E'}$ &
$k$ &
SCC & plain &
\cite{tatti:15:agony} & \cite{tatti:14:agony}
\\
\midrule
Amazon     & 403\,394 & 3\,387\,388 &  
395\,234 &
3\,301\,092
&  17
& 24m7s & 25m6s
& 6h24m & 4h27m
\\
Gnutella   &  62\,586 &    147\,892 &
14\,149 & 50\,916 
& 24 &
4s & 20s
& 8s & 45s
\\
EmailEU    & 265\,214 &    418\,956 &
 34\,203 &
 151\,132
& 9 &
10s & 29s
& 10m & 2m
\\
Epinions   &  75\,879 &    508\,837 &
 32\,223 &
 443\,506
& 10 &
33s & 44s
& 49m & 20m
\\
Slashdot   &  82\,168 &    870\,161 &
71\,307 &
841\,201
& 9 &
38s & 61s
& 1h38m & 1h5m 
\\
WebGoogle  & 875\,713 & 5\,105\,039 & 
434\,818 &
3\,419\,124
& 31 
& 10m31s & 25m22s
& 8h50m & 2h32m
\\
WikiVote   &   7\,115 &    103\,689 &
1\,300 &
39\,456
& 12 &
2s & 6s
&  43s & 7s
\\[1mm]
Nfl        &       32 &         205 &
32 &         205
& 6  
& 4ms & 5ms
& 22ms & --
\\
Reef & 258 & 4232 &
1 & 0 &
19 
& 8ms & 100ms
& 10ms &  --
\\
HiggsReply &  37\,145 &     30\,517 &
263 &
569
& 11 
& 0.3s & 5s
& 0.2s & --
\\
HiggsRetweet &
425\,008 & 733\,610 &
13\,086 &
63\,505 
& 22
& 12s & 2m10s
& 10m & --
\\
HiggsMention &
302\,523 & 445\,147  &
4\,786 &
19\,848
& 21
& 6s & 1m34s
& 2m & --
\\
\bottomrule
\end{tabular*}
\end{table*}

\subsection{Results}

Let us begin by studying running times given in Table~\ref{tab:basic}. We
report the running times of our approach with and without the strongly
connected component decomposition as suggested by Proposition~\ref{prop:decomp}, and
compare it against the baselines, whenever possible. Note that we can
use the decomposition only if we do not enforce the cardinality constraint.

Our first observation is that the decomposition always helps to speed up the
algorithm.  In fact, this speed-up may be dramatic, if the size of the strongly
connected component is significantly smaller than the size of the input graph,
for example, with \dtname{HiggsRetweet}.
The running times are practical despite the unappealing theoretical bound.
This is due to several factors. First, note that the theoretical bound 
of $\bigO{\min(nk, m)m\log n}$ given in Section~\ref{sec:speedup} only holds for unweighted graphs, and it is needed
to bound the number of outer-loop iterations. In practice, however, the number
of these iterations is small, even for weighted graphs. The other, and the main, reason
is the pessimistic $n$ in the $\min(nk, m)$ factor; we spend $nk$ inner-loop iterations only if
the dual $\pi(v)$ of each vertex $v$ increases by $\bigO{k}$, and \emph{between} the increases
the shortest path from sources to $v$ changes $\bigO{n}$ times. The latter change seems highly
unlikely in practice, leading to a faster computational time.

We see that our algorithm beats consistently both baselines. What is more
important: the running times remain practical, even if we do not use strongly
connected components.  This allows us to limit the number of groups for large
graphs. This is a significant improvement over~\citep{tatti:15:agony}, where
solving \dtname{HiggsRetweet} \emph{without} the SCC decomposition required 31 hours.

Our next step is to study the effect of the constraint $k$, the maximum number
of different rank values. We see in the 6th column in Table~\ref{tab:basic}
that despite having large number of vertices, that the optimal rank assignment
has low number of groups, typically around 10--20 groups, even if the
cardinality constraint is not enforced.

Let us now consider agony as a function of $k$, which we have plotted in Figure~\ref{fig:agonyk}
for \dtname{Gnutella} and \dtname{WikiVote} graphs. We see that for these datasets that agony
remains relatively constant as we decrease $k$, and starts to increase more prominently once we consider
assignments with $k \leq 5$.

\begin{figure}[htb!]
\hfill
\begin{tikzpicture}
\begin{axis}[xlabel={constraint $k$},ylabel= {$\score{G, k}$},
    width = 4.5cm,
	height = 3cm,
	ymin = 0,
    cycle list name=yaf,
    scale only axis,
    title = {\dtname{Gnutella}},
	tick scale binop=\times,
	xtick = {2, 5, 10, 15, 20, 24},
    no markers
    ]
\addplot table[x expr = {\coordindex + 2}, y index = 1, header = false] {gnutellastats.dat};
\pgfplotsextra{\yafdrawaxis{2}{24}{0}{50227}}
\end{axis}
\end{tikzpicture}\hfill
\begin{tikzpicture}
\begin{axis}[xlabel={constraint $k$},ylabel= {$\score{G, k}$},
    width = 4.5cm,
	height = 3cm,
	ymin = 0,
    cycle list name=yaf,
	ytick = {0, 10000, 20000, 30000, 40000, 50000},
    scale only axis,
    title = {\dtname{WikiVote}},
	tick scale binop=\times,
    no markers
    ]
\addplot table[x expr = {\coordindex + 2}, y index = 1, header = false] {wikistats.dat};
\pgfplotsextra{\yafdrawaxis{2}{12}{0}{35989}}
\end{axis}
\end{tikzpicture}\hspace*{\fill}
\caption{Agony as a function of the constraint $k$ for \dtname{Gnutella} and \dtname{WikiVote} datasets.}
\label{fig:agonyk}
\end{figure}
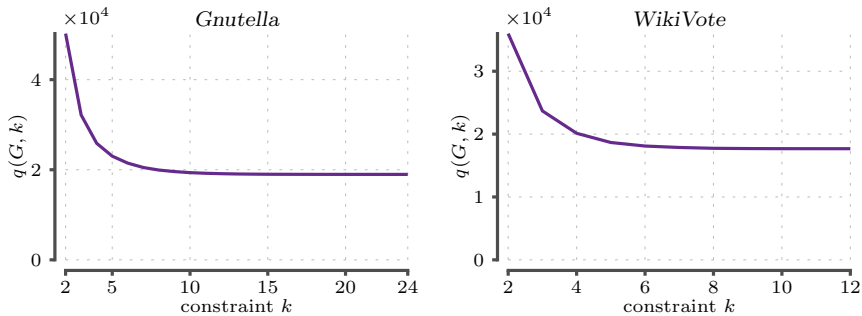

Enforcing the constraint $k$ has an impact on running time. As implied by
Proposition~\ref{prop:modifytime}, low values of $k$ should speed-up the
computation.  In Figure~\ref{fig:timek} we plotted the running time as a
function of $k$, compared to the plain version without the speed-up.

\begin{figure}[htb!]
\hfill
\begin{tikzpicture}
\begin{semilogyaxis}[xlabel={constraint $k$},ylabel= {time (in seconds)},
    width = 4cm,
    height = 3cm,
    cycle list name=yaf,
    scale only axis,
    title = {\dtname{Gnutella}},
    tick scale binop=\times,
    xtick = {2, 5, 10, 15, 20, 24},
    no markers,
    legend entries = {baseline~\citep{tatti:15:agony}, speed-up},
    minor tick length = 0pt
    ]
\addplot table[x expr = {\coordindex + 2}, y expr = {\thisrowno{0}}, header = false] {gnutellaorig.dat};
\addplot table[x index = {0},  y index = 1, header = false] {gnutellatime.dat};

\pgfplotsextra{\yafdrawaxis{2}{24}{1}{7200}}
\end{semilogyaxis}
\end{tikzpicture}\hfill%
\begin{tikzpicture}
\begin{semilogyaxis}[xlabel={constraint $k$},ylabel= {time (in seconds)},
    width = 4cm,
    height = 3cm,
    cycle list name=yaf,
    scale only axis,
    title = {\dtname{WikiVote}},
    tick scale binop=\times,
    no markers,
    xtick = {2, 4, 6, 8, 10, 12},
    minor tick length = 0pt
    ]
\addplot table[x expr = {\coordindex + 2}, y expr = {\thisrowno{0}}, header = false] {wikiorig.dat};
\addplot table[x index = {0},  y index = 1, header = false] {wikitime.dat};
\pgfplotsextra{\yafdrawaxis{2}{12}{1}{2700}}
\end{semilogyaxis}
\end{tikzpicture}\hspace*{\fill}

\caption{Execution time as a function of the constraint $k$ for \dtname{Gnutella} and \dtname{WikiVote} datasets.
Note that the $y$-axis is logarithmic.}
\label{fig:timek}
\end{figure}
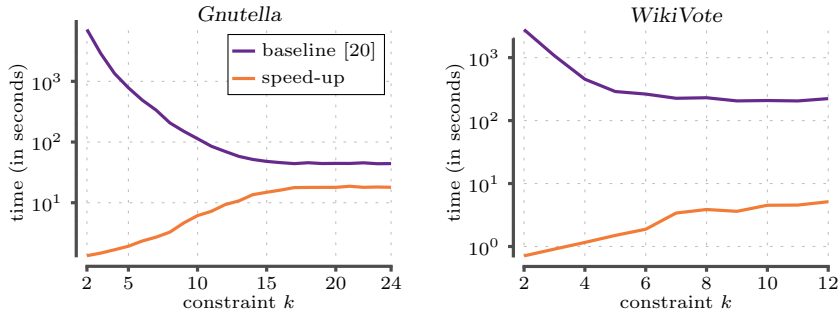

As we can see lower values of $k$ are computationally easier to solve.  This is
an opposite behavior of~\citep{tatti:15:agony}, where lowering $k$ increased
the computational time. To explain this behaviour, note that when we decrease
$k$ we increase the agony score, which is equivalent to the capacitated
circulation.  Both solvers increase incrementally the flow until we have
reached the solution. As we lower $k$, we increase the amount of optimal
circulation, and we need more iterations to reach the optimal solution. The
difference between the algorithm is that for lower $k$ updating the residual
graph becomes significantly faster than computing the tree from scratch.
This largely overpowers the effect of needing many more iterations to converge. 
However, there are exceptions: for example, computing agony for
\dtname{WikiVote} with $k = 8$ is slower than $k = 9$. 

Let us now consider the performance of the heuristic algorithm.  We report the
obtained scores and the running times in Table~\ref{tab:heuristic}.
We tested both variants: with and without SCC decomposition, and we do not enforce $k$.
We first observe that both variants are expectedly fast: processing the largest
graphs, \dtname{Amazon} and \dtname{WebGoogle}, required less than 10 seconds,
while the exact version needed $10$--$25$ minutes. The plain version is cosmetically faster.
Heuristic also produces competitive scores but the performance depends on the dataset:
for \dtname{Gnutella} and \dtname{HiggsRetweet} the SCC variant produced 25\% increase to
agony, while for the remaining datasets the increase was lower than 8\%.
Note that, \dtname{Reef} has agony of 0, that is, the network is a DAG but the plain
variant was not able to detect this. This highlights the benefit of doing the SCC decomposition. 
In general, the SCC variant outperforms the plain variant when we do not enforce the cardinality constraint.

\begin{table}
\caption{Scores, compared to the optimal, and running times of the heuristic.
Here, \emph{SCC} is the heuristic with SCC decomposition, while \emph{plain} is the plain version,
\emph{opt} is the optimal agony.
}
\label{tab:heuristic}

\begin{tabular*}{\textwidth}{@{\extracolsep{\fill}}l rrr rr rr}
\toprule
&
& & & & &
\multicolumn{2}{l}{Time (sec.)} \\
\cmidrule{7-8}
Name &
$\frac{\score{\text{SCC}}}{\score{\text{opt}}}$ &
$\frac{\score{\text{plain}}}{\score{\text{opt}}}$ &
$\score{\text{SCC}}$ &
$\score{\text{plain}}$ &
$\score{\text{opt}}$ &
SCC &
plain

\\
\midrule
Amazon     & 
1.036 &
1.037 &
2\,044\,609 &
2\,046\,344 &
1\,973\,965 &
9.24 &
8.49
\\
Gnutella   &
1.256 &
1.350 &
23\,820 &
25\,603 &
18\,964 &
0.35 &
0.34
\\

EmailEU    &
1.008 &
1.012 &
121\,820 &
122\,362 &
120\,874 &
0.47 &
0.45
\\

Epinions   & 
1.024 &
1.030 &
271\,419 &
273\,016 &
264\,995 &
0.40 &
0.37
\\

Slashdot   &
1.001 &
1.003 &
749\,448 &
750\,760 &
748\,582 &
0.68 &
0.64
\\

WebGoogle  & 
1.051 &
1.079 &
1\,935\,476 &
1\,985\,831 &
1\,841\,215 &
6.80 &
6.64
\\

WikiVote   &   
1.043 &
1.091 &
18\,430 &
19\,276 &
17\,676 &
0.05 &
0.05
\\[1mm]

Nfl        &
1.047 &
1.047 &
1172 &
1172 &
1119 &
0.002 &
0.002 
\\
Reef & 
-- &
-- &
0 &
452 &
0 &
0.008 &
0.006
\\
HiggsReply &  
1.007 &
1.103 &
5\,499 &
6\,022 &
5459 &
2.29 &
1.03
\\

HiggsRetweet &
1.259 &
1.606 &
19\,264 &
24\,579 &
15\,302 &
0.12 &
0.05
\\

HiggsMention &
1.078 &
1.322 &
24\,165 &
29\,632 &
22\,418 &
1.82 &
1.65
\\

\bottomrule
\end{tabular*}
\end{table}

As we lower the cardinality constraint $k$, the plain variant starts to
outperform the SCC variant, as shown in Figure~\ref{fig:heuristicratio}.
The reason for this is that the SCC variant requires that a edge $(u, v)$ between two SCCs
is either forward or $r(u) = r(v)$. This restriction becomes too severe as we lower $k$ and
it becomes more profitable to allow $r(u) < r(v)$. At extreme $k = 2$, the plain version
is guaranteed to find the optimal solution, so the SCC variant can only harm the solution.

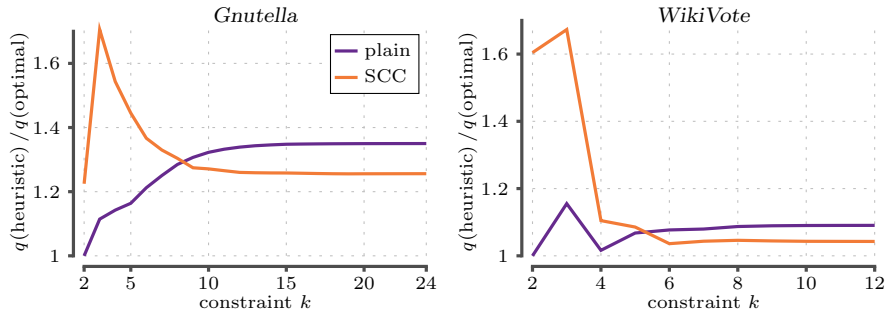
\begin{figure}[htb!]
\hfill
\begin{tikzpicture}
\begin{axis}[xlabel={constraint $k$},ylabel= {$\score{\text{heuristic}} / \score{\text{optimal}}$},
    width = 4.5cm,
	height = 3cm,
	ymin = 1,
    cycle list name=yaf,
    scale only axis,
    title = {\dtname{Gnutella}},
	tick scale binop=\times,
	xtick = {2, 5, 10, 15, 20, 24},
    legend entries = {plain, SCC},
    no markers
    ]
\addplot table[x expr = {\coordindex + 2}, y expr = {\thisrowno{5}/\thisrowno{1}}, header = false] {gnutella.stat};
\addplot table[x expr = {\coordindex + 2}, y expr = {\thisrowno{3}/\thisrowno{1}}, header = false] {gnutella.stat};
\pgfplotsextra{\yafdrawaxis{2}{24}{1}{1.7}}
\end{axis}
\end{tikzpicture}\hfill
\begin{tikzpicture}
\begin{axis}[xlabel={constraint $k$},ylabel= {$\score{\text{heuristic}} / \score{\text{optimal}}$},
    width = 4.5cm,
	height = 3cm,
	ymin = 1,
    cycle list name=yaf,
    scale only axis,
    title = {\dtname{WikiVote}},
	tick scale binop=\times,
    no markers
    ]
\addplot table[x expr = {\coordindex + 2}, y expr = {\thisrowno{5}/\thisrowno{1}}, header = false] {wiki.stat};
\addplot table[x expr = {\coordindex + 2}, y expr = {\thisrowno{3}/\thisrowno{1}}, header = false] {wiki.stat};
\pgfplotsextra{\yafdrawaxis{2}{12}{1}{1.7}}
\end{axis}
\end{tikzpicture}\hspace*{\fill}
\caption{Ratio of the agony given by the heuristic and the optimal agony as a function of the constraint $k$ for \dtname{Gnutella} and \dtname{WikiVote} datasets.}
\label{fig:heuristicratio}
\end{figure}

\begin{table}[htb!]
\caption{Rank assignment discovered for \dtname{Nfl} dataset with $k = 3$ groups}
\label{tab:nfl}
\begin{tabular*}{\columnwidth}{ll}
\toprule
Rank & Teams \\
\midrule
1. &
\textsc{den bal ne dal sea phi kc gb pit}  \\
2. & 
\textsc{stl nyg mia car no sd min cin buf det ind hou sf ari}  \\
3. &
\textsc{wsh oak tb jax ten cle atl nyj chi}  \\
\bottomrule
\end{tabular*}
\end{table}

\begin{table}
\caption{Ranked guilds of \dtname{Reef} dataset, with $k = 4$. For simplicity, we removed the duplicate guilds
in the same group, and grouped similar guilds (marked as italic, the number in parentheses indicating the number of guilds).}
\label{tab:reef}
\begin{tabular*}{\textwidth}{p{11.5cm}}
\toprule
\textsl{Sharks (6)},
Amberjack,
Barracuda,
Bigeye,
Coney grouper,
Flounder,
Frogfish,
Grouper,
Grunt,
Hind,
Lizardfish,
Mackerel,
Margate,
Palometa,
Red hind,
Red snapper,
Remora,
Scorpionfish,
Sheepshead,
Snapper,
Spotted eagle ray
\\
\midrule

Angelfish,
Atlantic spadefish,
Ballyhoo,
Barracuda,
Bass
Batfish,
Blenny
Butterflyfish,
Caribbean Reef Octopus,
Caribbean Reef Squid,
Carnivorous fish II-V,
Cornetfish,
Cowfish,
Damselfish,
Filefish,
Flamefish,
Flounder,
Goatfish,
Grunt,
Halfbeak,
Hamlet,
Hawkfish,
Hawksbill turtle,
Herring,
Hogfish,
Jack,
Jacknife fish,
Jawfish,
Loggerhead sea turtle,
Margate,
Moray,
Needlefish
Porcupinefish I-II,
Porkfish,
Pufferfish,
Scorpionfish,
Seabream,
Sergeant major,
Sharptail eel,
Slender Inshore Squid,
Slippery dick,
Snapper,
Soldierfish,
Spotted drum,
Squirrelfish,
Stomatopods II,
Triggerfish,
Trumpetfish,
Trunkfish,
Wrasse,
Yellowfin mojarra
\\
\midrule
\textsl{Crustacea (31)},
Ahermatypic benthic corals,
Ahermatypic gorgonians,
Anchovy,
Angelfish,
Benthic carnivores II,
Blenny,
Carnivorous fish I,
Common Octopus,
Corallivorous gastropods IV,
Deep infaunal soft substrate suspension feeders,
Diadema,
Echinometra,
Goby,
Green sea turtle,
Herbivorous fish I-IV,
Herbivorous gastropods I,
Hermatypic benthic carnivores I,
Hermatypic corals,
Hermatypic gorgonians,
Herring,
Infaunal hard substrate suspension feeders,
Lytechinus,
Macroplanktonic carnivores II-IV,
Macroplanktonic herbivores I,
Molluscivores I,
Omnivorous gastropod,
Parrotfish,
Pilotfish,
Silverside,
Stomatopods I,
Tripneustes,
Zooplanktivorous fish I-II,
\\
\midrule
\textsl{Planktons (7)},
\textsl{Algae (6)},
\textsl{Sponges (2)},
\textsl{Feeders (11)},
Benthic carnivores I,
Carnivorous ophiuroids,
Cleaner crustacea I,
Corallivorous polychaetes,
Detritivorous gastropods I,
Echinoid carnivores I,
Endolithic polychaetes,
Epiphyte grazer I,
Epiphytic autotrophs,
Eucidaris,
Gorgonian carnivores I,
Herbivorous gastropod carnivores I,
Herbivorous gastropods II-IV,
Holothurian detritivores,
Macroplanktonic carnivores I,
Micro-detritivores,
Molluscivores II-III,
Planktonic bacteria,
Polychaete predators (gastropods),
Seagrasses,
Sponge-anemone carnivores I,
Spongivorous nudibranchs\\
\bottomrule
\end{tabular*}
\end{table}

Let us look at the ranking that we obtained from \dtname{Nfl} dataset
using $k = 3$ groups, given in Table~\ref{tab:nfl}. We see from the results
that the obtained ranking is very sensible. 7 of 8 teams in the top group
consists of playoff teams of 2014 season, while the bottom group consists of teams that have a
significant losing record.

Finally, let us look at the rankings obtained \dtname{Reef} dataset. The graph is in fact a DAG
with 19 groups. To reduce the number of groups we rank the guilds into $k = 4$ groups.
The condensed results are given in Table~\ref{tab:reef}. We see that the top group
consists of large fishes and sharks, the second group contains mostly smaller fishes,
a large portion of the third group are crustacea, while the last group contains the bottom
of the food chain, planktons and algae. We should point out that this ranking is done purely
on food web, and not on type of species. For example, \emph{cleaner crustacea} is obviously very
different than plankton. Yet \emph{cleaner crustacea} only eats \emph{planktonic bacteria} and \emph{micro-detritivores}
while being eaten by many other guilds. Consequently, it is ranked in the bottom group.

\section{Concluding remarks}\label{sec:conclusions}

In this paper we studied the problem of discovering a hierarchy in a directed
graph that minimizes agony. We introduced several natural extensions: \emph{(i)}
we demonstrated how to compute the agony for weighted edges, and \emph{(ii)} how
to limit the number of groups in a hierarchy. Both extensions cannot be
handled with current algorithms, hence we provide a new technique by
demonstrating that minimizing agony can be solved by solving a capacitated
circulation problem, a well-known graph problem with a polynomial solution.

We also introduced a fast divide-and-conquer heuristic that produces the rankings
with competitive scores.

We should point out that we can further generalize the setup by allowing each
edge to have its own individual penalty function. As long as the penalty
functions are convex, the construction done in Section~\ref{sec:convex} can
still be used to solve the optimization problem. Moreover, we can further
generalize cardinality constraint by requiring that only a subset of vertices
must have ranks within some range. We can have multiple such constraints.

There are several interesting directions for future work. As pointed out in
Section~\ref{sec:convex} minimizing convex penalty increases the number
of edges when solving the corresponding circulation problem. However, these
edges have very specific structure, and we conjecture that it is possible
to solve the convex case without the additional computational burden.

\bibliographystyle{abbrvnat}
\bibliography{bibliography}

\begin{thebibliography}{20}
\providecommand{\natexlab}[1]{#1}
\providecommand{\url}[1]{\texttt{#1}}
\expandafter\ifx\csname urlstyle\endcsname\relax
  \providecommand{\doi}[1]{doi: #1}\else
  \providecommand{\doi}{doi: \begingroup \urlstyle{rm}\Url}\fi

\bibitem[Aggarwal et~al.(1987)Aggarwal, Klawe, Moran, Shor, and
  Wilber]{aggarwal87smawk}
A.~Aggarwal, M.~Klawe, S.~Moran, P.~Shor, and R.~Wilber.
\newblock Geometric applications of a matrix-searching algorithm.
\newblock \emph{Algorithmica}, 2\penalty0 (1--4):\penalty0 195--208, 1987.

\bibitem[Dinur and Safra(2005)]{dinur:05:cover}
I.~Dinur and S.~Safra.
\newblock On the hardness of approximating vertex cover.
\newblock \emph{Annals of Mathematics}, 162\penalty0 (1):\penalty0 439--485,
  2005.

\bibitem[Edmonds and Karp(1972)]{edmonds:72:flow}
J.~Edmonds and R.~M. Karp.
\newblock Theoretical improvements in algorithmic efficiency for network flow
  problems.
\newblock \emph{Journal of ACM}, 19\penalty0 (2):\penalty0 248--264, 1972.

\bibitem[Elo(1978)]{elo1978rating}
A.~E. Elo.
\newblock \emph{The rating of chessplayers, past and present}.
\newblock Arco Pub., 1978.

\bibitem[Even et~al.(1998)Even, (Seffi)~Naor, Schieber, and
  Sudan]{even:98:feedback}
G.~Even, J.~(Seffi)~Naor, B.~Schieber, and M.~Sudan.
\newblock Approximating minimum feedback sets and multicuts in directed graphs.
\newblock \emph{Algorithmica}, 20\penalty0 (2):\penalty0 151--174, 1998.

\bibitem[Garey and Johnson(1979)]{garey1979computers}
M.~Garey and D.~Johnson.
\newblock \emph{{Computers and intractability: a guide to the theory of
  NP-completeness}}.
\newblock WH Freeman \& Co., 1979.

\bibitem[Gupte et~al.(2011)Gupte, Shankar, Li, Muthukrishnan, and
  Iftode]{gupte:11:agony}
M.~Gupte, P.~Shankar, J.~Li, S.~Muthukrishnan, and L.~Iftode.
\newblock Finding hierarchy in directed online social networks.
\newblock In \emph{Proceedings of the 20th International Conference on World
  Wide Web}, pages 557--566, 2011.

\bibitem[Henderson et~al.(2012)Henderson, Gallagher, Eliassi-Rad, Tong, Basu,
  Akoglu, Koutra, Faloutsos, and Li]{henderson:12:roix}
K.~Henderson, B.~Gallagher, T.~Eliassi-Rad, H.~Tong, S.~Basu, L.~Akoglu,
  D.~Koutra, C.~Faloutsos, and L.~Li.
\newblock Rolx: Structural role extraction \& mining in large graphs.
\newblock In \emph{Proceedings of the 18th ACM SIGKDD International Conference
  on Knowledge Discovery and Data Mining}, pages 1231--1239, 2012.

\bibitem[Jameson et~al.(1999)Jameson, Appleby, and
  Freeman]{jameson:99:behaviour}
K.~A. Jameson, M.~C. Appleby, and L.~C. Freeman.
\newblock Finding an appropriate order for a hierarchy based on probabilistic
  dominance.
\newblock \emph{Animal Behaviour}, 57:\penalty0 991--998, 1999.

\bibitem[Leskovec and Krevl(2015)]{snapnets}
J.~Leskovec and A.~Krevl.
\newblock {SNAP Datasets}: {Stanford} large network dataset collection.
\newblock \url{http://snap.stanford.edu/data}, Jan. 2015.

\bibitem[Macchia et~al.(2013)Macchia, Bonchi, Gullo, and
  Chiarandini]{DBLP:conf/icdm/MacchiaBGC13}
L.~Macchia, F.~Bonchi, F.~Gullo, and L.~Chiarandini.
\newblock Mining summaries of propagations.
\newblock In \emph{IEEE 13th International Conference on Data Mining}, pages
  498--507, 2013.

\bibitem[Maiya and Berger-Wolf(2009)]{DBLP:conf/cse/MaiyaB09}
A.~S. Maiya and T.~Y. Berger-Wolf.
\newblock Inferring the maximum likelihood hierarchy in social networks.
\newblock In \emph{Proceedings IEEE CSE'09, 12th IEEE International Conference
  on Computational Science and Engineering}, pages 245--250, 2009.

\bibitem[McCallum et~al.(2007)McCallum, Wang, and
  Corrada-Emmanuel]{mccallum:07:roles}
A.~McCallum, X.~Wang, and A.~Corrada-Emmanuel.
\newblock Topic and role discovery in social networks with experiments on enron
  and academic email.
\newblock \emph{J. Artif. Int. Res.}, 30\penalty0 (1):\penalty0 249--272, 2007.

\bibitem[Orlin(1993)]{orlin:93:flow}
J.~B. Orlin.
\newblock A faster strongly polynomial minimum cost flow algorithm.
\newblock \emph{Operations Research}, 41\penalty0 (2), 1993.

\bibitem[Papadimitriou and Steiglitz(1982)]{papadimitriou:82:opt}
C.~H. Papadimitriou and K.~Steiglitz.
\newblock \emph{Combinatorial Optimization: Algorithms and Complexity}.
\newblock Prentice-Hall, Inc., 1982.

\bibitem[Ramalingam and Reps(1996)]{ramalingam:96:dynamic}
G.~Ramalingam and T.~Reps.
\newblock On the computational complexity of dynamic graph problems.
\newblock \emph{Theoretical Computer Science}, 158:\penalty0 233--277, 1996.

\bibitem[Roopnarine and Hertog(2013)]{roopnarine:13:reef}
P.~D. Roopnarine and R.~Hertog.
\newblock Detailed food web networks of three {G}reater {A}ntillean {C}oral
  {R}eef systems: {T}he {C}ayman {I}slands, {C}uba, and {J}amaica.
\newblock \emph{Dataset Papers in Ecology}, 2013, 2013.

\bibitem[Roopnarine~PD(2012)]{roopnarine:12:datareef}
H.~R. Roopnarine~PD.
\newblock Data from: Detailed food web networks of three {G}reater {A}ntillean
  {C}oral {R}eef systems: {T}he {C}ayman {I}slands, {C}uba, and {J}amaica,
  2012.
\newblock Dryad Digital Repository,
  \url{http://dx.doi.org/10.5061/dryad.c213h}.

\bibitem[Tatti(2014)]{tatti:14:agony}
N.~Tatti.
\newblock Faster way to agony --- discovering hierarchies in directed graphs.
\newblock In \emph{Prooceding of European Conference of Machine Learning and
  Knowledge Discovery in Databases, {ECML} {PKDD} 2014}, pages 163--178, 2014.

\bibitem[Tatti(2015)]{tatti:15:agony}
N.~Tatti.
\newblock Hierarchies in directed networks.
\newblock In \emph{Proceedings of the 15th IEEE International Conference on
  Data Mining (ICDM 2015)}, 2015.

\end{thebibliography}

\appendix
\section{Proof of Proposition~\lowercase{\ref{prop:concave}}}

\begin{proof}
To prove the completeness we
will provide reduction from \mcutprb~\cite{garey1979computers}. An instance of \mcutprb
consists of an undirected graph, and we are asked to partition vertices
into two sets such that the number of cross edges is larger or equal than
the given threshold $\sigma$.

Note that the conditions of the proposition guarantee that $\pen{0} > 0$.

Assume that we are given an instance of \mcutprb, that is, an undirected graph
$G = (V, E)$ and a threshold $\sigma$. Let $m = \abs{E}$.
Define a weighted directed graph $H = (W, F, w)$ as follows. Add
$V$ to $W$.  For each edge $(u, v) \in E$, add a path with $t$ intermediate
vertices from $u$ to $v$, the length of the path is $t + 2$. Add also a path in
reversed direction, from $v$ to $u$. Set edge weights to be 1.
Add 4 special vertices $\alpha_1, \ldots, \alpha_4$. Add edges $(\alpha_{i + 1}, \alpha_i)$, for $i = 1, \ldots, 3$ 
with a weight of 
\[
	C = 2B\frac{\pen{0}}{\pen{2} - \pen{1}}, \quad\text{where}\quad  B = 2(t + 1)m\quad.
\]
Add edges $(\alpha_i, \alpha_{i + 1})$, for $i = 1, \ldots, 3$
with a weight of 
\[
	D = 4C \pen{1} / \pen{0} + B\quad.
\]
Add edges $(\alpha_0, v)$ and $(v, \alpha_4)$, for each $v \in V$,  with a weight of $D$. 

Let $r$ be the optimal ranking for $H$.
We can safely assume that $r(\alpha_1) = 0$.
We claim that $r(\alpha_i) = i - 1$, and $r(v) = 1, 2$ for each $v \in V$.
To see this, consider a ranking $r'$ such that $r'(\alpha_i) = i - 1$ and the rank for the remaining vertices is 2.
The score of this rank is 
\[
	\score{H, r'} = 3C\pen{1} + 2(t + 1)m\pen{0} = 3C\pen{1} + B\pen{0}\quad.
\]
Let $(u, v) \in F$ with the weight of $D$. If $r(u) \geq r(v)$, then the
score of $r$ is at least $D \pen{0} = 4C\pen{1} + B \pen{0}$ which is more than $\score{H, r'}$.
Hence, $r(u) < r(v)$. 
Let $(u, v) \in F$ with the weight of $C$. Note that  $r(u) \geq r(v) + 1$.
Assume that $r(u) \geq r(v) + 2$.  Then the score is at least
\[
	3C \pen{1} + C(\pen{2} - \pen{1}) = 3C\pen{1} + 2B\pen{0},
\]
which is a contradiction. This guarantees that
$r(\alpha_i) = i - 1$, and $r(v) = 1, 2$ for each $v \in V$.

Consider $(u, v) \in E$ and let $u = x_0, \ldots, x_{t + 1} = v$ be the corresponding path in $H$.
Let $d_i = r(x_i) - r(x_{i + 1})$ and set $\ell = t + r(u) - r(v)$.
Let $P = \sum \pen{d_i}$ be the penalty contributed by this path. Note that
$P \leq \pen{\ell}$, a penalty that we achieve by setting $r(x_i) = r(x_{i - 1}) + 1$ for $i = 1, \ldots, t$.
This implies that $d_i \leq \ell$.
The condition of the proposition now implies
\[
\begin{split}
	P & = \sum_{i = 0}^t \pen{d_i} = \sum_{i = 0, d_i \geq 0}^t \pen{d_i} \\
	& \geq \sum_{i = 0}^t \frac{\max(d_i + 1, 0)}{\ell + 1}\pen{\ell} \\
	& \geq \sum_{i = 0}^t \frac{d_i + 1}{\ell + 1}\pen{\ell}\\
	& = \frac{\pen{\ell}}{\ell + 1}\sum_{i = 0}^t 1 + r(x_i) - r(x_{i + 1})\\
	& = \frac{\pen{\ell}}{\ell + 1}(t + 1 + r(u) - r(v)) = \pen{\ell}\quad.\\
\end{split}
\]
This guarantees that $P = \pen{\ell}$.

Partition edges $E$ into two groups,
\[
	X = \set{(u, v) \in E \in r(u) = r(v)}
\]
and
\[
Y = \set{(u, v) \in E \in r(u) \neq r(v)}\quad.
\]
Let $\Delta = \pen{t - 1} + \pen{t + 1} - 2\pen{t}$.
Note that concavity implies that $\Delta < 0$.
Then
\[
\begin{split}
	\score{H, r} & = 3C + \abs{X}2\pen{t} + \abs{Y}(\pen{t - 1} + \pen{t + 1}) \\
	&= 3C + m2\pen{t} + \abs{Y}(\pen{t - 1} + \pen{t + 1} - 2\pen{t}) \\
	&= 3C + m2\pen{t} + \abs{Y}\Delta\quad.
\end{split}
\]
The first two terms are constant.
Consequently, $\score{H, r}$ is optimal if and only if $\abs{Y}$, the number of cross-edges
is maximal.

Given a threshold $\sigma$, define $\sigma' = 3C + m2\pen{t} + \Delta\sigma$.
Then $\score{H, r} \leq \sigma'$ if and only if there is a cut of $G$ with at
least $\sigma$ cross-edges, which completes the reduction.\qed
\end{proof}

\section{Proof of Proposition~\lowercase{\ref{prop:unique}}~and~\ref{prop:canon}}

We will prove both Propositions~\ref{prop:unique} and \ref{prop:canon} with the same proof.

\begin{proof}
Let $r^*$ be the ranking returned by \canon, and let $\pi^* = \pi - d$ be the
corresponding dual. Lemma~\ref{lem:dualupdate} states that $\pi^*$ satisfies
the slackness conditions, so it remains a solution to Problem~\ref{prb:dualuncirc}.
This implies also that $r^*$ is an optimal ranking.

To complete the proof we need to show that for any $r'$, we have $r^* \preceq r'$.
Note that this also proves that $r^*$ is a unique ranking having such property.

Let $r'$ be any optimal ranking, and let $\pi'$ be the corresponding dual.
We can assume that $\pi'(\alpha) = \pi^*(\alpha) = 0$.
To prove the result we need to show that $\pi'(v) \geq \pi^*(v)$.
We will prove this by induction over the shortest path tree $T$ from $\alpha$.
This certainly holds for $\alpha$.

Let $u$ be a vertex and let $v$ be its parent in $T$,
and let $e \in E(T)$ be the connecting edge.
Note that, by definition, $-t(e) = \pi^*(u) - \pi^*(v)$.
By the induction assumption, $\pi^*(v) \leq \pi'(v)$.

If $e$ is forward, then due to Eq.~\ref{eq:dualcond}
\[
	\pi'(u) - \pi'(v) \geq -t(e) = \pi^*(u) - \pi^*(v) \geq \pi^*(u) - \pi'(v)\quad.
\]
If $e$ is backward, then $f(v, u) > 0$. and Eq.~\ref{eq:slack} implies 
\[
	\pi'(u) - \pi'(v) = -t(e) = \pi^*(u) - \pi^*(v) \geq \pi^*(u) - \pi'(v)\quad.
\]
This completes the induction step, and the proves the proposition.
\qed
\end{proof}

\end{document}